\documentclass[a4paper,11pt,showpacs,amsmath,amssymb,floatfix]{article}
\usepackage[utf8]{inputenc}
\usepackage{amsmath}
\usepackage{mathrsfs}
\usepackage{amsfonts}
 \usepackage{cancel}
\usepackage{graphicx}
\usepackage{amsthm}
\usepackage{enumerate}
\usepackage{amssymb}
\usepackage{multirow}
\usepackage{cite}
\usepackage{bm}
\usepackage{makecell}
\usepackage{float}
\usepackage{hyperref}
 \usepackage{setspace}
\usepackage{xcolor}
 	\definecolor{carblue}{rgb}{0.2, 0.30, 5}

\usepackage[colorinlistoftodos]{todonotes}

\usepackage{tikz}
\usetikzlibrary{patterns,calc}
\usetikzlibrary{patterns.meta}
\usetikzlibrary{arrows.meta,calc,decorations.pathreplacing,decorations.markings}
\usepackage{subcaption}

\usepackage{mathtools}
\mathtoolsset{showonlyrefs=true}

\newcommand{\lr}[1]{\left( #1 \right)}
\newcommand{\lrbrace}[1]{\left\lbrace #1 \right\rbrace}

\newcommand{\Om}{\Omega}

\newcommand{\scri}{\mathscr{I}}

\newcommand{\skwend}[1]{\mathrm{SkewEnd}(#1)}

\newcommand{\zeroset}[1]{\mathcal{Z}(#1)}
\newcommand{\closure}[1]{\overline{#1}}

\newcommand{\man}[1]{M^{#1}}
\newcommand{\phman}[1]{\tilde{M}^{#1}}

\newcommand{\conf}[1]{\mathrm{Conf}\left({#1}\right)}

\def\R{\mathbb{R}}
\def\S{\mathbb{S}}

\def\a{a}
\def\b{b}

\def\aa{\sigma}
\def\bb{\tau}
\def\cc{\sigma}

\def\dvg{\mathrm{div}}

\def\skwxi{w}
\def\sgrav{\mathcal{G}}

\def\Ric{\mathrm{Ric}}
\def\tq{~\slash~}

\newtheorem{theorem}{Theorem}[section]
\newtheorem{proposition}{Proposition}[section]
\newtheorem{corollary}{Corollary}[section]
\newtheorem{lemma}{Lemma}[section]
\newtheorem{remark}{Remark}[section]
\newtheorem{definition}{Definition}[section]

 \title{ Asymptotics of Killing horizons: A Characterization in de Sitter Spacetime}

\author{Carlos Pe\'on-Nieto\\
  Departamento de Matemática Aplicada a las TIC, \\
Universidad Politécnica de Madrid\\
{\small Calle Alan Turing s/n 28031, Madrid, Spain}}

\begin{document}

\maketitle

\begin{abstract}
We prove that $(n+1)$-dimensional de Sitter spacetime is locally the unique $\Lambda > 0$ vacuum spacetime whose Killing horizons intersect the conformal boundary $\mathscr{I}$. The intersection occurs at the isolated essential zeros of the extension of the Killing vector field to $\mathscr{I}$. We then reconstruct all de Sitter Killing vector fields from their boundary value and obtain a characterization of those admitting a Killing horizon in terms of a conformal invariant $\sigma$: $\sigma < 0$ characterizes non-degenerate bifurcate horizons, whereas $\sigma = 0$ yields degenerate non-bifurcate horizons. Finally, we derive an intrinsic, conformally invariant formula for the surface gravity of the horizons in de Sitter, $\kappa^2 = |\sigma|$.
\end{abstract}

 \section{Introduction}

 Killing horizons are an extremely non-generic feature of general relativistic spacetimes. This is true to such an extent that the presence of a Killing horizon within a $(\Lambda = 0)$-vacuum, asymptotically flat and stationary (or static) spacetime, one recovers a member of the Kerr family. This was established in  the celebrated uniqueness theorems, in the static case, by Israel, and in the stationary case, under additional technical assumptions, by Carter, Hawking, and Robinson (see, e.g., \cite{chruscielcostaheusler} for a comprehensive review with references). Moreover, removing the requirement of asymptotic flatness in the static case gives rise to only a few additional families \cite{reiris1,reiris2}.

Introducing a positive cosmological constant ($\Lambda > 0$) significantly complicates the analysis of Killing horizons. Among other reasons, this is mainly because the global structure of these spacetimes differs drastically from the $\Lambda = 0$ case. Indeed, in this context there exists an additional class of Killing horizons: the cosmological horizons. Physically, these horizons are explained by the repulsive effect of $\Lambda$, which bounds the causally accessible region for any given observer. Interestingly, in the ($\Lambda>0$)-vacuum case, cosmological horizons are typically accompanied by black hole horizons, for example, in the Schwarzschild-de Sitter and Kerr-de Sitter families. As a consequence, one expects cosmological horizons to be an equally non-generic feature of spacetimes with a positive cosmological constant. For example, in recent works that explore uniqueness of Kerr-de Sitter, by Hintz \cite{hintz18}, Hintz-Vasy \cite{hintzvasy18} and Fang \cite{fang}, the cosmological horizon and black hole horizon play an analogue role.

Although cosmological horizons have traditionally been studied through their global spacetime geometry, much less is known about their characterization directly from the asymptotic geometry at conformal infinity, $\scri$. A step in that direction is taken in \cite{kaminskisymmetries}. There it was shown that four-dimensional de Sitter spacetime is locally characterized as the unique spacetime where cosmological horizons intersect the conformal boundary $\scri$.
More specifically, the generator of the horizons extends to $\scri$ as a tangential conformal Killing vector field thereof and the intersection  happens at its \emph{isolated essential zeros}, which is a subset of its isolated zeros defined by the local conformal properties of the vector field (cf. Definition \ref{defessential}). In this paper (cf. Section \ref{secnecessary}), we extend this result to arbitrary dimensions.

While this provides a local characterization of de Sitter as the unique spacetime with a Killing horizon intersecting $\scri$, the asymptotic characterization and classification of its Killing horizons has yet to be completed. Namely, in de Sitter spacetime, one must determine which  conformal Killing vector fields of $\scri$ with essential zeros correspond to Killing vector fields admitting a Killing horizon. We solve this problem in Section \ref{secsufficient} by directly reconstructing the Killing vector fields from their boundary value at $\scri$ and then evaluating the conditions under which they admit a Killing horizon. This completes the asymptotic characterization and classification of Killing horizons in de Sitter spacetime. Additionally, in Section \ref{secgrav} we derive a formula to compute the surface gravity of these  horizons at its intersection with $\scri$, which is given by a set of conformal invariants derived from the  conformal Killing vector fields.

The present paper focuses exclusively on de Sitter spacetime. However, these results provide an essential building block toward a general framework for the asymptotic analysis of Killing horizons. Roughly speaking, de Sitter spacetime serves as a reference \emph{background} in the asymptotic region for spacetimes with conformally flat $\scri$. More precisely, on de Sitter's conformal boundary, one can define more general asymptotic data and analyze its local evolution within the background de Sitter's topology. With respect to the background topology, one may then take limits beyond the domain of the more general spacetime to study its behavior near the limit. In particular, we aim to understand whether the presence of cosmological Killing horizons (which in many relevant spacetimes coincides with the Cauchy horizons of the domain of dependence of $\scri$) manifests as a divergent behavior of the asymptotic data. This analysis is left for a future publication, although we shall comment further on these ideas in Section \ref{secdiscussion}.

For the reader's convenience, in Section~\ref{secresults} we include an overview of the main results of this paper. Section~\ref{sectools} describes the mathematical framework and background tools required throughout. As previously mentioned, Section~\ref{secnecessary} presents a detailed (local) characterization of de Sitter spacetime as the unique spacetime whose Killing horizons intersect $\scri$ in a non-empty set. Section~\ref{secsufficient} provides a characterization and classification of these horizons at $\scri$, while Section~\ref{secgrav} derives a formula for computing their surface gravity. Future applications of these results are discussed in Section~\ref{secdiscussion}. Finally, Appendix~\ref{appcanonical} presents the details of a classification of conformal Killing vector fields modulo conformal isometries, required for Section \ref{secsufficient}, and Appendix~\ref{appdF} collects several elementary but lengthy computations.

 \section{Main results}\label{secresults}

 The first main result of this paper (Theorem \ref{theoremnecessary}) addresses the question of which $(\Lambda>0)$-vacuum spacetimes with a Killing vector field (for brevity, Killing vector) $X$ admitting a Killing horizon $\mathcal{H}_X$ satisfy that its closure, denoted $\closure{\mathcal{H}_X}$, intersects $\scri$. As explained in the introduction, this extends the four-dimensional characterization obtained in \cite{kaminskisymmetries} to arbitrary dimensions.

The initial observation is that $X$ always extends smoothly to any connected component of null infinity (say, $\scri^+$) as a tangential conformal Killing vector field (CKV) $\xi \coloneqq X|_{\scri^+}$ (see e.g. \cite{marspeondata21}). Since $X$ is null and tangent to $\mathcal{H}_X$, the closure of the horizon can intersect $\scri^+$ only at the zeros of $\xi$. A more subtle argument (Proposition \ref{propisolatedzeros}) shows that these zeros must, in fact, be \emph{essential isolated zeros} (Definition \ref{defessential}), that is, zeros where $\xi$ cannot be made Killing under any local conformal rescaling of the metric. Otherwise, $X$ would necessarily be spacelike in a neighbourhood of $\scri^+$, preventing from the existence of a Killing horizon arbitrarily close to $\scri^+$.

The final ingredient is the Obata-Ferrand theorem \cite{obata,obata0,ferrand,schoen}, in its infinitesimal version due to Frances \cite{frances} and Belgun {\it et al} \cite{belgunmoroianuornea}. These imply that a Riemannian manifold admitting a conformal Killing vector with essential isolated zeros is locally conformally flat near these zeros. Integrating the Killing initial data equation (Theorem \ref{theoKIDanal}) then shows that the asymptotic data coincide with that of de Sitter in a neighbourhood of the essential zeros. If $\xi$ is complete, its integral curves cover the whole conformal boundary, so that the Killing initial data equation can be globally integrated and the conclusion is global.

The converse statement is straightforward from some of the arguments above together with basic properties of de Sitter spacetime (see Remark \ref{remarksuff}). This leads to the following characterization Theorem.

\begin{theorem}\label{theo1}
  Let $(\tilde M^{n+1},\tilde g)$ be a $(\Lambda>0)$-vacuum spacetime admitting a conformal extension and a Killing vector field $X$, with $\xi = X\mid_{\scri^+}$. Then $X$ has a Killing horizon $\mathcal{H}_X$ whose closure $\closure{\mathcal{H}_X}$ satisfies $\closure{\mathcal{H}_X}\cap \scri^+ \neq \emptyset$ only if $\xi$ has essential isolated zeros at $\closure{\mathcal{H}_X}\cap \scri^+$ and  $(\tilde M^{n+1},\tilde g)$ is isometric to de Sitter in a neighbourhood of $\closure{\mathcal{H}_X}\cap \scri^+$.  Moreover if $\xi$ is complete, the result applies in a neighbourhood of $\scri^+$.

  Conversely, all Killing vectors $X$ of de Sitter spacetime admitting a Killing horizon $\mathcal{H}_X$ satisfy that $\closure{\mathcal{H}_X}\cap \scri^+ \neq \emptyset$ and it coincides with the essential isolated zeros of $\xi = X\mid_{\scri^+}$.
\end{theorem}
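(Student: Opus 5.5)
The proof has two directions, and the forward direction follows the chain already outlined in the preceding discussion.

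\emph{Extension and zeros.} Work in a conformal extension $(M,g=\Omega^2\tilde g)$ near $\scri^+$. By the standard result cited above, $X$ extends smoothly to $\scri^+$, and its restriction $\xi$ is tangential and is a conformal Killing vector of the Riemannian metric $\gamma=g|_{\scri^+}$. Take a point $p\in\closure{\mathcal{H}_X}\cap\scri^+$. Since $X$ is null on $\mathcal{H}_X$, continuity gives $g(X,X)(p)=\Omega^2\tilde g(X,X)\to 0$, and hence $g(X,X)(p)=0$. A vector tangent to the spacelike hypersurface $\scri^+$ with zero $g$-norm must vanish, so $\xi(p)=0$.

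\emph{Zeros are isolated and essential.} This is the content of Proposition~\ref{propisolatedzeros}, and it is the main obstacle if one proves it rather than quotes it. The strategy is contrapositive. Suppose $\xi$ is locally Killing near $p$ for some conformal representative $\omega^2\gamma$. Choose the conformal gauge of $g$ so that $\scri^+$ carries that representative. In this gauge $\mathcal{L}_X\Omega=0$ to leading order, and we expand $\tilde g(X,X)=\Omega^{-2}g(X,X)$ in powers of $\Omega$ near $\scri^+$. The transverse part of $X$ vanishes at $\scri^+$, with coefficient proportional to $\dvg\,\xi=0$. Hence $g(X,X)=|\xi|^2_\gamma+O(\Omega^2)$, and the $O(\Omega^2)$ correction has a sign controlled by the Killing data. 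This correction is positive near a zero, because $\xi$ generates a rotation in this gauge. Therefore $X$ is spacelike on a full neighbourhood of $p$ in $M$ minus $\scri^+$, which contradicts $p$ lying in the closure of a null hypersurface on which $X$ is null. The same argument excludes non-isolated zeros, since at those $\xi$ is locally inessential.

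\emph{Rigidity.} Given an essential isolated zero, the infinitesimal Obata--Ferrand theorem of Frances and Belgun \emph{et al.} shows that $\gamma$ is conformally flat on a neighbourhood $U$ of $p$. The asymptotic data of a $\Lambda>0$ vacuum spacetime is the pair $(\gamma,D)$, where $D$ is the transverse-traceless electric part of the rescaled Weyl tensor (TT tensor). This pair must satisfy the Killing initial data equation of Theorem~\ref{theoKIDanal}, namely $\mathcal{L}_\xi D+\frac{n-2}{n}\,\dvg(\xi)\,D=0$. Pass to a flat gauge in which $p$ is the origin and $\xi$ has linear part $\lambda\,x\cdot\partial_x+A x$ with $\lambda\neq 0$, which is possible because the zero is essential. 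The equation then says that $D$ is homogeneous of a fixed negative degree along the flow. Smoothness of $D$ at the origin then forces $D\equiv0$ on $U$, by integrating along integral curves flowing into or out of $p$. Conformally flat data with $D=0$ are de Sitter data, so the uniqueness of the Cauchy problem at $\scri$ (Friedrich / Fefferman--Graham in higher dimensions) gives local isometry to de Sitter near $p$. If $\xi$ is complete, its flow transports the vanishing of $D$ along every integral curve. These curves accumulate at the essential zeros, so $D\equiv0$ and $\gamma$ is locally conformally flat on all of $\scri^+$, which yields the statement in a neighbourhood of $\scri^+$.

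\emph{Converse.} In de Sitter, every Killing vector with a Killing horizon is, up to isometry, a boost or a null rotation. Its horizon is a null cone, or a pair of null cones, emanating from points of $\scri^+$, so $\closure{\mathcal{H}_X}\cap\scri^+\neq\emptyset$. By the forward direction these points are essential isolated zeros of $\xi$. Conversely, every essential zero $q$ of $\xi$ lies on that horizon: the past light cone of $q$ is $X$-invariant and null, and $X$ is tangent and null on it, so it is a Killing horizon through $q$. This last step is checked explicitly through the canonical forms of Appendix~\ref{appcanonical}.
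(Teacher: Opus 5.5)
\textbf{The rigidity step has a genuine gap.} Your overall chain is the paper's: zeros of $\xi$ at the intersection, essentiality via Proposition~\ref{propisolatedzeros}, local conformal flatness from Frances and Belgun--Moroianu--Ornea, vanishing of $g_{(n)}$ through the KID equation, then Corollary~\ref{coroldS}. The problem is your claim that, because $p$ is essential, one can choose a flat gauge in which the linear part of $\xi$ at $p$ has a non-zero dilation coefficient. This is false for parabolic zeros.

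Take the canonical field with $\sigma=0$ (and $\tau=0$ for even $n$), i.e.\ the one conformal to a translation. Its only zero is at the origin, and all its components are homogeneous quadratic there, so its $1$-jet at $p$ vanishes. Adding the commuting rotations $\mu_i\eta_i$ gives a zero whose linearization is a pure rotation. These zeros are isolated and essential: orbits of the flow converge to $p$, which no isometric flow fixing $p$ allows. Yet no gauge produces a dilation. At a zero, the endomorphism $D_A\xi^B=\partial_A\xi^B$ does not depend on the conformal representative, and a conformal change of chart only conjugates it by an element of $CO(n)$. Hence $\dvg_\gamma\xi(p)=0$ in every gauge. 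Your homogeneity argument is fine at homothety-type zeros, where it gives degree $-n$, but it does not reach the parabolic case. That case is exactly the degenerate, non-bifurcate horizon of Theorem~\ref{theosuff}\,b), which the statement must cover.

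\textbf{What fixes it} is an argument that does not go through the linearization. The conformal Killing equation gives $\dvg_\gamma\xi=n\,\xi(|\xi|_\gamma)/|\xi|_\gamma$ where $\xi\neq0$. Using this, the paper rewrites \eqref{eqKID} as the $\xi$-invariance of a weighted tensor $|\xi|_\gamma^{k}g_{(n)}$ with $k>0$. This tensor vanishes at $p$, and in the conformally flat neighbourhood every integral curve tends to $p$. So it vanishes identically and $g_{(n)}=0$, uniformly for hyperbolic and parabolic zeros.

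Alternatively, for a parabolic zero you can send $p$ to infinity, where $\xi$ becomes a Euclidean translation plus a commuting rotation. Then \eqref{eqKID} says $g_{(n)}$ is invariant, so $|g_{(n)}|_\delta$ is constant along orbits escaping to infinity. In an inverted chart $u$ regular at $p$, the pull-back together with the weight in \eqref{confdiffeoclass} (here $\omega=|u|^{-2}$) multiplies this norm by $|u|^{-2n}$. Boundedness at $p$ then forces $g_{(n)}=0$.

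\textbf{Smaller points.}
\begin{enumerate}
\item Your sketch of the essentiality step controls $X(\Omega)$ only to leading order and then asserts an unproved sign for an $O(\Omega^2)$ correction. The paper's argument is exact: in the geodesic gauge, the $\Omega\Omega$-component of the conformal Killing equation gives $\partial_\Omega\phi=0$. So in the representative where $\xi$ is Killing, $X(\Omega)\equiv0$, and $X$ is tangent to the spacelike level sets of $\Omega$.
\item In the complete case, global conformal flatness must come from Theorem~\ref{theoisolatedzeros} before the transport argument, not from $g_{(n)}\equiv0$.
\item In the converse, ``$X$ is tangent and null on the light cone of an essential zero'' is not a general principle. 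For a homothety plus a commuting rotation the zeros are essential, yet $X$ is spacelike on most of their cones. So that step rests entirely on the boost/null-rotation classification you assume, which is the content of Theorem~\ref{theosuff}.
\end{enumerate}
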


Note that Theorem \ref{theo1} also entails a partial characterization of de Sitter  Killing horizons at $\scri^+$. Namely, a Killing vector of de Sitter $X$ admits a Killing horizon only if $\xi = X\mid_{\scri^+}$ has essential isolated zeros, which moreover coincides with $\closure{\mathcal{H}_X}\cap \scri^+$. In order to complete the characterization, it remains to decide which essential zeros of CKVs of $\scri^+$ do correspond with the intersection $\closure{\mathcal{H}_X}\cap \scri^+$ and which ones  do not. We solve this problem in Section \ref{secsufficient}.

The approach of Section \ref{secsufficient} is constructive. Namely, we reconstruct the Killing vector field from its boundary value, i.e. the CKV $\xi$, and determine whether it admits a Killing horizon. The conformal invariance of the asymptotic initial value problem, together with the maximal symmetry of de Sitter, allows us to avoid a case-by-case analysis and solve the problem at once.

First, the conformal invariance allows us to choose the most convenient conformal extension of de Sitter.  On the other hand, in de Sitter, the conformal group of (conformally flat) $\scri^+$ is in one-to-one correspondence with group of orthochronous isometries of the bulk spacetime. Thus, CKVs of $\scri^+$ related by a conformal transformation (that is, belonging to the same conformal class) extend to Killing vectors related by an isometry. Hence, this gives us freedom to equivalently select any CKV of any given conformal class.

Using the results of \cite{marspeon21.1,marspeon21} (reviewed in Appendix \ref{appcanonical}), we can handle all the conformal classes of CKVs at once by means of its canonical representative $\xi_{\mathrm{can}}$. This is a CKV depending only on a number of conformal invariants and representing the entire space of conformal classes. Extending $\xi_{\mathrm{can}}$ to the corresponding Killing vector $X_{\mathrm{can}}$, which is a canonical representative of its isometry class, we carry out the analysis in a unified way.

The result is, as expected, that the CKV must have isolated essential zeros and also a collection of \emph{rotation parameters} (see Appendix \ref{appcanonical}) identifying its conformal class must vanish. The isolated zeros coincide with the intersection points of the horizons with $\scri^+$. Since $\scri^+$ is locally conformally flat, $\xi_{\mathrm{can}}$ has either one or two isolated zeros. The case of two isolated zeros corresponds to a bifurcate horizon, while a single isolated zero gives rise to a non-bifurcate, and hence degenerate, horizon.

Interestingly, in the bifurcate case, the ``distance'' between the two intersection points, as measured with respect to a fixed conformal representative of $\scri^+$,  is determined by one of the conformal invariants  defining $\xi_{\mathrm{can}}$, denoted by $\sigma$. As $\sigma\to0$, the two horizons approach each other and merge into a single horizon in the limit, recovering the degenerate non-bifurcate case.

This inspires the analysis carried out in Section \ref{secgrav}, where we show that this parameter determines the surface gravity. More precisely, we derive a formula expressing the surface gravity entirely in terms of intrinsic quantities associated with $\xi$.

\bigskip

\begin{figure}[htbp]
\centerline{%
        \begin{minipage}{1.1\textwidth}
\begin{subfigure}[b]{0.32\textwidth}
        \centering
\begin{tikzpicture}[
    >=Stealth,
    thick,
    scale=0.7
]

\fill[cyan!5] (1.2, 0.8)
    .. controls (3.5, 1.1) and (6.0, 1.0) .. (7.8, 0.6)   
    .. controls (7.4, -0.1) and (7.0, -0.6) .. (6.6, -1.1) 
    .. controls (4.5, -0.9) and (2.0, -1.2) .. (0.6, -0.9) 
    .. controls (0.8, -0.3) and (1.0, 0.3) .. (1.2, 0.8);  

\draw[cyan!60!black] (1.2, 0.8) .. controls (3.5, 1.1) and (6.0, 1.0) .. (7.8, 0.6);
\draw[cyan!60!black] (7.8, 0.6) .. controls (7.4, -0.1) and (7.0, -0.6) .. (6.6, -1.1);
\draw[cyan!60!black] (6.6, -1.1) .. controls (4.5, -0.9) and (2.0, -1.2) .. (0.6, -0.9);
\draw[cyan!60!black] (0.6, -0.9) .. controls (0.8, -0.3) and (1.0, 0.3) .. (1.2, 0.8);

\node[cyan!60!black] at (0.2, 0.4) {$\scri^+$};

\coordinate (L) at (2.4, -0.1);
\coordinate (R) at (5.6, -0.05);

\fill[black] (L) circle (2.2pt) node[above, shift={(0., 0.1)}] {\footnotesize $p_1$};
\fill[black] (R) circle (2.2pt) node[above, shift={(0., 0.15)}] {\footnotesize $p_2$};


\draw[->, blue!70!black, thin, shorten >= 5pt] (L) to[out=35, in=145] (R);
\draw[->, blue!70!black, thin, shorten >= 5pt] (L) to[out=0, in=180] (R);
\draw[->, blue!70!black, thin, shorten >= 5pt] (L) to[out=-35, in=-145] (R);

\draw[->, blue!70!black, thin, shorten >= 6pt] (L) to[out=60, in=120] (R);
\draw[->, blue!70!black, thin, shorten >= 6pt] (L) to[out=-60, in=-120] (R);

\draw[->, blue!70!black, thin] (L) .. controls (2.0, 0.6)  and (1.6, 0.8)  .. (1.4, 0.85);
\draw[->, blue!70!black, thin] (L) .. controls (1.5, 0.3)  and (1.1, 0.3)  .. (0.9, 0.25);
\draw[->, blue!70!black, thin] (L) .. controls (1.4, -0.4) and (1.0, -0.5) .. (0.7, -0.5);
\draw[->, blue!70!black, thin] (L) .. controls (1.9, -0.7) and (1.6, -0.9) .. (1.4, -1.0);

\draw[->, blue!70!black, thin, shorten >= 7pt] (6.6, 0.8)   .. controls (6.4, 0.7)  and (6.0, 0.5)  .. (R);
\draw[->, blue!70!black, thin, shorten >= 7pt] (7.3, 0.1)   .. controls (6.7, 0.2)  and (6.2, 0.1)  .. (R);
\draw[->, blue!70!black, thin, shorten >= 7pt] (7.1, -0.5)  .. controls (6.6, -0.4) and (6.1, -0.2) .. (R);
\draw[->, blue!70!black, thin, shorten >= 7pt] (6.4, -0.95) .. controls (6.1, -0.8) and (5.8, -0.5) .. (R);

\node[blue!80!black] at (4.0, 0.5) {\footnotesize $\xi$};

%


\draw[red!70!black, dashed] (3.6, -3.0) arc (0:180:1.2cm and 0.22cm);

\draw[red!70!black] (L) -- (1.2, -3.0);
\draw[red!70!black] (L) -- (3.6, -3.0);

\draw[red!70!black] (1.2, -3.0) arc (180:360:1.2cm and 0.22cm);

\draw[->, gray] (1.8, -1.2) -- ++(-0.35,-0.8) node[right, black] {$~X$};


\draw[red!70!black, dashed] (6.8, -3.0) arc (0:180:1.2cm and 0.22cm);

\draw[red!70!black] (R) -- (4.4, -3.0);
\draw[red!70!black] (R) -- (6.8, -3.0);

\draw[red!70!black] (4.4, -3.0) arc (180:360:1.2cm and 0.22cm);

\draw[->, gray] (6.2, -1.2) -- ++(0.35,-0.8) node[right, black] {$X$};

\draw[->, black, dashed] (-0.1, -1.4) -- (-0.1, -2.6) node[below] {$\Omega>0$ };

\end{tikzpicture}
\end{subfigure}
\hfill\hspace{10pt}
\begin{subfigure}[b]{0.32\textwidth}
        \centering
\begin{tikzpicture}[>=Stealth, thick, scale=0.7]

\fill[cyan!5] (1.2, 0.8)
    .. controls (3.5, 1.1) and (6.0, 1.0) .. (7.8, 0.6)
    .. controls (7.4, -0.1) and (7.0, -0.6) .. (6.6, -1.1)
    .. controls (4.5, -0.9) and (2.0, -1.2) .. (0.6, -0.9)
    .. controls (0.8, -0.3) and (1.0, 0.3) .. (1.2, 0.8);

\draw[cyan!60!black] (1.2, 0.8) .. controls (3.5, 1.1) and (6.0, 1.0) .. (7.8, 0.6);
\draw[cyan!60!black] (7.8, 0.6) .. controls (7.4, -0.1) and (7.0, -0.6) .. (6.6, -1.1);
\draw[cyan!60!black] (6.6, -1.1) .. controls (4.5, -0.9) and (2.0, -1.2) .. (0.6, -0.9);
\draw[cyan!60!black] (0.6, -0.9) .. controls (0.8, -0.3) and (1.0, 0.3) .. (1.2, 0.8);


\coordinate (L) at (3.3, -0.08);
\coordinate (R) at (4.7, -0.07);
\fill[black] (L) circle (2.2pt) node[above, shift={(0., 0.1)}] {\footnotesize $p_1$};
\fill[black] (R) circle (2.2pt) node[above, shift={(0., 0.15)}] {\footnotesize $p_2$};

\draw[->, blue!70!black, thin, shorten >= 4pt] (L) to[out=45, in=135] (R);
\draw[->, blue!70!black, thin, shorten >= 4pt] (L) to[out=0, in=180] (R);
\draw[->, blue!70!black, thin, shorten >= 4pt] (L) to[out=-45, in=-135] (R);
\draw[->, blue!70!black, thin, shorten >= 4pt] (L) to[out=75, in=105] (R);
\draw[->, blue!70!black, thin, shorten >= 4pt] (L) to[out=-75, in=-105] (R);

\draw[->, blue!70!black, thin] (L) .. controls (2.6, 0.6)  and (1.8, 0.8)  .. (1.4, 0.85);
\draw[->, blue!70!black, thin] (L) .. controls (2.2, 0.3)  and (1.4, 0.3)  .. (0.9, 0.25);
\draw[->, blue!70!black, thin] (L) .. controls (2.1, -0.4) and (1.3, -0.5) .. (0.7, -0.5);
\draw[->, blue!70!black, thin] (L) .. controls (2.5, -0.7) and (1.8, -0.9) .. (1.4, -1.0);

\draw[->, blue!70!black, thin, shorten >= 6pt] (6.6, 0.8)   .. controls (5.8, 0.8)  and (5.2, 0.6)  .. (R);
\draw[->, blue!70!black, thin, shorten >= 6pt] (7.3, 0.1)   .. controls (6.1, 0.2)  and (5.3, 0.1)  .. (R);
\draw[->, blue!70!black, thin, shorten >= 6pt] (7.1, -0.5)  .. controls (5.9, -0.4) and (5.2, -0.2) .. (R);
\draw[->, blue!70!black, thin, shorten >= 6pt] (6.4, -0.95) .. controls (5.6, -0.8) and (5.1, -0.5) .. (R);

\node[blue!80!black] at (4.0, 0.6) {\footnotesize $\xi$};

\draw[red!70!black, dashed] (4.5, -3.0) arc (0:180:1.2cm and 0.22cm);
\draw[red!70!black] (L) -- (2.1, -3.0);
\draw[red!70!black] (L) -- (4.5, -3.0);
\draw[red!70!black] (2.1, -3.0) arc (180:360:1.2cm and 0.22cm);
\draw[->, gray] (2.6, -1.2) -- ++(-0.35,-0.8) node[left, black] {$X$};

\draw[red!70!black, dashed] (5.9, -3.0) arc (0:180:1.2cm and 0.22cm);
\draw[red!70!black] (R) -- (3.5, -3.0);
\draw[red!70!black] (R) -- (5.9, -3.0);
\draw[red!70!black] (3.5, -3.0) arc (180:360:1.2cm and 0.22cm);
\draw[->, gray] (5.4, -1.2) -- ++(0.35,-0.8) node[right, black] {$X$};
\end{tikzpicture}
 \end{subfigure}
 \hfill\hspace{-15pt}
\begin{subfigure}[b]{0.32\textwidth}
        \centering
\begin{tikzpicture}[>=Stealth, thick, scale=0.7]

\fill[cyan!5] (1.2, 0.8)
    .. controls (3.5, 1.1) and (6.0, 1.0) .. (7.8, 0.6)
    .. controls (7.4, -0.1) and (7.0, -0.6) .. (6.6, -1.1)
    .. controls (4.5, -0.9) and (2.0, -1.2) .. (0.6, -0.9)
    .. controls (0.8, -0.3) and (1.0, 0.3) .. (1.2, 0.8);

\begin{scope}
    \clip (1.2, 0.8)
        .. controls (3.5, 1.1) and (6.0, 1.0) .. (7.8, 0.6)
        .. controls (7.4, -0.1) and (7.0, -0.6) .. (6.6, -1.1)
        .. controls (4.5, -0.9) and (2.0, -1.2) .. (0.6, -0.9)
        .. controls (0.8, -0.3) and (1.0, 0.3) .. (1.2, 0.8);

    \coordinate (M) at (4.0, -0.075);


    \draw[->, blue!70!black, thin] (4.0, -0.075) arc [start angle=270, end angle=180, x radius=0.6, y radius=0.18];
    \draw[blue!70!black, thin] (3.4, 0.105) arc [start angle=180, end angle=90, x radius=0.6, y radius=0.18];
    \draw[->, blue!70!black, thin] (4.0, 0.285) arc [start angle=90, end angle=0, x radius=0.6, y radius=0.18];
    \draw[blue!70!black, thin] (4.6, 0.105) arc [start angle=0, end angle=-90, x radius=0.6, y radius=0.18];

    \draw[->, blue!70!black, thin] (4.0, -0.075) arc [start angle=270, end angle=180, x radius=1.2, y radius=0.36];
    \draw[blue!70!black, thin] (2.8, 0.285) arc [start angle=180, end angle=90, x radius=1.2, y radius=0.36];
    \draw[->, blue!70!black, thin] (4.0, 0.645) arc [start angle=90, end angle=0, x radius=1.2, y radius=0.36];
    \draw[blue!70!black, thin] (5.2, 0.285) arc [start angle=0, end angle=-90, x radius=1.2, y radius=0.36];

    \draw[->, blue!70!black, thin] (4.0, -0.075) arc [start angle=270, end angle=180, x radius=1.8, y radius=0.54];
    \draw[blue!70!black, thin] (2.2, 0.465) arc [start angle=180, end angle=90, x radius=1.8, y radius=0.54];
    \draw[->, blue!70!black, thin] (4.0, 1.005) arc [start angle=90, end angle=0, x radius=1.8, y radius=0.54];
    \draw[blue!70!black, thin] (5.8, 0.465) arc [start angle=0, end angle=-90, x radius=1.8, y radius=0.54];

    \draw[->, blue!70!black, thin] (4.0, -0.075) arc [start angle=270, end angle=180, x radius=2.5, y radius=0.75];
    \draw[blue!70!black, thin] (1.5, 0.675) arc [start angle=180, end angle=90, x radius=2.5, y radius=0.75];
    \draw[->, blue!70!black, thin] (4.0, 1.425) arc [start angle=90, end angle=0, x radius=2.5, y radius=0.75];
    \draw[blue!70!black, thin] (6.5, 0.675) arc [start angle=0, end angle=-90, x radius=2.5, y radius=0.75];

    \draw[->, blue!70!black, thin] (4.0, -0.075) arc [start angle=270, end angle=180, x radius=3.4, y radius=1.05];
    \draw[blue!70!black, thin] (0.6, 0.975) arc [start angle=180, end angle=90, x radius=3.4, y radius=1.05];
    \draw[->, blue!70!black, thin] (4.0, 2.025) arc [start angle=90, end angle=0, x radius=3.4, y radius=1.05];
    \draw[blue!70!black, thin] (7.4, 0.975) arc [start angle=0, end angle=-90, x radius=3.4, y radius=1.05];

    \draw[->, blue!70!black, thin] (4.0, -0.075) arc [start angle=90, end angle=180, x radius=0.6, y radius=0.18];
    \draw[blue!70!black, thin] (3.4, -0.255) arc [start angle=180, end angle=270, x radius=0.6, y radius=0.18];
    \draw[->, blue!70!black, thin] (4.0, -0.435) arc [start angle=270, end angle=360, x radius=0.6, y radius=0.18];
    \draw[blue!70!black, thin] (4.6, -0.255) arc [start angle=0, end angle=90, x radius=0.6, y radius=0.18];

    \draw[->, blue!70!black, thin] (4.0, -0.075) arc [start angle=90, end angle=180, x radius=1.2, y radius=0.36];
    \draw[blue!70!black, thin] (2.8, -0.435) arc [start angle=180, end angle=270, x radius=1.2, y radius=0.36];
    \draw[->, blue!70!black, thin] (4.0, -0.795) arc [start angle=270, end angle=360, x radius=1.2, y radius=0.36];
    \draw[blue!70!black, thin] (5.2, -0.435) arc [start angle=0, end angle=90, x radius=1.2, y radius=0.36];

    \draw[->, blue!70!black, thin] (4.0, -0.075) arc [start angle=90, end angle=180, x radius=1.8, y radius=0.54];
    \draw[blue!70!black, thin] (2.2, -0.615) arc [start angle=180, end angle=270, x radius=1.8, y radius=0.54];
    \draw[->, blue!70!black, thin] (4.0, -1.155) arc [start angle=270, end angle=360, x radius=1.8, y radius=0.54];
    \draw[blue!70!black, thin] (5.8, -0.615) arc [start angle=0, end angle=90, x radius=1.8, y radius=0.54];

    \draw[->, blue!70!black, thin] (4.0, -0.075) arc [start angle=90, end angle=180, x radius=2.5, y radius=0.75];
    \draw[blue!70!black, thin] (1.5, -0.825) arc [start angle=180, end angle=270, x radius=2.5, y radius=0.75];
    \draw[->, blue!70!black, thin] (4.0, -1.575) arc [start angle=270, end angle=360, x radius=2.5, y radius=0.75];
    \draw[blue!70!black, thin] (6.5, -0.825) arc [start angle=0, end angle=90, x radius=2.5, y radius=0.75];

    \draw[->, blue!70!black, thin] (4.0, -0.075) arc [start angle=90, end angle=180, x radius=3.4, y radius=1.05];
    \draw[blue!70!black, thin] (0.6, -1.125) arc [start angle=180, end angle=270, x radius=3.4, y radius=1.05];
    \draw[->, blue!70!black, thin] (4.0, -2.175) arc [start angle=270, end angle=360, x radius=3.4, y radius=1.05];
    \draw[blue!70!black, thin] (7.4, -1.125) arc [start angle=0, end angle=90, x radius=3.4, y radius=1.05];

    \node[blue!80!black] at (5.4, 0.5) {\footnotesize $\xi$};
\end{scope}

\draw[cyan!60!black] (1.2, 0.8) .. controls (3.5, 1.1) and (6.0, 1.0) .. (7.8, 0.6);
\draw[cyan!60!black] (7.8, 0.6) .. controls (7.4, -0.1) and (7.0, -0.6) .. (6.6, -1.1);
\draw[cyan!60!black] (6.6, -1.1) .. controls (4.5, -0.9) and (2.0, -1.2) .. (0.6, -0.9);
\draw[cyan!60!black] (0.6, -0.9) .. controls (0.8, -0.3) and (1.0, 0.3) .. (1.2, 0.8);


\fill[black] (M) circle (2.2pt)node[above=4pt] {\footnotesize $p$};

\draw[red!70!black, dashed] (5.4, -3.0) arc (0:180:1.4cm and 0.24cm);
\draw[red!70!black] (M) -- (2.6, -3.0);
\draw[red!70!black] (M) -- (5.4, -3.0);
\draw[red!70!black] (2.6, -3.0) arc (180:360:1.4cm and 0.24cm);

\draw[->, gray] (4.8, -1.2) -- ++(0.35,-0.8) node[right, black] {$X$};

\end{tikzpicture}
\end{subfigure}
\end{minipage}%
    }

\caption{Horizons intersecting $\scri^+$ at the isolated zeros of $\xi$. As the surface gravity reduces, the bifurcate horizon branches merge into one degenerate horizon.}
\end{figure}

\bigskip

In this introduction, we present a slightly less technical version of Theorem \ref{theosuff}, in which the conditions for a CKV $\xi$ to extend to a Killing vector admitting a horizon are formulated geometrically. We use a representative of the conformal class $[\xi]$ on the flat Euclidean space $\mathbb{E}^n$. Since $\mathbb{E}^n$ is only a local conformal representation of the global sphere $\mathbb{S}^n$, some of the isolated zeros of the CKVs may occur at infinity in the Euclidean picture. Therefore, the theorem should be understood on the one-point compactification $\S^n = \mathbb{E}^n \cup \{ \infty \}$.

 \begin{theorem}\label{theo2}
  Consider $(n+1)$-dimensional de Sitter spacetime $(\tilde M,\tilde g)$ and let $X$ be a Killing vector thereof, where $\xi = X\mid_{\scri^+}$. Then
  \begin{enumerate}[a)]
   \item    $X$ admits a bifurcate, thus non-degenerate, horizon $\mathcal{H}_X$ if and only if  $\xi$ is conformally equivalent to a homothety, thus has exactly two isolated zeros (one of them possibly at infinity in $\S^n = \mathbb{E}^n \cup \{ \infty \}$), which coincide with $\closure{\mathcal{H}_X}\cap \scri^+$.
     \item    $X$ admits a non-bifurcate, thus degenerate, horizon $\mathcal{H}_X$ if and only if  $\xi$ is conformally equivalent to a translation, thus has exactly one isolated zero (possibly at infinity in $\S^n = \mathbb{E}^n \cup \{ \infty \}$), which coincides with $\closure{\mathcal{H}_X}\cap \scri^+$.
  \end{enumerate}
 \end{theorem}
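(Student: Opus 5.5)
The plan is to work entirely with canonical representatives. The orthochronous isometry group of de Sitter acts on $\scri^+\simeq\S^n$ as the conformal group, and the Killing initial data problem is conformally covariant. Hence it suffices to prove the statement for one CKV in each conformal class. For $\xi$ we take the canonical representative $\xi_{\mathrm{can}}$ of Appendix \ref{appcanonical}. It is determined by the invariant $\sigma$ together with the rotation parameters, and it can be written on $\mathbb{E}^n$ as a linear combination of a homothety or a translation and commuting rotations in orthogonal planes. Horizon properties, bifurcation, and the location of $\closure{\mathcal{H}_X}\cap\scri^+$ are invariant under isometries, so they transfer back to $X$.

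First, I would reconstruct $X_{\mathrm{can}}$ explicitly. I would take the de Sitter metric in the flat slicing,
\begin{equation}
\tilde g = \frac{1}{\Omega^2}\lr{-d\Omega^2 + \delta_{ij}dx^i dx^j}, \qquad \Omega>0,
\end{equation}
in units where the de Sitter radius is $1$; here $\scri^+=\{\Omega=0\}$ is conformal to $\mathbb{E}^n$. In this slicing the Killing vector extending a given CKV $\xi$ is known in closed form. A translation or rotation extends trivially. The homothety $x^i\partial_i$ extends to $\Omega\partial_\Omega+x^i\partial_i$. A special conformal transformation $b$ extends to
\begin{equation}
2(b\cdot x)x^i\partial_i-(|x|^2-\Omega^2)b^i\partial_i+2(b\cdot x)\Omega\partial_\Omega .
\end{equation}
One can check directly that these satisfy the Killing equation, or invoke uniqueness from Theorem \ref{theoKIDanal}. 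Then I would compute the norm $\tilde g(X_{\mathrm{can}},X_{\mathrm{can}})$ as a function of $(\Omega,x)$.

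Second, I would locate the Killing horizons. A Killing horizon is a connected null hypersurface on which $X$ is null and tangent, so I would study the zero set of $\tilde g(X,X)$ and check tangency and nullness of its components.

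Homothety ($\sigma<0$, no rotation). The norm vanishes exactly on the light cones $|x|=\Omega$ emanating from $x=0$, together with their counterparts at $\infty$ after inversion. These are the two branches of a bifurcate horizon. They meet $\scri^+$ exactly at the two zeros $0$ and $\infty$, which are essential. The bifurcation surface is the fixed set of $X$, and the surface gravity there is nonzero.

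Translation ($\sigma=0$). After conformally moving the unique zero to a finite point $p$, I would take $\xi$ to be the extension of a pure special conformal generator. Its norm is a perfect square vanishing on the null cone from $p$. That cone is a degenerate horizon with no bifurcation surface.

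Nonzero rotation parameters. The rotational part adds a strictly positive contribution to the norm in a neighbourhood of every would-be horizon. For example, for $x\partial_x$ plus a rotation in a plane orthogonal to $x$, the norm gains $\theta^2\rho^2/\Omega^2$, and nullness forces a contradiction off a codimension-two set. So no null hypersurface exists, and this yields the "only if" direction. Conformal classes without isolated essential zeros are already excluded by Theorem \ref{theo1}.

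Third, I would match the counts of zeros. A homothety has exactly two zeros in $\S^n$ and a translation exactly one (at $\infty$). By Theorem \ref{theo1}, $\closure{\mathcal{H}_X}\cap\scri^+$ is the set of essential isolated zeros, which gives the coincidence statement.

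The main obstacle I expect is the mixed case, where a homothety or translation is combined with nonzero rotations. There one must show rigorously that no null, $X$-tangent hypersurface exists anywhere, not just near $\scri^+$. I would first try to write $\tilde g(X,X)$ as a sum of the pure (homothety or translation) norm and a nonnegative rotational term with disjoint degeneracy sets, and show that their common zero locus has codimension at least two, so it cannot contain a hypersurface.
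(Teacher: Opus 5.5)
Your strategy is the paper's: reduce to one representative per conformal class (Lemma \ref{lemmaequivKIDdS}), extend it by \eqref{killext}, and read off Killing horizons from the zero set of $\tilde g(X,X)$ and the causal character of its normal. The difference is the choice of representatives.

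The paper treats all classes at once with $\xi_{can}(\sigma,\tau,\mu_i)$, whose two zeros $z_1=\pm|\sigma|^{1/2}$ are both finite. The price is the long evaluation of $|dF|^2$ in Appendix \ref{appdF}. You instead use the geometric representatives of Remark \ref{remarkrotation}: homothety about $0$, special conformal generator, plus commuting rotations. For these, all cross terms vanish and the norms are one-liners. For example, $\Omega^2\,\tilde g(X,X)=|x|^2-\Omega^2+\sum_i\mu_i^2\rho_i^2$ for a unit-rate homothety plus rotations. This has two advantages:
\begin{itemize}
\item The classes with $\tau\neq0$ (which your description by $\sigma$ and the $\mu_i$ omits) are homothety plus rotation for either sign of $\sigma$. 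One computation therefore excludes them, even though for $\sigma>0$ their norm still vanishes on a hypersurface.
\item In the translation case the perfect square $|b|^2(\Omega^2-|x|^2)^2$ gives $d(\tilde g(X,X))=0$ on the cone, hence $\kappa=0$ directly.
\end{itemize}
Excluding the elliptic classes ($\sigma>0$, $\tau=0$) by non-essentiality via Theorem \ref{theo1} is legitimate.

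Two steps need repair.

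(i) The plan in your last paragraph fails for homothety plus rotations. The pure homothety norm is indefinite, so adding the rotational term does not shrink the zero set to codimension two. Instead it moves the zero set to the hypersurface $G:=|x|^2+\sum_i\mu_i^2\rho_i^2-\Omega^2=0$. What excludes a horizon is nullness. With respect to $-d\Omega^2+\gamma$ one gets $|dG|^2=4\sum_i\mu_i^2(1+\mu_i^2)\rho_i^2>0$ on $\{G=0\}$ away from $\{\rho_i=0\}$, so that hypersurface is timelike. This is the check in your own example and the analogue of the paper's $|dF|^2$. The codimension argument is valid only in the translation case, where the pure part is a square. You also need not control the whole spacetime. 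By the converse part of Theorem \ref{theo1}, every de Sitter Killing horizon has closure meeting $\scri^+$ at an essential zero, so the Poincar\'e patch in which that zero is a finite point suffices.

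(ii) With the zeros at $0$ and $\infty$, the patch $\{X_0+X_1>0\}$ contains only the half-branch $\{\Omega=|x|\}$. The other branch is the patch boundary $\{X_0+X_1=0\}$. The bifurcation sphere $\{X_0=X_1=0\}$ lies on the boundary of both the patch and its inverted copy. So ``these are the two branches of a bifurcate horizon'' and ``the bifurcation surface is the fixed set of $X$'' are not visible in your charts. You can fix this in either of two ways:
\begin{itemize}
\item Argue on the hyperboloid: the homothety is the boost in the $(X_0,X_1)$-plane, which vanishes on an $(n-1)$-sphere.
\item Place both zeros at finite points, as the paper does with $\xi_{can}$, so that both cones and the bifurcation surface $S$ appear in one chart.
\end{itemize}
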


 \section {Generalities and tools}\label{sectools}

We start by reviewing general results and considerations. We will consider $\Lambda >0$ vacuum spacetimes of $(n+1)$ dimensions, with $n \geq 3$. Namely, Lorentzian manifolds $(\tilde M^{n+1}, \tilde g)$, satisfying the Einstein equation
 \begin{equation}\label{Einstein}
\tilde \Ric = n \lambda \tilde g, \qquad \lambda >0.
\end{equation}
The constant $\lambda$ is a reduced cosmological constant, related to the usual one by $\lambda = \frac{\Lambda}{n(n-1)}$.

Abstract indices in the main text are the following. Lower case latin indices $a,b,c, \cdots$ are used for ($n+1$)-dimensional Lorentzian spacetimes, upper case latin indices $A,B,C, \cdots$ are for its embedded spacelike $n$-dimensional submanifolds and Greek indices $\alpha,\beta,\cdots$ are only used for an ($n+2$)-dimensional space described in Appendix \ref{appcanonical}.

\subsection{Killing horizons}\label{secdefHx}

Given a Killing vector field $X$ of $\tilde g$, a Killing horizon $\mathcal{H}_X$ is an embedded, null, connected hypersurface such that $X$ is null and nowhere zero on $\mathcal{H}_X$, i.e.,
\begin{equation}
 \mathcal{H}_X =  \{ p \in \tilde M^{n+1}\tq \tilde g(X,X)(p)=0,~X_p \neq 0\}.
\end{equation}
Whenever $X$ vanishes on a co-dimension two connected submanifold $S$, the null hypersurfaces $\mathcal{H}_1$ and $\mathcal{H}_2$ generated by the two transversal null directions give raise to Killing horizons of $X$ on each  (future and past) connected component of $\mathcal{H}_1 ~ \backslash ~  S$ and $\mathcal{H}_2 ~ \backslash ~ S$. A bifurcate horizon is the union $\mathcal{H}_X = \mathcal{H}_1 \cup \mathcal{H}_1$.

\subsection{Conformal extensions}

We shall consider spacetimes $(\tilde M,\tilde g)$ admitting a conformal extension, i.e., an embedding into a  manifold with boundary $(M^{n+1},g)$ such that
\begin{equation}
 \phi: \tilde M^{n+1} \hookrightarrow M^{n+1},\qquad \phi^\star(\Om^{-2}g) =  \tilde g,
\end{equation}
for a smooth\footnote{Unless otherwise stated, for simplicity we assume $C^\infty$ smoothness throughout, although minimal regularity requirements may be lower.} function $\Om: M^{n+1} \to \mathbb{R}$ satisfying
\begin{equation}
 \phi(\phman{n+1}) = \{ p \in M^{n+1} \tq \Om(p) > 0 \}, \qquad  \scri := \partial M^{n+1} = \{ p \in M^{n+1} \tq \Om(p)=0 \mbox{ and } d_p \Om  \neq 0 \} .
\end{equation}
Identifying objects with their images in $\man{n+1}$, we write $g=\Om^2\tilde g$. In the remainder, objects with tilde are computed with $\tilde g$ while objects without tilde are derived from $g$. Their respective Levi-Civita connections are  $\tilde \nabla$ and $\nabla$ and the associated Riemann tensor, say for $\nabla$, follows the convention
\begin{equation}
 R^a{}_{bcd} X^b = \nabla_c \nabla_d X^a - \nabla_d \nabla_c X^a.
\end{equation}
The Ricci tensors of  $\tilde g$ and $g$ satisfy the relation
\begin{equation}
 \tilde R_{ab}-R_{ab} = \frac{n-1}{\Om} \nabla_a \nabla_b \Om + g_{ab} \frac{\Box \Om}{\Om} - g_{ab} \frac{n}{\Om^2} |\nabla \Om|^2,
\end{equation}
so whenever $\tilde g$ satisfies the Einstein equation \eqref{Einstein}, it induces the so-called quasi-Einstein equation on $g$
\begin{equation}\label{quasiEinstein}
R_{ab} = -\frac{n-1}{\Om} \nabla_a \nabla_b \Om  -g_{ab} \frac{\Box \Om}{\Om} + g_{ab} \frac{n}{\Om^2} \left( |\nabla \Om|^2 + \lambda \right).
\end{equation}
From \eqref{quasiEinstein}, it follows readily that $g$ induces an $n$-dimensional Riemannian metric on $\scri$, which generally has two connected components, $\scri=\scri^+\cup\scri^-$. Since conformal extensions are not unique, $\scri$ inherits a conformal structure, that is, a conformal class of metrics
\begin{equation}\label{confclasgamma}
[\gamma]={\gamma'=\omega^2\gamma,\ \forall\omega\in\mathcal{C}^\infty(\Sigma^n)},
\end{equation}
whose representatives are induced metrics on $\scri$ for suitable conformal extensions. We shall often regard $\scri$ as an embedded spacelike $n$-manifold, so that its intrinsic geometry may be studied in terms of an abstract embedded $n$-manifold $(\Sigma^n,\gamma)\hookrightarrow(\scri,\gamma)\subset(M^{n+1},g)$. Furthermore, whenever we derive local results near $\scri$,
whenever we formulate it only near the future connected component $\scri^+$, but they apply near $\scri^-$ too.

\subsection{Asymptotic value problem}\label{secdata}

The asymptotic value problem is an initial value problem for $g$ with data prescribed at $\scri^+$. For simplicity, we shall not describe its full construction and instead limit ourselves to a brief overview, highlighting the main references. Since our aim is an arbitrary-dimensional treatment, we shall not discuss specifically four-dimensional methods such as Friedrich's celebrated conformal field equations \cite{friedrich81,friedrich81bis,Fried86initvalue}. Nevertheless, it can be reformulated in terms of the framework discussed below (see, e.g.,  \cite{marspeonRSTA}).

\bigskip

The asymptotic degrees of freedom of $g$ become apparent through the Fefferman-Graham expansion \cite{FeffGrah85,ambientmetric}. Given any representative $\gamma\in[\gamma]$, one may choose a {\it geodesic conformal extension}, i.e. a conformal extension whose conformal factor satisfies
\begin{equation}\label{geodesicgauge}
 |\nabla\Omega|_g^2=-\lambda.
\end{equation}
In coordinates $\{\Omega,x^A\}$ adapted to the level sets of $\Omega$, the metric takes the form
\begin{equation}\label{eqggeodgauge}
 g=-\frac{d\Omega^2}{\lambda}+g_\Omega,
\end{equation}
where $g_{\Om}$ is a family of Riemanninan $n$-metrics induced in $\{ \Om = const.\}$.
In rough terms, the Fefferman-Graham expansion is an asymptotic expansion  of $g_\Omega$ near $\scri^+$, whose coefficients are recursively determined by equation \eqref{quasiEinstein}, which in this conformal gauge is
\begin{equation}\label{qEinsteingeodgauge}
R_{ab} = -\frac{n-1}{\Om} \nabla_a \nabla_b \Om  -g_{ab} \frac{\Box \Om}{\Om} .
\end{equation}

When $n$ is odd, the expansion looks
\begin{equation}\label{expnodd}
g_\Om =
g_{(0)}
+\Omega^2 g_{(2)}
+\cdots
+\Omega^{n-1}g_{(n-1)}
+\Omega^n g_{(n)}
+O(\Omega^{n+1}).
\end{equation}
It contains only even power terms of $\Omega$ up to the order $n-1$,
all of them uniquely determined by $g_{(0)}=\gamma$. The coefficient $g_{(n)}$ is undetermined by the recursion, except that
\begin{equation}\label{constgnodd}
 \mathrm{Tr}_\gamma g_{(n)} = 0,\qquad \mathrm{div}_\gamma g_{(n)} = 0.
\end{equation}
 Terms of order higher than $n$ are also powers of $\Om$ and are determined by $(g_{(0)},g_{(n)})$.

 When $n$ is even, the expansion
\begin{equation}\label{expneven}
g_\Om =
g_{(0)}
+\Omega^2 g_{(2)}
+\cdots
+\Omega^n g_{(n)}
+O(\Omega^{n}\log\Omega),
\end{equation}
 contains only even powers of $\Omega$ of order lower than $n$, but from the order $\Omega^{n}\log\Omega$ onwards it becomes polyhomogeneous, i.e. it includes terms of the form $\Om^{n+s}(\log \Om)^l$, $l \leq s$. The coefficients $g_{(k)}$ with $k<n$ are uniquely determined by $g_{(0)}=\gamma$, while $g_{(n)}$ is not determined by the recursion, except that
\begin{equation}\label{constgneven}
\mathrm{Tr}_\gamma g_{(n)}=\mathfrak{a}(\gamma),
\qquad
\mathrm{div}_\gamma g_{(n)}=\mathfrak{b}(\gamma),
\end{equation}
for scalar and tensor functions $\mathfrak{a}(\gamma)$ and $\mathfrak{b}(\gamma)$ respectively, in general non-vanishing. Higher-order terms are polyhomogeneous or even powers of  $\Omega$, all determined by $(g_{(0)},g_{(n)})$. Remarkably, the presence of logarithmic terms depends entirely on the so-called {\it obstruction tensor}, which is determined by $\gamma$. For instance, for $\gamma$  conformally flat, the obstruction tensor vanishes and no logarthmic terms appear in the expansion.

\bigskip

In general, the Fefferman-Graham expansion shows that the asymptotic degrees of freedom are encoded in the pair $(g_{(0)},g_{(n)})$, with $g_{(0)}=\gamma$. Indeed, these quantities play the role of asymptotic data in the asymptotic value problem. We define
\begin{definition}\label{defdata}
 A triple $(\Sigma^n,\gamma,g_{(n)})$, where  $(\Sigma^n,\gamma)$ is a smooth Riemannian $n$-manifold and $g_{(n)}$ is a smooth symmetric two-tensor satisfying, according to its dimension, \eqref{constgnodd} or \eqref{constgneven}, is called asymptotic data.
\end{definition}
Establishing a well-posed asymptotic value problem from this formal expansion is, however, a highly non-trivial matter. We refer the reader to \cite{Anderson2005,andersonchrusciel05,kaminski21,kichenassamy03} for treatments in various settings.
More recently, Hintz \cite{hintz23} solved the general problem, providing an alternative proof for a special case of a broader result by Rodnianski and Shlapentokh-Rothman \cite{shlaprothrod}.

For simplicity, we have assumed smoothness of the asymptotic data, although the differentiability requirements may be relaxed (cf. \cite{Fried86initvalue},\cite{hintz23} or \cite{hawkingellis}.) The degree of differentiability of the resulting spacetime metric is determined by the expansions \eqref{expnodd} and \eqref{expneven}.

Then, the characterization by means of asymptotic data can be stated in the following theorem:

\begin{theorem}\label{theoremavp}
Let $(\Sigma^n,\gamma,g_{(n)})$ be asymptotic data and a positive constant $\lambda \in \R$. Then there exists a unique spacetime $( \Sigma^n\times [0,\ell), g)$, for a sufficiently small $\ell \in \mathbb{R}$, with $g$ of the form \eqref{eqggeodgauge}, satisfying \eqref{qEinsteingeodgauge} and whose Fefferman-Graham expansion is determined by the asymptotic data  $(g_{(0)}=\gamma,g_{(n)})$.
\newline
Moreover,  $(\Sigma^n\times (0,\ell), \tilde g = \Om^{-2} g)$ satisfies equation \eqref{Einstein} with the prescribed value of $\lambda$.
\end{theorem}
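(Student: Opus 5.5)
Theorem \ref{theoremavp} is the well-posedness of the asymptotic value problem. I would not prove it from scratch. Instead I would assemble it from the Fefferman--Graham formal theory and the existing existence and uniqueness results cited above, chiefly Hintz \cite{hintz23}, with \cite{Anderson2005,andersonchrusciel05,kaminski21,kichenassamy03} for special cases. The argument has three steps.

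\textbf{Step 1: formal construction.} Fix $\gamma$ and work in the geodesic gauge \eqref{geodesicgauge}, so that $g=-\lambda^{-1}d\Omega^2+g_\Omega$. Substituting this form into \eqref{qEinsteingeodgauge} splits the equation into three parts:
\begin{itemize}
\item the tangential--tangential part, a second-order ODE in $\Omega$ for $g_\Omega$ with a regular singular point at $\Omega=0$;
\item the $\Omega\Omega$ part, which yields trace constraints;
\item the $\Omega A$ part, which yields divergence constraints.
\end{itemize}
The indicial roots of the tangential equation are $0$ and $n$. Hence the coefficients $g_{(k)}$ with $k<n$ are determined recursively by $\gamma$. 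The coefficient $g_{(n)}$ is free, subject only to \eqref{constgnodd} or \eqref{constgneven}, and these are exactly the constraints built into Definition \ref{defdata}. When $n$ is even, the resonance at order $n$ forces the $\Omega^n\log\Omega$ terms. A Borel summation of the expansion then gives a metric $g_{\mathrm{app}}$ that solves \eqref{qEinsteingeodgauge} modulo $O(\Omega^\infty)$ errors, which are polyhomogeneous in the even case.

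\textbf{Step 2: exact solution.} The next step is to correct $g_{\mathrm{app}}$ to an exact solution. Following Hintz, I would pass to the physical metric $\tilde g=\Omega^{-2}g$ and impose a generalized harmonic (wave) gauge adapted to $\scri^+$. In that gauge the Einstein equation \eqref{Einstein} becomes a quasilinear wave equation. Writing $\tilde g=\tilde g_{\mathrm{app}}+h$, the remainder $h$ must vanish to infinite order at $\Omega=0$. I would then solve a Fuchsian/characteristic-type problem in a $b$-Sobolev or Fuchsian framework near $\Omega=0$, obtaining $h$ via energy estimates with weights $\Omega^{-N}$ for $N$ large. This step is the main obstacle. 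Energy estimates must be proved for a wave equation whose initial surface is at conformal infinity, and the data enter only through the asymptotic behaviour, so one has to show that infinite-order vanishing propagates. Here I would invoke \cite{hintz23} directly rather than redo the estimates.

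\textbf{Step 3: gauge recovery, the geodesic gauge, and uniqueness.} The gauge condition propagates by the contracted Bianchi identity: the gauge one-form satisfies a homogeneous wave equation and vanishes to infinite order at $\scri$, so it vanishes identically. Having a solution of \eqref{Einstein}, I would then construct the geodesic conformal factor by solving the eikonal equation $|\nabla\Omega|^2_{g}=-\lambda$ with boundary metric $\gamma$. This is a noncharacteristic first-order PDE off $\scri$, and it yields the form \eqref{eqggeodgauge} on $\Sigma\times[0,\ell)$.

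For uniqueness, take two solutions with the same $(g_{(0)},g_{(n)})$. By Step 1 their expansions agree to all orders. Put both in the same wave gauge, which is possible by solving gauge-fixing wave equations with trivial asymptotic data. The difference then satisfies a linear wave equation and vanishes to infinite order at $\scri^+$, so the same weighted energy estimate forces it to vanish. Finally, \eqref{Einstein} with the prescribed $\lambda$ holds for $\Omega^{-2}g$ by construction, since \eqref{qEinsteingeodgauge} together with \eqref{geodesicgauge} is equivalent to \eqref{quasiEinstein}.
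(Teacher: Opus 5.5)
Your proposal is correct and takes the same route as the paper, which also does not prove Theorem~\ref{theoremavp} itself: it combines the Fefferman--Graham formal expansion with the well-posedness results in the literature, citing Hintz \cite{hintz23} for the general problem and \cite{Anderson2005,andersonchrusciel05,kaminski21,kichenassamy03} for special settings. Your Steps 1--3 are a faithful outline of what those references supply. They cover the formal expansion with $g_{(n)}$ free up to the trace and divergence constraints, the correction of an infinite-order approximate solution in a wave gauge by weighted energy estimates, gauge propagation and recovery of the geodesic gauge, and the observation that \eqref{qEinsteingeodgauge} together with \eqref{geodesicgauge} is equivalent to \eqref{quasiEinstein}, and hence to \eqref{Einstein}.
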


While in the general case the definition of $g_{(n)}$ depends on the conformal and coordinate gauges, a covariant characterization follows for conformally flat  $(\Sigma^n,\gamma)$ in any dimension. In this case, it can be shown \cite{marspeondata21} that $g_{(n)}$ is given by the electric part of the rescaled Weyl tensor\footnote{For $n=3$ such a characterization holds for general $(\Sigma^n,\gamma)$, as it follows from Friedrich's classical results \cite{friedrich81,friedrich81bis,Fried86initvalue}.}
\begin{equation}\label{geometricgn}
 g_{(n)} = \frac{2\lambda^{-2} }{n(2-n)} \lim_{\Om \to 0} \Omega^{2-n} \mathrm{Weyl}_g(\nabla \Om,\cdot,\nabla \Om,\cdot),
\end{equation}
for all $n \geq 3$ with $n \neq 4$. For $n = 4$ it is the trace-free part of $g_{(n)}$ that satisfies \eqref{geometricgn} and the total coefficient is recovered by adding a trace term constructed from $\gamma$.

Another relevant aspect of the asymptotic value problem is that it must be naturally  conformal and diffeomorphism gauge invariant. This means that
the data is not simply characterized by a manifold $\Sigma^n$ and a pair $( \gamma,g_{(n)})$, but by a pair of {\it conformal and diffeomorphism} equivalence classes $([\gamma],[g_{(n)}])$, where all elements in the equivalence class correspond to diffeomorphic spacetimes.

In general it is hard to determine $[g_{(n)}]$, specially in the $n$ even case, because of its non-zero trace and divergence. This difficulty is removed if $(\Sigma^n,\gamma)$ is locally conformally flat, due to the geometric characterization in \eqref{geometricgn}. In this case, two pairs $(\gamma,g_{(n)})$ and $(\gamma',g'_{(n)})$ belong to the same class $([\gamma],[g_{(n)}])$ if for some diffeomorphism $\varphi$ of $\Sigma$ it holds that
\begin{equation}\label{confdiffeoclass}
 \varphi^\star(\gamma) = \omega^2 \gamma', \qquad \varphi^\star(g_{(n)}) = \omega^{2-n} g'_{(n)},\qquad  \omega \in \mathcal C^\infty(\Sigma^n).
\end{equation}

\bigskip

We can now formulate the asymptotic characterization of de Sitter that we shall require later (see e.g. \cite{marspeondata21} and \cite{skenderis} for the same result with $\lambda <0$):

\begin{corollary}\label{coroldS}
Let $(\Sigma^n,\gamma)$ be a conformally flat Riemannian manifold. The asymptotic data $(\Sigma^n, \gamma, g_{(n)} = 0)$ determines a spacetime $(\Sigma^n \times (0,\ell), \tilde{g})$ locally isometric to de Sitter whose Fefferman-Graham expansion converges to
\begin{equation}\label{FGexpdS}
\tilde{g} = \frac{1}{\Omega^2}\left( -\frac{d\Omega^2}{\lambda} + \gamma + \frac{\Omega^2}{\lambda} P_\gamma + \frac{\Omega^4}{4\lambda^2} P^2_\gamma \right),
\end{equation}
with $P_\gamma$ denoting the Schouten tensor of $\gamma$.
\end{corollary}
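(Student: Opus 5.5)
The plan is to combine Theorem \ref{theoremavp} with an explicit verification. By the uniqueness part of Theorem \ref{theoremavp}, it suffices to exhibit, for conformally flat $(\Sigma^n,\gamma)$, a metric of the form \eqref{eqggeodgauge} that satisfies \eqref{qEinsteingeodgauge} and has Fefferman--Graham data $(\gamma, g_{(n)}=0)$. The candidate is the closed-form expression in \eqref{FGexpdS}, namely
\begin{equation}
g_\Om = \gamma + \frac{\Om^2}{\lambda} P_\gamma + \frac{\Om^4}{4\lambda^2} P_\gamma^2 = \gamma\left(\mathrm{Id} + \frac{\Om^2}{2\lambda} P_\gamma\right)^{2},
\end{equation}
where $P_\gamma^2$ is understood as $(P_\gamma)_{AC}(P_\gamma)^C{}_B$. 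Once this candidate is shown to solve \eqref{qEinsteingeodgauge}, the remaining statements follow quickly. First, it is a polynomial in $\Om^2$ of degree two, so there are no logarithms and, for $n\ge 3$, its $\Om^n$ coefficient vanishes. This holds for odd $n$ trivially, for $n=4$ once one checks that $\mathfrak{a},\mathfrak{b}$ and the relevant coefficient match, and for even $n\ge 6$ trivially. Second, it coincides with the unique solution, so the Fefferman--Graham expansion "converges" to it.

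To verify the equation, I would work in the gauge \eqref{geodesicgauge}. Here $\nabla_a\nabla_b\Om = -\tfrac12 \partial_\Om g_\Om$ on tangential components, up to a factor fixed by the $-d\Om^2/\lambda$ normalization, and \eqref{qEinsteingeodgauge} splits in the standard way into normal--normal, normal--tangential and tangential--tangential equations for the family $g_\Om$. These are the usual Gauss--Codazzi/Riccati-type equations found in Fefferman--Graham or Skenderis--Solodukhin. I will then compute with the endomorphism $\mathcal{P}=\gamma^{-1}P_\gamma$:
\begin{enumerate}[(i)]
\item The normal--tangential equation reduces to the Codazzi-type identity $d^{\nabla}P_\gamma=0$, i.e.\ the vanishing of the Cotton tensor, which holds because $\gamma$ is conformally flat (for $n\ge 4$ this follows from $W_\gamma=0$ via the Bianchi identity).
\item The normal--normal equation is an algebraic identity in $\mathrm{tr}(\mathcal{P})$ and $\mathrm{tr}(\mathcal{P}^2)$ that follows from the explicit form of $g_\Om$.
\item The tangential equation needs the Ricci tensor of $g_\Om$. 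Since $g_\Om(X,Y)=\gamma(AX,AY)$ with $A=\mathrm{Id}+\tfrac{\Om^2}{2\lambda}\mathcal{P}$, and $P$ is Codazzi, the connection of $g_\Om$ is $\nabla^{g_\Om}_XY = A^{-1}\nabla^\gamma_X(AY)$. Its curvature is then $A^{-1}R^\gamma A$ plus terms that are quadratic in $\mathcal{P}$.
\end{enumerate}
With $W_\gamma=0$, so that $R^\gamma = P_\gamma\owedge\gamma$, the tangential equation should reduce to a polynomial identity in $\Om^2$, which I would check order by order.

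An alternative that avoids curvature computations is to take the explicit flat model: on $\mathbb{E}^n$ one has $P_\gamma=0$ and $g_\Om=\gamma$, so $\tilde g=\Om^{-2}(-d\Om^2/\lambda+\delta)$, which is the de Sitter flat slicing. The conformal and diffeomorphism covariance \eqref{confdiffeoclass}, together with $g_{(n)}=0$ being a conformally invariant condition by \eqref{geometricgn}, then transfers local isometry to any locally conformally flat $\gamma$. For this route I would need the known behaviour of the FG form under a change of boundary representative. The cleanest way to obtain it is to show directly, as above, that \eqref{FGexpdS} has vanishing bulk Weyl tensor and satisfies \eqref{Einstein}. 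A solution of \eqref{Einstein} with vanishing Weyl tensor has constant curvature $\lambda$ and is therefore locally de Sitter.

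I expect the main obstacle to be the tangential Ricci computation in step (iii), specifically organizing the quadratic terms in $\mathcal{P}$ so that they cancel exactly. I would first try to shortcut it with the connection formula $A^{-1}\nabla^\gamma A$, which is valid precisely because $P$ is Codazzi, and then compare coefficients of $\Om^0,\Om^2,\Om^4$.
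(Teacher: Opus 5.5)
Your main line of argument is sound, but it is not how the paper handles this statement. The paper gives no proof of its own: it presents the corollary as a consequence of the uniqueness in Theorem~\ref{theoremavp}, together with the closed form available in the literature (\cite{marspeondata21}; \cite{skenderis} for the $\lambda<0$ analogue). There the terminating expansion is obtained by starting from a locally (anti-)de Sitter bulk and integrating in the normal direction. You go the other way. You \emph{verify} that the candidate $g_\Om=\gamma(A\cdot,A\cdot)$, with $A=\mathrm{Id}+\frac{\Om^2}{2\lambda}\mathcal P$, solves the equations, and uniqueness does the rest. This gives a self-contained, elementary proof and shows exactly where conformal flatness enters:
\begin{itemize}
\item vanishing of the Cotton tensor is precisely what makes $A^{-1}\nabla^\gamma A$ torsion-free, hence the Levi-Civita connection of $g_\Om$, and it also gives the normal--tangential equation;
\item $W_\gamma=0$ is only needed to express $R^\gamma$ through $P_\gamma$.
\end{itemize}
The derivation route additionally yields the converse (a locally de Sitter bulk forces this form), which the corollary does not need.

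Your computation can be made shorter, and in two places it needs correcting.

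First, the curvature of $A^{-1}\nabla^\gamma A$ is exactly $A^{-1}R^\gamma A$, with no quadratic corrections. The quadratic terms in $\mathcal P$ come only from the second fundamental form, whose $g_\Om$-endomorphism is $\frac{\Om}{\lambda}B$ with $B:=A^{-1}\mathcal P$. Using $A^{-1}=\mathrm{Id}-\frac{\Om^2}{2\lambda}B$, the tangential equation becomes an identity in $B$ and $\mathrm{tr}B$. It is not a polynomial identity in $\Om^2$. The normal--normal equation holds for every symmetric $P_\gamma$.

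You can skip both the Ricci computation and the separate Weyl check. Take a $\gamma$-orthonormal eigenframe of $\mathcal P$ with eigenvalues $p_i$, and set $b_i:=p_i/(1+\frac{\Om^2}{2\lambda}p_i)$. The Gauss, Codazzi and Riccati equations then show that the curvature of $g$ is diagonal in the normalized frame. Its sectional curvatures are $b_i+b_j$ on planes spanned by two eigenvectors and $b_i$ on planes containing $\nabla\Om$. Equivalently,
\[
R_{abcd}=S_{ac}g_{bd}+S_{bd}g_{ac}-S_{ad}g_{bc}-S_{bc}g_{ad},\qquad S_{ab}=-\Om^{-1}\nabla_a\nabla_b\Om .
\]
Since $|\nabla\Om|_g^2=-\lambda$, this is exactly the condition for $\Om^{-2}g$ to have constant curvature $\lambda$. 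That gives \eqref{Einstein} and local isometry to de Sitter in one step.

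Second, your claim that the $\Om^n$ coefficient of the candidate vanishes is false for $n=4$. That coefficient is $\frac{1}{4\lambda^2}P_\gamma^2$, whose trace is already nonzero on the round sphere. In that dimension ``$g_{(4)}=0$'' can only mean that the free (Weyl-determined) part of $g_{(4)}$ vanishes. The identification with the prescribed data should therefore go through the vanishing of the bulk Weyl tensor, not through reading off the coefficient.
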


\begin{remark}\label{remarkglobal} Corollary~\ref{coroldS} generally determines the spacetime in a neighbourhood of $\mathscr{I}^+$, however, the characterization becomes global when the initial surface is the unit round sphere $(\Sigma^n,\gamma) \equiv (\mathbb{S}^n, \gamma_{\mathbb{S}^n})$. In this case, the Schouten tensor satisfies $P_{\gamma_{\mathbb{S}^n}} = \frac{1}{2} \gamma_{\mathbb{S}^n}$, so that the expansion \eqref{FGexpdS} becomes
\[
\tilde{g} = \frac{1}{\Omega^2}\left( -\frac{d\Omega^2}{\lambda} + \left(1 + \frac{\Omega^2}{4\lambda}\right)^2 \gamma_{\mathbb{S}^n} \right),
\]
which is smooth and convergent for all $\Omega \in (0,\infty)$. Under the change of coordinate $\Omega = 2\lambda^{1/2}\tan(\eta/2)$, this metric is globally isometric to de Sitter within the entire manifold $\mathbb{S}^n \times (0,\pi)$, covering the region from future conformal infinity $\mathscr{I}^+$ to past conformal infinity $\mathscr{I}^-$.
\end{remark}

\subsection{Killing initial data}

The presence of Killing vectors is reflected in the asymptotic data. Following Friedrich's approach to the asymptotic initial value problem in four dimensions, Paetz completely characterized Killing vectors in terms of asymptotic data \cite{KIDPaetz}. In higher dimensions, the general problem remains open. Nevertheless, the analytic case was addressed in \cite{marspeondata21}, where the following result is proved:

\begin{theorem}\label{theoKIDanal}
Let $(\Sigma^n,\gamma,g_{(n)})$ be asymptotic data for a
$\lambda$-vacuum spacetime $(\tilde M,\tilde g)$ with $\lambda>0$, with zero obstruction tensor (i.e. no logarithmic terms in the Fefferman-Graham expansion.) If $(\tilde M,\tilde g)$ admits a Killing vector $X$, then
$\xi:=X|_{\scri}$ is a CKV of $(\Sigma,\gamma)$
satisfying the Killing initial data (KID) equation
\begin{equation}\label{eqKID}
\mathcal L_\xi g_{(n)}
+\frac{n-2}{n}(\operatorname{div}_\gamma\xi)\,g_{(n)}
=0.
\end{equation}
Conversely, if $(\gamma,g_{(n)})$ is analytic and
$\xi$ is a CKV satisfying the above equation,
then there exists a unique Killing vector $X$ on $(\tilde M,\tilde g)$
such that $X|_{\scri}=\xi$.
\end{theorem}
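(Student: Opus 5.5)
The plan is to treat the two directions separately. The forward direction is a covariance computation in the conformal extension. The converse is a uniqueness argument built on Theorem \ref{theoremavp}.

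\emph{Forward direction.} Let $X$ be a Killing vector of $\tilde g=\Om^{-2}g$. Then $X$ is a conformal Killing vector of $g$, with
\begin{equation}
\mathcal L_X g = 2\psi\, g, \qquad \psi := \frac{X(\Om)}{\Om}.
\end{equation}
I would first show that $X$ and $\psi$ extend smoothly to $\scri^+$ and that $X$ is tangent there. Write the Killing equation for $\tilde g$ in the coordinates $\{\Om,x^A\}$ of \eqref{eqggeodgauge}. Then use \eqref{geodesicgauge}, together with $\mathcal L_X|\nabla\Om|^2_g$ computed from the conformal Killing equation, to obtain an ODE in $\Om$ for the normal component $X^\Om$. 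That ODE forces $X^\Om=O(\Om)$, so $\psi$ is bounded and extends, and $\xi:=X|_{\scri}$ is tangent. Restricting $\mathcal L_X g=2\psi g$ to $\scri$ then gives $\mathcal L_\xi\gamma=2\psi|_{\scri}\,\gamma$. Hence $\xi$ is a CKV with $\dvg_\gamma\xi = n\,\psi|_{\scri}$.

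For the KID equation \eqref{eqKID} I would use the geometric characterization \eqref{geometricgn}, valid when $\gamma$ is conformally flat (and up to a trace term when $n=4$):
\begin{itemize}
\item The $(1,3)$ Weyl tensor is conformally invariant and $X$ is Killing for $\tilde g$, so $\mathcal L_X \mathrm{Weyl}^a{}_{bcd}=0$.
\item Lowering the index with $g$ contributes a factor $2\psi$.
\item Each insertion of $\nabla\Om = g^{-1}(d\Om)$ contributes $(\psi-2\psi)=-\psi$, using $\mathcal L_X d\Om = d(\psi\Om)=\psi\,d\Om+O(\Om)$.
\item The prefactor $\Om^{2-n}$ contributes $(2-n)\psi$.
\end{itemize}
The total weight is $2\psi-2\psi+(2-n)\psi=-(n-2)\psi$, up to terms vanishing in the limit. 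Passing $\Om\to0$ gives $\mathcal L_\xi g_{(n)}=-(n-2)\psi\, g_{(n)}$, which is \eqref{eqKID} after using $\dvg_\gamma\xi = n\,\psi|_{\scri}$.

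For general $\gamma$ with zero obstruction tensor I would argue through the Fefferman--Graham gauge instead. The flow $\varphi_t$ of $X$ maps a geodesic conformal extension adapted to $\gamma$ to another geodesic extension adapted to $\varphi_t^\star\gamma=\omega_t^2\gamma$. One then differentiates in $t$ the transformation law of $g_{(n)}$ under a change of representative. \textbf{I expect this to be the main obstacle:} for $n$ even that law contains inhomogeneous local terms, built from $\gamma$ and $\omega$, on top of the weight $\omega^{2-n}$. One must show that their linearization vanishes when $\omega_t$ comes from a CKV, or equivalently that it is compatible with \eqref{constgneven}. My first attempt would be to check that these local terms are the first variations of $\mathfrak a(\gamma)$ and $\mathfrak b(\gamma)$, which are invariant along the flow because $\xi$ is a conformal symmetry.

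\emph{Converse.} Given a CKV $\xi$ satisfying \eqref{eqKID}, let $\varphi_t$ be its local flow, so $\varphi_t^\star\gamma=\omega_t^2\gamma$. Integrating \eqref{eqKID} in $t$ gives $\varphi_t^\star g_{(n)}=\omega_t^{2-n}g_{(n)}$. Thus the data $(\varphi_t^\star\gamma,\varphi_t^\star g_{(n)})$ lie in the same class \eqref{confdiffeoclass} as $(\gamma,g_{(n)})$. By the uniqueness part of Theorem \ref{theoremavp}, the spacetimes they generate are isometric by an isometry $\Phi_t$ of $\tilde g$ extending $\varphi_t$. Here analyticity lets us identify the spacetimes through convergent expansions and fix the change of geodesic gauge unambiguously. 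Checking $\Phi_{t+s}=\Phi_t\circ\Phi_s$ via uniqueness, $\Phi_t$ is a one-parameter group, and $X:=\frac{d}{dt}\Phi_t|_{t=0}$ is a Killing vector with $X|_{\scri}=\xi$. Uniqueness of $X$ follows because a Killing vector vanishing at $\scri$ has vanishing data in the forward direction, and so vanishes by the same uniqueness argument.
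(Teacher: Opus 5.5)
The paper gives no proof of this theorem; it quotes it from \cite{marspeondata21}, so your proposal has to stand on its own. Much of it does.
\begin{itemize}
\item The normal-component ODE is the computation of Proposition \ref{lemmanonessential}. It gives $\partial_\Om\psi=0$ and $\psi|_{\scri}=\frac1n\dvg_\gamma\xi$, and the $\Om A$ components then control the tangential part of $X$.
\item Your weight count for $\Om^{2-n}\mathrm{Weyl}_g(\nabla\Om,\cdot,\nabla\Om,\cdot)$ is correct. The leftover terms $\Om^{3-n}\mathrm{Weyl}_g(\nabla\psi,\cdot,\nabla\Om,\cdot)$ are $O(\Om)$, because $\mathrm{Weyl}_g=O(\Om^{n-2})$ when $\gamma$ is conformally flat.
\item For $n$ odd the Fefferman--Graham gauge argument closes, since there $g_{(n)}\mapsto\omega^{2-n}g_{(n)}$ exactly.
\item Your converse integrates \eqref{eqKID} to $\varphi_t^\star g_{(n)}=\omega_t^{2-n}g_{(n)}$ and uses uniqueness in Theorem \ref{theoremavp} to build a one-parameter group of isometries. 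This is a reasonable route under the same homogeneity. Analyticity plays no real role in it beyond what is already packaged in that uniqueness statement.
\end{itemize}

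The genuine gap is the even-dimensional case, exactly the step you flagged, and your proposed fix cannot work. The quantities $\mathfrak a(\gamma)$, $\mathfrak b(\gamma)$ are not invariant along a conformal, non-isometric flow: $\varphi_t^\star\mathfrak a(\gamma)=\mathfrak a(\omega_t^2\gamma)$, and $\mathfrak a$ is not conformally covariant. The inhomogeneous terms do not cancel for CKVs.

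Tracing \eqref{eqKID} and using \eqref{constgneven} gives the necessary condition $\xi(\mathfrak a)+(\dvg_\gamma\xi)\,\mathfrak a=0$. Now take $n=4$ and de Sitter in the round gauge of Remark \ref{remarkglobal}, where the obstruction tensor vanishes. There $g_{(4)}=\frac{1}{16\lambda^2}\gamma_{\mathbb S^4}$, so $\mathfrak a=\frac{1}{4\lambda^2}$. Every non-Killing CKV $\xi$ of $\mathbb S^4$ is the boundary value of a de Sitter Killing field, yet
\[
\mathcal L_\xi g_{(4)}+\tfrac12(\dvg_\gamma\xi)\,g_{(4)}=\tfrac{1}{16\lambda^2}(\dvg_\gamma\xi)\,\gamma_{\mathbb S^4}\neq0 .
\]
So with the full Fefferman--Graham coefficient the forward implication already fails for $n=4$, and no argument can close that step as stated.

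For even $n$ you need the tensor in \eqref{eqKID} to transform with pure weight $2-n$. That is false for the full $g_{(4)}$, and your argument does not establish it otherwise. The same caveat hits your converse, since \eqref{confdiffeoclass} is exactly the homogeneous law.

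In the conformally flat case the right tensor is the rescaled electric Weyl tensor \eqref{geometricgn}. For $n=4$ it is proportional to $g_{(4)}-\frac{1}{4\lambda^2}P_\gamma^2$, which vanishes for all de Sitter data by Corollary \ref{coroldS}. It is not merely the trace-free part of $g_{(4)}$: for generic conformally flat de Sitter data the latter equals $\frac{1}{4\lambda^2}(P_\gamma^2)^{\mathrm{tf}}\neq0$ although the Weyl tensor vanishes. Your parenthetical ``up to a trace term when $n=4$'' therefore also needs correcting.

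For non-conformally-flat $\gamma$ with $n$ even, you would have to exhibit a local tensor $L(\gamma)$ such that $g_{(n)}-L(\gamma)$ has weight $2-n$, and prove \eqref{eqKID} for that tensor. Tellingly, your plan never uses the zero-obstruction hypothesis, which only matters for even $n$.
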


\begin{remark}
 The Theorem is expected to hold beyond the analytic class of data, although up to the author's knowledge, no formal proof has yet been established.
\end{remark}

\begin{definition}
  A set $(\Sigma^n,\gamma,g_{(n)},\xi)$, for asymptotic data $(\Sigma^n,\gamma,g_{(n)})$ and a CVK $\xi$ of $\gamma$ satisfying \eqref{eqKID} is called asymptotic Killing initial data (KID).
\end{definition}

For the analysis in Section \ref{secsufficient}, we need to understand how the conformal and diffeomorphism invariance of the asymptotic problem is reflected in the asymptotic KID. As mentioned in the discussion before equation \eqref{confdiffeoclass}, this is a hard question in general, but it is clear for conformally flat $(\Sigma^n,\gamma)$. Thus, we restrict the following arguments to this case.

Consider the conformal group of a conformally flat $(\Sigma^n,\gamma)$
\begin{equation}
 \mathrm{Conf}(\Sigma^n)=\{\varphi\in\mathrm{Diff}(\Sigma^n)\tq \varphi(\gamma)=\omega^2\gamma,\, \omega\in\mathcal{C}^\infty(\Sigma^n)\}.
\end{equation}
Writing the asymptotic KID as $(\Sigma,\gamma,g_{(n)},\varphi_\star(\xi'))$, where $\xi' = \varphi^{-1}_\star (\xi)$, we obtain the equivalence of data
\begin{equation}\label{equivKID}
 (\Sigma,\gamma,g_{(n)},\varphi_\star(\xi')) \equiv (\Sigma,\varphi^\star(\gamma),\varphi^\star(g_{(n)}),\varphi^\star(\varphi_\star(\xi'))) = (\Sigma,\omega^2\gamma,\omega^{2-n} g'_{(n)},\xi') \equiv (\Sigma,\gamma, g'_{(n)},\xi').
\end{equation}
The first equivalence is diffeomorphism covariance and the last one follows from equivalence \eqref{confdiffeoclass} and conformal invariance of \eqref{eqKID}.

Equivalence \eqref{equivKID} has particularly strong consequences in the de Sitter case
\begin{lemma}\label{lemmaequivKIDdS}
Let $(\Sigma^n,\gamma,g_{(n)}=0,\xi)$ and $(\Sigma^n,\gamma,g_{(n)}=0,\xi')$ be asymptotic KIDs for de Sitter spacetime such that $\xi' = \varphi_\star(\xi)$ for some $\varphi \in \conf{\Sigma^n}$. Then, there exist
spacetimes locally isometric to de Sitter, $(\tilde M^{n+1},\tilde g_{dS})$ and $(\tilde M'^{n+1},\tilde g'_{dS})$, equipped with Killings $X$ and $X'$ respectively, and an isometry $\phi :  (\tilde M^{n+1},\tilde g_{dS}) \to (\tilde M'^{n+1},\tilde g'_{dS})$ such that $\phi_\star(X) = X'$.

\end{lemma}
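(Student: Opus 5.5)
The plan is to realize the isometry $\phi$ as the bulk extension of $\varphi$, built from Theorem \ref{theoremavp}, Corollary \ref{coroldS} and Theorem \ref{theoKIDanal}, with the chain of equivalences \eqref{equivKID} as a guide. First, take $(\tilde M^{n+1},\tilde g_{dS})$ to be the spacetime $\Sigma^n\times(0,\ell)$ determined by the data $(\Sigma^n,\gamma,0)$ through Corollary \ref{coroldS}. Since $\gamma$ is conformally flat, the obstruction tensor vanishes, and the data $g_{(n)}=0$ together with the expansion \eqref{FGexpdS} are analytic in any conformally flat chart. The KID equation \eqref{eqKID} holds trivially for any CKV when $g_{(n)}=0$. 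Hence Theorem \ref{theoKIDanal} gives a unique Killing vector $X$ with $X|_{\scri}=\xi$. Applying the same construction to $\xi'$, on a possibly different neighbourhood, gives $(\tilde M'^{n+1},\tilde g'_{dS})$ and a Killing vector $X'$.

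Next, I will build $\phi$ as a lift of $\varphi$. Write $\varphi^\star\gamma=\omega^2\gamma$. The metric $\varphi^\star\gamma$ is conformally flat and induces data $(\Sigma,\omega^2\gamma,0)$. By diffeomorphism covariance, pulling back the geodesic-gauge metric \eqref{FGexpdS} along $\varphi\times\mathrm{id}$ produces the de Sitter solution with data $(\varphi^\star\gamma,\varphi^\star g_{(n)}=0)$. By the conformal gauge invariance encoded in \eqref{confdiffeoclass}, with $\omega^{2-n}\cdot 0=0$, this solution is the de Sitter solution with data $(\gamma,0)$ written in a different geodesic conformal extension. Concretely, the two conformal factors $\Om$ and $\Om'$ with $\Om'/\Om\to\omega$ at $\scri$ both solve \eqref{geodesicgauge}, and the uniqueness part of Theorem \ref{theoremavp} identifies the two physical metrics $\tilde g$ after a diffeomorphism $\psi$ that fixes $\scri$ pointwise. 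Composing these maps gives an isometry $\phi$ between neighbourhoods of $\scri^+$ in the two spacetimes whose boundary value is $\varphi$. One must also check that $\phi$ is orientation-preserving in time, so that $\scri^+$ is sent to $\scri^+$; this holds because both conformal factors are positive in the bulk.

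Finally, $\phi_\star X$ is a Killing vector of $\tilde g'_{dS}$ because $\phi$ is an isometry. Its extension to $\scri$ is $\varphi_\star(\xi)=\xi'$, since $\phi$ extends smoothly to the boundary with restriction $\varphi$. By the uniqueness clause of Theorem \ref{theoKIDanal}, $\phi_\star X=X'$.

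I expect the main obstacle to be the change of conformal representative: showing that data $(\omega^2\gamma,0)$ and $(\gamma,0)$ produce isometric bulks via a map that extends smoothly to $\scri$ as the identity. My first approach is to solve \eqref{geodesicgauge} for a new conformal factor $\Om'=\Om\,\theta$ with $\theta|_{\scri}=\omega$. This is a non-characteristic first-order PDE, so it has a local solution near $\scri$. Then I read off the new Fefferman–Graham coefficients: the rescaling behaviour \eqref{confdiffeoclass} of $g_{(n)}$, via the Weyl characterization \eqref{geometricgn}, guarantees that the coefficient stays zero. As a fallback, I can use the maximal symmetry of de Sitter directly. In the global picture of Remark \ref{remarkglobal} the conformal group of $\S^n$ is $O^+(1,n+1)$, which acts by isometries on the hyperboloid, so $\phi$ can be written explicitly there and restricted to local charts.
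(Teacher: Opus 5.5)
Your proposal is correct and takes essentially the paper's route. The paper justifies the lemma through the chain of equivalences \eqref{equivKID}: diffeomorphism covariance, then the conformal gauge equivalence \eqref{confdiffeoclass}, under which $g_{(n)}=0$ stays zero, together with the remark that de Sitter bulk isometries correspond to conformal transformations of $\scri$; you make explicit the bulk isometry $\phi$ restricting to $\varphi$ and the identification $\phi_\star X = X'$ via uniqueness of the Killing extension.

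One minor correction: a conformally flat $\gamma$ need not itself be analytic, so Theorem \ref{theoKIDanal} should be applied in the flat representative (or replaced by the explicit formula \eqref{killext}), a step the paper also leaves implicit.
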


\begin{remark}
 The result is global provided that $(\Sigma^n,\gamma) \equiv (\mathbb{S}^n,\gamma_{\mathbb{S}^n})$ (cf. Remark \ref{remarkglobal}) and $\xi, \xi'$ are complete CKVs of $(\mathbb{S}^n,\gamma_{\mathbb{S}^n})$.
\end{remark}

Lemma \ref{lemmaequivKIDdS} can also be regarded as a manifestation of the maximal symmetry of de Sitter spacetime. Indeed, bulk isometries of de Sitter are in one-to-one correspondence with conformal transformations of $\scri$. Consequently, all asymptotic KIDs belonging to the same conformal class determine equivalent spacetimes with Killing vectors, up to isometry. This formulation will be useful in Section \ref{secsufficient}, where Killing horizons are constructed directly from asymptotic data.

 \section{Intersection of Killing horizons with $\scri^+$}\label{secnecessary}

 In this section we consider a $(\Lambda>0)$-vacuum spacetime $(\tilde M^{n+1}, \tilde g)$, admitting a conformal extension $(M^{n+1},  g)$ as well as a Killing vector $X$ with a Killing horizon $\mathcal{H}_X$ such that $\closure{\mathcal{H}_X}\cap \scri^+ \neq \emptyset$. We find that this condition provides a local characterization of de Sitter spacetime near the intersection points. This has already been analyzed in the four dimensional case in \cite{kaminskisymmetries}. Here, we extend the result to arbitrary dimensions and include some useful lemmas.

The Killing vectors of spacetimes admitting a conformal extension in turn extend to $\scri^{+}$ as a tangential CKV $\xi = X\mid_{\scri^+}$ of the induced geometry of $\scri^+$ (cf.\cite{marspeondata21}, \cite{KIDPaetz}). Let us denote the set of fixed points of $\xi$ by
\begin{equation}
 \zeroset{\xi} = \{ p \in \scri^+  ~ \slash ~ \xi_ p = 0\}.
\end{equation}
 Then, at the intersection $\closure{\mathcal H_X} \cap \scri^{+}$, $X$ is both null and spacelike, so it is clear that
\begin{equation}
 \closure{\mathcal H_X} \cap \scri^{+} \subseteq \zeroset{\xi}.
\end{equation}
We conclude the following necessary condition for the existence of Killing horizons:
\begin{lemma}\label{lemmazeros1}
 Let $(\tilde M^{n+1},\tilde g)$ be a $(\Lambda>0)$-vacuum spacetime, admitting a conformal extension and a Killing vector field $X$, with $\xi = X\mid_{\scri^+}$. Then $X$ has a Killing horizon such that $ \closure{\mathcal H_X} \cap \scri^{+} \neq \emptyset$ only if the set of fixed points of $\xi$ is non-empty.
\end{lemma}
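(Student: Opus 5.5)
The plan is to show that any point $p \in \closure{\mathcal H_X} \cap \scri^+$ is a zero of $\xi$. This follows from the inclusion $\closure{\mathcal H_X} \cap \scri^{+} \subseteq \zeroset{\xi}$. So I will make that inclusion rigorous and then conclude by contraposition.

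First, fix $p \in \closure{\mathcal H_X} \cap \scri^+$ and a sequence $p_k \in \mathcal H_X$ with $p_k \to p$ in the conformal extension $M^{n+1}$. On $\mathcal H_X$ we have $\tilde g(X,X)=0$. Since $g = \Om^2 \tilde g$, this gives $g(X,X)(p_k) = \Om^2(p_k)\,\tilde g(X,X)(p_k) = 0$.

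Next, by the extension result quoted above (\cite{marspeondata21}, \cite{KIDPaetz}), $X$ extends smoothly to $M^{n+1}$ and is tangent to $\scri^+$ there. Continuity of $g(X,X)$ then gives $g(X,X)(p)=0$. Because $X_p=\xi_p$ is tangent to $\scri^+$, this means $\gamma(\xi,\xi)(p) = g(X,X)(p) = 0$.

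Finally, $\scri^+$ is spacelike for $g$. This follows from \eqref{quasiEinstein}, which forces $|\nabla\Om|^2_g = -\lambda < 0$ on $\scri$, so the induced metric $\gamma$ is Riemannian. Positive-definiteness of $\gamma$ then forces $\xi_p = 0$, i.e.\ $p \in \zeroset{\xi}$. Hence if $\closure{\mathcal H_X}\cap\scri^+\neq\emptyset$, then $\zeroset{\xi}\neq\emptyset$.

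The main, if mild, obstacle is justifying that $X$ extends continuously up to $\scri^+$ and is tangent to it. I will take this from the cited extension result rather than reprove it. The remaining steps are elementary continuity arguments.
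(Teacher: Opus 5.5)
Your proposal is correct and follows the paper's argument. At a point of $\closure{\mathcal H_X}\cap\scri^+$, the extended vector $X$ is null by continuity from the horizon, and it is also tangent to the spacelike hypersurface $\scri^+$, so it must vanish there; you simply make explicit the continuity step and the Riemannian character of $\gamma$ (via $|\nabla\Om|_g^2=-\lambda$ on $\scri$ from \eqref{quasiEinstein}), which the paper leaves implicit.
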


Therefore,  the zero set $\zeroset{\xi}$ is going to play a central role in the asymptotic study of Killing horizons. Interestingly, in some cases, this set rigidly determines the conformal structure of the manifold as a consequence of the so-called Obata-Ferrand theorem (cf. \cite{obata,obata0,ferrand,schoen}). The theorem states that if the conformal group
\[
\mathrm{Conf}(\Sigma^n)=\{\varphi\in\mathrm{Diff}(\Sigma^n)\tq \varphi^\star(\gamma)=\omega^2\gamma,\, \omega\in\mathcal{C}^\infty(\Sigma^n)\}
\]
of a Riemannian $n$-manifold $(\Sigma^n,\gamma)$ is {\it essential}, namely, it does not act by isometries on any metric in the conformal class $[\gamma]$, then $(\Sigma^n,\gamma)$ must be conformal to the sphere $\mathbb{S}^n$ if it is compact, or to Euclidean space $\mathbb{E}^n$ if it is non-compact.

To connect this result with $\zeroset{\xi}$, it is convenient to formulate an infinitesimal version of the theorem in terms of CKVs. Such a formulation can be found, for instance, in \cite{frances}.
Moreover,
the infinitesimal point of view naturally admits a local version. To state it, we first introduce the following definition:

\begin{definition}\label{defessential}
Let $\xi$ be a CKV of a Riemannian $n$-manifold $(\Sigma^n,\gamma)$. A point  $p\in \Sigma^n$ is an \emph{essential point} of $\xi$ (or, equivalently, $\xi$ is said to be \emph{essential} at $p$) if \emph{there is no metric} $\gamma'$ conformally equivalent to $\gamma$ in a neighbourhood of $p$ with respect to which $\xi$ is Killing. The set of all such points is called the \emph{essential set} of $\xi$.
\end{definition}

The essential set is relevant here because of its relation with the zero set of $\xi$. Indeed, the essential set is necessarily contained in $\zeroset{\xi}$ because, near any point $p$ where $\xi_p\neq 0$, one can always locally solve the transport (ODE) equation for $\omega$
\begin{equation}
 \xi(\omega)   + \frac{\omega}{n}  \dvg_\gamma \xi  = 0 \quad \iff \quad \mathcal{L}_\xi (\omega^2 \gamma) = 0.
\end{equation}
Namely, $p$ is non-essential if $\xi_p \neq 0$.
Furthermore, it is shown in \cite{belgunmoroianuornea} that the essential set of a CKV is a subset of its \emph{isolated zeros}. Combining this with the local results of \cite{frances}, one obtains the following theorem.

\begin{theorem}{\cite{frances,belgunmoroianuornea}\label{theoisolatedzeros}}
Let $\xi$ be a CKV on a Riemannian $n$-manifold $(\Sigma^n, \gamma)$ with zero set $\mathcal{Z}(\xi)$. Then, the essential set of $\xi$ is a subset of $\mathcal{Z}(\xi)$. For any essential zero $p \in \mathcal{Z}(\xi)$, the following hold:
\begin{enumerate}
    \item The point $p$ is isolated in $\mathcal{Z}(\xi)$.
    \item The metric $\gamma$ is conformally flat in a neighbourhood $U$ of $p$.
\end{enumerate}
Moreover, if $\xi$ is complete, then $(\Sigma^n, \gamma)$ is globally conformally flat (i.e., $U = \Sigma^n$).
\end{theorem}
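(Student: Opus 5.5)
This theorem is cited from \cite{frances,belgunmoroianuornea}, so my plan is to assemble it from those sources rather than derive it from scratch, in three steps.

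The inclusion of the essential set in $\zeroset{\xi}$ is the elementary part and I will prove it directly. Near any $p$ with $\xi_p\neq 0$ I straighten $\xi$ to a coordinate vector field $\partial_{x^1}$ on a flow box. I then solve the linear transport equation $\xi(\omega)+\frac{\omega}{n}\dvg_\gamma\xi=0$ along the flow lines with $\omega=1$ on a transversal hypersurface. This gives a positive $\omega$ with $\mathcal L_\xi(\omega^2\gamma)=0$, because for a CKV one has $\mathcal L_\xi\gamma=\frac{2}{n}(\dvg_\gamma\xi)\gamma$. Hence $\xi$ is Killing for $\omega^2\gamma$ near $p$, so $p$ is non-essential.

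For isolatedness I will follow the argument of \cite{belgunmoroianuornea}. Suppose $p$ is a zero that is not isolated. Then $p$ lies on a positive-dimensional component of $\zeroset{\xi}$. I linearize at $p$: in a local conformal frame, the $1$-jet of $\xi$ at $p$ lies in $\mathfrak{co}(n)=\R\oplus\mathfrak{so}(n)$, with dilation part $\tfrac{1}{n}\dvg_\gamma\xi(p)$. A non-isolated zero forces this dilation part to vanish, since a nonzero homothety factor makes $p$ a non-degenerate zero. The second-order (inversion-type) part is also controlled, because it would have to annihilate the tangent directions of the zero set. From this I will build a metric in the conformal class that is invariant under the local flow near $p$, namely by averaging over the relatively compact closure of the flow in the conformal group of the jets. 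This shows $p$ is non-essential.

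The conformal flatness near an essential zero is the hard step, and I will take it from Frances' local Ferrand--Obata theorem \cite{frances}. The strategy there is to use holonomy-type sequences of the flow $\varphi_t$ of $\xi$. At an essential zero the linear part of $\varphi_t$ at $p$ is unbounded in $CO(n)$, or has a nontrivial unipotent/translation component in the Cartan-geometry description. Pushing the Weyl tensor (for $n\ge4$) or the Cotton tensor (for $n=3$) along these sequences, their conformal weights force the tensor to decay to zero near $p$ while also being invariant under the flow. So the tensor vanishes on a neighbourhood $U$ of $p$. This is the main obstacle: one must handle the degenerate case where the dilation part vanishes but the zero is still essential, which requires the Cartan connection machinery rather than a naive scaling argument.

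Finally, for the global statement I assume $\xi$ is complete. The set where $\gamma$ is locally conformally flat is open, invariant under the flow, and contains $U$. If $\Sigma$ is connected, I will combine the Ferrand--Obata classification with analytic continuation of the flat conformal structure along flow lines accumulating at $p$. This shows the invariant open set is also closed, hence all of $\Sigma^n$.
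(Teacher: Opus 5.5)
\textbf{Main gap: the global statement.} Your first step is the paper's own transport-equation argument, and for local conformal flatness you defer to \cite{frances} as the paper does. The global statement, however, is not established by your argument.

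Your open--closed argument on the set $W$ of points with a conformally flat neighbourhood does not close. Since $W$, and hence $\partial W$, is invariant under the flow, the orbit of a point $q\in\partial W$ never enters $U$. So propagating flatness along flow lines ``accumulating at $p$'' can never reach $q$. Vanishing of the Weyl (or Cotton) tensor at $q$, which you get by continuity, does not give a conformally flat neighbourhood of $q$. ``Analytic continuation'' of conformal flatness is not available for a merely smooth $\gamma$ either.

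The missing idea is that completeness turns $\xi$ into a group. The flow $\{\varphi_t\}$ of a complete $\xi$ is a one-parameter subgroup of $\mathrm{Conf}(\Sigma^n)$. Suppose $\mathrm{Conf}(\Sigma^n)$ acted by isometries of some $\gamma'\in[\gamma]$ on all of $\Sigma^n$. Then $\mathcal{L}_\xi\gamma'=0$, in particular near $p$, contradicting essentiality at $p$. So $\mathrm{Conf}(\Sigma^n)$ is essential, and the Ferrand--Obata theorem makes the connected manifold $(\Sigma^n,\gamma)$ conformal to $\mathbb{S}^n$ or $\mathbb{E}^n$. This is the infinitesimal Ferrand--Obata statement the paper relies on, and it needs no local input beyond the existence of one essential point.

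\textbf{Secondary gap: your reconstruction of the isolatedness step.} Expanding $\xi$ to second order in normal coordinates along a curve of zeros through $p$ does make your jet claims precise. It gives $\dvg_\gamma\xi(p)=0$ and $D(\dvg_\gamma\xi)(p)\in\mathrm{Im}\,D\xi(p)$, i.e.\ elliptic isotropy. But compactness of the $2$-jets $j^2_p\varphi_t$ does not by itself produce a compact group of conformal maps acting on a fixed neighbourhood of $p$ over which to average. For that you must know that the local flow is conjugate to its isotropy, e.g.\ through the exponential map of the normal conformal Cartan connection, which is the same machinery as in \cite{frances}. Moreover, a non-isolated zero does not a priori lie on a smooth curve of zeros.

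Since you import local conformal flatness anyway, it is cleaner to derive isolatedness from it. On a small conformally flat ball around $p$, Liouville's theorem makes $\xi$ the restriction of a CKV of $\mathbb{S}^n$. If that CKV lies in a compact subalgebra, it is Killing for a round metric and $p$ is inessential. Otherwise its zeros are isolated: by Lemma \ref{zerosetxican} and Corollary \ref{corozerosetxican}, non-isolated zeros occur only for rotation-type fields.
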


\begin{remark}\label{remarkessentialzeros}
 A classical result due to Kobayashi \cite{kobayashi} shows that the zero set of a Killing vector field has even codimension. If a CKV $\xi$ is non-essential at an isolated zero $p$, then by definition, it is locally Killing with respect to some conformally equivalent metric. Since the codimension of an isolated zero is $n$, this  can only occur when $n$ is even. Thus, in odd dimensions, the essential set of $\xi$ coincides precisely with its set of isolated zeros.
\end{remark}

We can now use Theorem \ref{theoisolatedzeros} to sharpen Lemma \ref{lemmazeros1}. First, we prove the following result:

\begin{proposition}\label{lemmanonessential}
 Let $(\tilde M^{n+1}, \tilde g)$ be a $(\Lambda>0)$-vacuum spacetime admitting a conformal extension and a  Killing vector $X$, with  $\xi =X\mid_{\scri^+}$. If $\xi$ is non-essential in some neighbourhood $U \subseteq \scri^+$, then $X$ is a spacelike Killing in a neighbourhood $V\subseteq M$ of $U$, therefore, it has no horizons in $V$.
 \end{proposition}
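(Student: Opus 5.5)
The plan is to exploit the conformal gauge freedom of the extension. If $\xi$ is non-essential on $U$, then by Definition \ref{defessential} (possibly after shrinking $U$ and covering it) there is a representative $\gamma' = \omega^2\gamma$ on $U$ for which $\xi$ is a Killing vector of $\gamma'$. I will then choose the geodesic conformal extension \eqref{geodesicgauge} adapted to $\gamma'$, so that near $U$ the physical metric reads $\tilde g = \Om^{-2}\bigl(-\lambda^{-1}d\Om^2 + g_\Om\bigr)$ with $g_{(0)}=\gamma'$. Since $X$ is a Killing vector of $\tilde g$ that extends to $\scri^+$, the key claim is that in this gauge $X(\Om) = 0$, i.e. $X$ is tangent to the level sets of $\Om$. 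The reason is that $X$ is a conformal Killing vector of $g = \Om^2 \tilde g$ with $\mathcal L_X g = 2\Om^{-1}X(\Om)\, g$, and its restriction $\xi$ to $\scri^+$ is Killing for $\gamma'$, so the conformal factor $\Om^{-1}X(\Om)$ vanishes at $\scri^+$. Then I will argue, using the geodesic gauge condition $|\nabla\Om|^2_g=-\lambda$ (which is preserved by the flow of $X$ because the flow preserves $\tilde g$ and maps the geodesic extension for $\gamma'$ to the geodesic extension for $\varphi_t^\star\gamma'=\gamma'$), together with uniqueness of the geodesic conformal factor given the boundary representative, that the flow of $X$ preserves $\Om$ in a neighbourhood. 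Hence $X(\Om)=0$ identically near $U$.

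Once $X(\Om)=0$, $X$ is tangent to the hypersurfaces $\{\Om=\text{const}\}$, which are spacelike since $\nabla\Om$ is timelike by \eqref{geodesicgauge}. So $\tilde g(X,X) = \Om^{-2} g_\Om(X,X) \geq 0$, with equality only where $X=0$. Moreover $X$ is then a Killing vector of each $g_\Om$ (because $\mathcal L_X g = 0$ and $X$ commutes with $\partial_\Om$), so $X$ is a spacelike (or vanishing) Killing vector on a neighbourhood $V=U\times[0,\ell)$. In particular there is no point of $V$ where $X$ is null and non-zero, so by the definition in Section \ref{secdefHx} no Killing horizon meets $V$.

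The main obstacle is the rigidity step showing that $X$ preserves the geodesic conformal factor, i.e. that the one-parameter group of $X$ maps the geodesic gauge of $\gamma'$ to itself. I would handle it by the uniqueness of solutions of the eikonal-type equation \eqref{geodesicgauge} with prescribed boundary metric: the function $\Om\circ\varphi_t$ is again a defining function satisfying $|\nabla(\Om\circ\varphi_t)|^2_{(\Om\circ\varphi_t)^2\tilde g} = -\lambda$ and inducing $\varphi_t^\star\gamma' = \gamma'$ on $\scri^+$, and the geodesic defining function is uniquely determined near $\scri^+$ by its boundary representative (the standard Graham--Lee type argument, a non-characteristic first order PDE). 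This gives $\Om\circ\varphi_t=\Om$ and closes the argument; one must only ensure the neighbourhood $V$ is invariant enough under the local flow, which follows by shrinking $U$ to a relatively compact set.
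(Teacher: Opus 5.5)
Your proposal is correct, but it obtains the key fact $X(\Omega)=0$ by a different mechanism than the paper.

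The paper's route is computational. It contracts $\mathcal{L}_X g=2\phi g$, with $\phi=\dvg_g X/(n+1)$, twice with $\nabla\Omega$. Using that $\nabla\Omega$ is geodesic in the gauge $|\nabla\Omega|^2_g=-\lambda$, together with $X(\Omega)=\Omega\phi$, it derives the transport equation $\partial_\Omega\phi=0$ along the normal geodesics. Hence $\phi$ equals its boundary value $\frac1n\dvg_\gamma\xi$, which vanishes for the representative in which $\xi$ is Killing.

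Your route is by naturality. $\Omega$ is the unique geodesic defining function attached to the pair $(\tilde g,\gamma')$. Both are invariant under the flow $\varphi_t$ of $X$, which is tangent to $\scri^+$: indeed $(\Omega\circ\varphi_t)^2\tilde g=\varphi_t^\star g$ has boundary metric $\varphi_t^\star\gamma'=\gamma'$. Uniqueness for the non-characteristic first-order equation for $w$ in $\Omega=e^w\Omega_0$, with $w|_{\scri^+}$ fixed, then gives $\Omega\circ\varphi_t=\Omega$.

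What each buys:
\begin{itemize}
\item Your argument is computation-free and conceptually transparent. It does lean on that uniqueness lemma and on some local-flow bookkeeping. In your argument, the preliminary remark that $\Omega^{-1}X(\Omega)$ vanishes on $\scri^+$ is superfluous, whereas in the paper it is the indispensable initial datum.
\item The paper's computation yields more. It shows $X(\Omega)=\frac{\Omega}{n}\dvg_\gamma\xi$ in the geodesic gauge of \emph{any} representative. This same identity reappears in the de Sitter extension \eqref{killext} and in Section~\ref{secgrav}.
\end{itemize}

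Two points where you are slightly more careful than the paper. First, you patch local representatives, whereas the paper uses a single representative on all of $U$. Second, you conclude ``spacelike or vanishing'', whereas the paper calls $X$ spacelike even though it may vanish on a codimension-two set in $V$ (e.g.\ for a rotation).
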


 \begin{proof}
  Consider a conformal extension in the geodesic conformal gauge \eqref{geodesicgauge}.   Denoting $\phi = \dvg_g X/((n+1)$, the conformal Killing equation for $X$ reads
   \begin{equation}
    \mathcal{L}_X g = 2 \phi g.
   \end{equation}
   Contracting this equation twice with $\nabla^a\Om $, the RHS is simply $-2 \lambda \phi$ and the LHS yields
   \begin{align}
    \nabla^a \Om \nabla^b  \Om \mathcal L_X g_{ab} = - g_{ab}\mathcal{L}_X (\nabla^a \Om \nabla^b\Om) = -2 \nabla_a \Om (X^b \nabla_b \nabla^a \Om - \nabla^b \Om \nabla_b X^a ) = 2 \nabla^b \Om \nabla_b (\nabla_a \Om X^a),   \end{align}
    where in the last equality we have used the fact that $\nabla^a \Om$ is a geodesic field, thus $\nabla^a \Om \nabla_a \nabla_b \Om = 0$. In coordinates $\{\Omega,x^A\}$ adapted to the level sets of $\Omega$, $\nabla^a \Om = -\lambda \partial_\Om$ and denoting $\nabla_a \Om X^a = X(\Om)$, we have
    \begin{align}
    \nabla^a \Om \nabla^b  \Om \mathcal L_X g_{ab} = - 2 \lambda \partial_\Om X(\Om). \end{align}
Now observe that since $X$ is a Killing vector of the metric $\tilde g$, the following identity holds
\begin{equation}\label{XOmdivX}
 \mathcal{L}_X(\tilde g) = \mathcal{L}_X(\Om^{-2} g) = -\frac{2 X(\Om)}{\Om^3} g + \frac{1}{\Om^2} \mathcal{L}_X(g) = 0,\quad  \iff \quad X(\Om) = \frac{\Om}{2} \mathcal{L}_X(g) = \Om \phi.
\end{equation}
Therefore the LHS of the $\Om-\Om$ component of the conformal Killing equation is
\begin{align}
  - 2 \lambda \partial_\Om X(\Om)= - 2 \lambda \phi - 2 \lambda \Om \partial_\Om \phi,    \end{align}
and equating to the RHS it reduces to
\begin{equation}\label{pdephi}
 2 \lambda \Om \partial_\Om \phi = 0 \iff  \partial_\Om \phi = 0.
\end{equation}
The initial data for this equation is determined by (see the proof in Section \ref{secgrav},  equation \eqref{divXdivxi})
\begin{equation}\label{dataphi}
 \phi\mid_\scri =  \frac{1}{n+1} \dvg_g X\mid_\scri = \frac{1}{n} \dvg_\gamma \xi.
\end{equation}
Now recall from the discussion in Section \ref{secdata} that we still have freedom to choose a geodesic conformal extension for each representative in $[\gamma]$, the conformal structure at $\scri^+$. Since $\xi$ is non-essential in $U$, we choose a conformal extension inducing a metric $\gamma \in [\gamma]$ at $\scri^+$ for which $\xi$ is a Killing in $U$. Hence by \eqref{dataphi} we have $\phi\mid_U = 0$ and by \eqref{pdephi} immediately follows $\phi = 0$ in a neighbourhood $V\subset M^{n+1}$ of $U$. This, in turn, implies  by \eqref{XOmdivX} that $X(\Om) = 0$ in $V$ so that $X$ is Killing of $g$ tangential to the $\{ \Om = const.\}$ submanifolds. The Proposition now follows from the fact that the $\{ \Om = const.\}$ submanifolds are spacelike near $\scri^+$. Thus, reducing $V$ if necessary,  $X$ is spacelike in $V$ and therefore it cannot have Killing horizons.
 \end{proof}

The combination of Proposition  \ref{lemmanonessential} and Theorem \ref{theoisolatedzeros} has strong consequences:
\begin{proposition}\label{propisolatedzeros}
 Let $(\tilde M^{n+1},\tilde g)$ be a $(\Lambda>0)$-vacuum spacetime, admitting a conformal extension and a Killing vector field $X$, with $\xi = X\mid_{\scri^+}$. Then $X$ has a Killing horizon $\mathcal{H}_X$ such that $\closure{\mathcal{H}_X} \cap \scri^+ \neq \emptyset $ only if $\xi$ is essential at $\closure{\mathcal{H}_X} \cap \scri^+$. Namely,  $\closure{\mathcal{H}_X} \cap \scri^+$ is a subset of the isolated zeros of $\xi$ and  $\gamma$ is conformally flat in a neighbourhood $U$ of each one of these points.

 Moreover, if $\xi$ is complete, then $U = \scri^+$.
 \end{proposition}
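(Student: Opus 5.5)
The argument is by contraposition at each point of the intersection. Fix $p\in\closure{\mathcal{H}_X}\cap\scri^+$. By Lemma \ref{lemmazeros1} and the discussion preceding it, $\xi_p=0$, so $p\in\zeroset{\xi}$. Suppose, for contradiction, that $p$ is not an essential point of $\xi$. By Definition \ref{defessential}, there is a neighbourhood $U\subseteq\scri^+$ of $p$ and a metric $\gamma'=\omega^2\gamma$ on $U$ for which $\xi$ is Killing. Hence $\xi$ is non-essential in the whole of $U$, and Proposition \ref{lemmanonessential} applies. It gives a neighbourhood $V\subseteq M$ of $U$ in which $X$ is spacelike, and so $V\cap\mathcal{H}_X=\emptyset$.

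We now need that $V$ contains points of $\mathcal{H}_X$, which will be the contradiction. Since $p\in\closure{\mathcal{H}_X}$ and $V$ is an open neighbourhood of $p$ in $M$, every neighbourhood of $p$ meets $\mathcal{H}_X$, so $V\cap\mathcal{H}_X\neq\emptyset$. The delicate point is that the $V$ produced in Proposition \ref{lemmanonessential} must be a genuine open set of $M$ containing $p$. It is obtained as a collar $U\times[0,\ell)$ in the geodesic gauge, possibly shrunk so that the level sets $\{\Om=\mathrm{const}\}$ stay spacelike; this is open in $M$ and contains $U\ni p$. Moreover, $X$ spacelike on $V\cap\tilde M$ contradicts $\tilde g(X,X)=0$ on $\mathcal{H}_X\subset\tilde M$. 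One small subtlety: the geodesic extension used in the proof of Proposition \ref{lemmanonessential} is built from the locally chosen representative $\gamma'$. Changing the conformal extension changes only $\Om$ and $g$, not $\tilde M$, $\tilde g$ or $X$. The statement ``$X$ is $\tilde g$-spacelike near $p$'' is therefore independent of the choice, and the contradiction stands. Thus $p$ is essential.

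The remaining claims then follow from Theorem \ref{theoisolatedzeros} applied to $\xi$ on $(\scri^+,\gamma)$. Every essential point is an isolated zero, so $\closure{\mathcal{H}_X}\cap\scri^+$ is contained in the set of isolated zeros of $\xi$. Moreover, $\gamma$ is conformally flat on a neighbourhood $U$ of each such point, and if $\xi$ is complete the last assertion of Theorem \ref{theoisolatedzeros} gives global conformal flatness, i.e.\ $U=\scri^+$.

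The main obstacle I expect is the topological step in the second paragraph. It requires that ``no horizon in $V$'' really excludes accumulation of $\mathcal{H}_X$ at $p$, i.e.\ that the neighbourhood $V$ from Proposition \ref{lemmanonessential} can be taken as a full neighbourhood of $p$ in $M$ and not merely of $U$ inside $\scri^+$. I would make this explicit by writing $V$ as a collar in the adapted coordinates $\{\Om,x^A\}$, and by noting that the spacelike character of $X$ for $\Om>0$ small follows from $X(\Om)=0$ together with $g_\Om$ being Riemannian near $\Om=0$.
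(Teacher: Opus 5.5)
Your proposal is correct and follows the paper's proof: if $p$ were non-essential, Proposition \ref{lemmanonessential} would make $X$ spacelike (tangent to the spacelike level sets of $\Omega$) on a neighbourhood of $p$ in $M$, contradicting $p\in\closure{\mathcal{H}_X}$ with $X$ null and nonvanishing on $\mathcal{H}_X$. Theorem \ref{theoisolatedzeros} then yields isolatedness, local conformal flatness and the complete case. The paper compresses your topological step (that $V$ is a genuine neighbourhood of $p$ in $M$, independent of the choice of conformal extension) into the word ``immediately'', so your additional detail is a faithful expansion rather than a different route.
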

 \begin{proof}
 By Proposition \ref{lemmanonessential}, it follows immediately that only if $\xi$ is essential, $X$ can have a Killing horizon $\mathcal{H}_X$ such that $\closure{\mathcal{H}_X} \cap \scri^+ \neq \emptyset $. The intersection point $p$ must be in the zero set of $\xi$ and thus, must be an isolated zero by Theorem \ref{theoisolatedzeros}. Also by this Theorem, $\gamma$ is  conformally flat in a neighbourhood $U$ of $p$  and $U = \scri^+$ if $\xi$ is complete.
\end{proof}

We now make use of Theorem \ref{theoKIDanal}. Under the hypotheses of Proposition \ref{propisolatedzeros}, let $p \in \overline{\mathcal{H}_X}\cap \scri^+$ and $U$ a conformally flat neighbourhood of $p$. Using the conformal Killing equation we can write
\begin{equation}\label{divxiliexi}
 \mathcal{L}_\xi(\gamma)(\xi,\xi) =2\mathcal{L}_\xi(|\xi|^2_\gamma) = \frac{2|\xi|_\gamma^2}{n} \dvg_\gamma\, \xi\quad \iff \quad \dvg_\gamma\, \xi = 2 n \frac{\mathcal{L}_\xi(|\xi|_\gamma)}{|\xi|_\gamma}.
\end{equation}
By Theorem \ref{theoKIDanal}, the asymptotic data satisfies the KID equation \eqref{eqKID}, which in combination with \eqref{divxiliexi}, can be rewriten as
\begin{equation}\label{transportKID}
\mathcal L_\xi g_{(n)}
+2(n-2)\frac{\mathcal{L}_\xi(|\xi|_\gamma)}{|\xi|_\gamma}\,g_{(n)}
=0,\quad \iff \quad \mathcal L_\xi\left( |\xi|_\gamma^{2(n-2)}g_{(n)} \right)=0.
\end{equation}
Thus $|\xi|_\gamma^{2(n-2)}g_{(n)}$ is constant along the integral lines of $\xi$.
By the classification of essential CKVs of $\S^n$ in \cite{obata0}, if follows that in the conformally flat neighbourhood $U$, every integral curve of $\xi$ converges to (or from) the essential isolated zero $p$. Thus, shrinking $U$ if necessary, all of its points are connected with $p$ by an integral line contained in $U$. Hence, the integration of \eqref{transportKID} yields that $|\xi|_\gamma^{2(n-2)}g_{(n)}$ is constant in $U$, namely
\begin{equation}
 \left(|\xi|_\gamma^{2(n-2)}g_{(n)}\right)(p) = 0, \quad \Longrightarrow \quad |\xi|_\gamma^{2(n-2)}g_{(n)}\mid_U = 0, \quad \Longrightarrow \quad g_{(n)}\mid_U = 0.
\end{equation}
Therefore, as a consequence of having a Killing horizon intersecting $\scri^+$, we find that the local asymptotic data is $(U,\gamma,g_{(n)} = 0)$. By Corollary \ref{coroldS}, this implies that the spacetime is locally isometric to de Sitter. Moreover, by Theorem \ref{theoisolatedzeros}, if $\xi$ is a complete CKV with essential zeros then $U = \scri^+$ is conformally flat and the result applies to the whole domain of dependence of $\scri^+$.

In summary, we have proven:

\begin{theorem}\label{theoremnecessary}
  Let $(\tilde M^{n+1},\tilde g)$ be a $(\Lambda>0)$-vacuum admitting a conformal extension and a Killing vector field $X$, with $\xi = X\mid_{\scri^+}$. Assume that $X$ has a Killing horizon $\mathcal{H}_X$ such that $\closure{\mathcal{H}_X}\cap \scri^+ \neq \emptyset$. Then $\xi$ has essential isolated zeros at $\closure{\mathcal{H}_X}\cap \scri^+$ and for each $p \in \closure{\mathcal{H}_X}\cap \scri^+$ there exist conformally flat neighbourhoods $U_p \subseteq \scri^+$. Moreover $(\tilde M^{n+1},\tilde g)$ is isometric to de Sitter in the domain of dependence of each $U_p$.

  In addition, if $\xi$ is complete,  $U_p = \scri^+$. In particular, if  $ \scri^+=\S^n$, $(\tilde M^{n+1},\tilde g)$ is globally isometric to de Sitter spacetime.
\end{theorem}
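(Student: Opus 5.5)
The proof assembles results already established in this section, following the chain of the preceding discussion. First, I will invoke Proposition \ref{propisolatedzeros}. Given $\closure{\mathcal{H}_X}\cap \scri^+ \neq \emptyset$, it shows that every $p$ in this intersection is an essential, hence isolated, zero of $\xi$. The mechanism is the contrapositive of Proposition \ref{lemmanonessential}: if $\xi$ were non-essential near $p$, then $X$ would be spacelike in a spacetime neighbourhood of $p$, so no horizon could accumulate at $p$. Theorem \ref{theoisolatedzeros} then supplies a conformally flat neighbourhood $U_p\subseteq\scri^+$ of each such $p$, with $U_p=\scri^+$ when $\xi$ is complete.

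Second, I need $g_{(n)}=0$ on $U_p$. Since $\gamma$ is conformally flat on $U_p$, the obstruction tensor vanishes and there are no logarithmic terms. Theorem \ref{theoKIDanal} therefore applies, and $\xi$ satisfies the KID equation \eqref{eqKID}. I will rewrite it via \eqref{divxiliexi} as the transport equation \eqref{transportKID}, i.e. $\mathcal{L}_\xi(|\xi|_\gamma^{2(n-2)}g_{(n)})=0$. This quantity vanishes at $p$ because $\xi_p=0$ and $n\geq3$. Using the local normal form of essential CKVs on conformally flat patches (Obata's classification), I will shrink $U_p$ so that every point of it is joined to $p$ by an integral curve of $\xi$ lying in $U_p$. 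Constancy along integral curves plus continuity then gives $|\xi|^{2(n-2)}_\gamma g_{(n)}=0$ on $U_p$. Because $\xi\neq0$ on $U_p\setminus\{p\}$, this yields $g_{(n)}=0$ there, and at $p$ as well by continuity.

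Third, Corollary \ref{coroldS} applied to the data $(U_p,\gamma,g_{(n)}=0)$, together with the uniqueness in Theorem \ref{theoremavp}, identifies the metric in the domain of dependence of $U_p$ with the de Sitter metric \eqref{FGexpdS}. If $\xi$ is complete, $U_p=\scri^+$. If moreover $\scri^+=\S^n$, then after a conformal rescaling to the round metric Remark \ref{remarkglobal} gives global isometry with de Sitter.

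The main obstacle is the second step: justifying that a neighbourhood of an essential zero is swept out by integral curves accumulating at $p$. For a conformally flat patch I would transfer $\xi$ to $\mathbb{E}^n$ with $p$ at the origin. An essential zero there has linearization with a nonzero dilation part, which can be reduced to the form $a\,x^A\partial_A+$ rotation, or to a translation conjugated by inversion. In either case all nearby orbits tend to $0$ in forward or backward time, and $|\xi|_\gamma^{2(n-2)}g_{(n)}$ is bounded near $p$ and tends to $0$ at $p$. A secondary subtlety is the analyticity hypothesis in Theorem \ref{theoKIDanal}. It is only needed for the converse, while the direction used here, that a bulk Killing vector induces a KID, holds in general.
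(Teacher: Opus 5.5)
Your proposal is correct and follows the paper's argument essentially step by step. Proposition \ref{propisolatedzeros} gives essentiality, isolation and local conformal flatness; the transport form \eqref{transportKID} of the KID equation is then integrated along integral curves accumulating at $p$ (via Obata's classification) to get $g_{(n)}=0$ on $U_p$; and Corollary \ref{coroldS} with Remark \ref{remarkglobal} finishes. Your sketch of the orbit-accumulation step is more explicit than the paper's, but note that at a translation-type (parabolic) essential zero the linearization has no dilation part and the translation may carry a commuting rotation; the needed convergence of one half-orbit through each point to $p$ still holds on a suitably chosen round ball about $p$.
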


\begin{remark}\label{remarksuff}
 The reader may note that a converse to Theorem \ref{theoremnecessary} is  straightforward. Namely, that all the horizons of de Sitter spacetime intersect $\scri^+$ (e.g. \cite{gallowaydS1,gallowaydS2}) and moreover, by Proposition \ref{propisolatedzeros},  the intersection must happen at the essential isolated zeros of $\xi$. Therefore, a Killing horizon with non-empty intersection with $\scri^+$ (locally) characterizes de Sitter spacetime.

 However, a more refined version of this statement requires determining which types of CKVs of $\scri^+$ with essential zeros extend to Killing vectors admitting a horizon which ones do not. We address this characterization in Section \ref{secsufficient}.
\end{remark}

\section{Asymptotic characterization of horizons in de Sitter}\label{secsufficient}

In the previous section we showed (Theorem \ref{theoremnecessary}) that
a spacetime admitting a Killing vector $X$ with a  Killing horizon whose closure intersects $\scri^+$ must be isometric to de Sitter spacetime near the intersection points, which coincide with the essential isolated zeros of  $\xi = X \mid_{\scri^+}$. Thus an arbitrary CKV $\xi$ of $\scri^+$ in de Sitter extends to a Killing vector $X$ of the bulk spacetime only if it admits essential isolated zeros. This is a necessary however not a sufficient condition. In this section, we determine the sufficient condition that $\xi$ must satisfy in order for its extension $X$ to admit a Killing horizon, thereby asymptotically characterizing and classifying all the Killing horizons of de Sitter spacetime.

For the rest of this section, we assume that we are in de Sitter spacetime. The strategy is to consider an arbitrary CKV $\xi$ of $\scri^+$ and, by direct construction, we obtain the unique Killing vector $X$ extending from $\xi$ and evaluate the conditions under which it has a Killing horizon.

In order to obtain a global picture, one ideally works in a conformal extension such that $\scri^+$ is identified with $\mathbb S^n$. For computational purposes, however, it is simpler to use a conformal extension of de Sitter spacetime for which $\scri^+$ is isometric to the Euclidean space $\mathbb E^n$, endowed with Cartesian coordinates $\{y^A\}$. Global statements can nevertheless be recovered  mapping back to $\mathbb S^n$ by the one-point compactification map $\mathbb S^n \equiv \mathbb{E}^n \cup \{ \infty \}$, which we consider throughout this section.

\bigskip

In Cartesian coordinates, the CKVs of $\mathbb{E}^n$ are (see Appendix \ref{appcanonical}):
\begin{equation}\label{CKVFgeneralsec}
\xi=
\lr{
\b^A+\nu y^A+(\a_B y^B)y^A
-\frac12(y_B y^B)\a^A
-{\omega^A}_B y^B
}\partial_{y^A}.
\end{equation}
By Lemma \ref{lemmaequivKIDdS},  asymptotic KIDs of de Sitter $(\mathbb{E}^n,g_{(n)}=0,\varphi_\star(\xi))$ are equivalent  for all $\varphi \in \conf{\mathbb{E}^n}$. Therefore, we can equivalently select data $(\mathbb{E}^n,g_{(n)}=0,\xi)$, with $\xi$ being any element $\xi \in [\xi]$, where
\begin{equation}\label{confclass}
 [\xi] = \{\varphi_\star(\xi) \in \varphi \in \conf{\mathbb{E}^n} \}
\end{equation}
is the orbit of $\xi$ by conformal diffeomorphisms. As discussed in Appendix \ref{appcanonical}, there is a canonical representative $\xi_{can} \in [\xi]$ for each one of these classes, constructed from a number of conformal invariants, which allows us to tackle the problem at once.

\bigskip

We start by discussing the $n$-even case and use the arguments and, based on this one, we later complete the discussion to the $n$-odd case. Let us denote the Cartesian coordinates $\{ y^A\} = \{z_1,z_2,\{x_i\}_{i=1}^{2p} \}$, with
\begin{equation}
 p = [(n+1)/2]-1 = n/2-1.
\end{equation}
Then the canonical representative $\xi_{can}$ of the orbit $ [\xi]$ is
\begin{equation}\label{xicanevensec}
 \xi_{can} = \tilde \xi + \sum_{i=1}^p \mu_i \eta_i,
\end{equation}
where
 \begin{align}
  \tilde \xi &= \frac{1}{2}\Big(\aa + z_1^2 - z_2^2 - \sum\limits_{i=1}^{2p}x_i^2 \Big) \partial_{z_1} + \lr{\frac{\bb}{2} + z_1 z_2} \partial_{z_2} + z_1 \sum\limits_{i=1}^{2p} x_i \partial_{x_i},\label{CKVFxi}\\
  \eta_i & = x_{2i-1} \partial_{x_{2i}} - x_{2i} \partial_{x_{2i-1}},\label{CAKVF}
 \end{align}
are orthogonal CKVs. The parameters $\{\aa,\bb,\mu_i\}_{i=1}^p$ determining the canonical form, with
\begin{equation}
 \aa \in \mathbb{R},\qquad \bb,\mu_i \in \mathbb{R}^+ ~(= \{a \in \mathbb{R} \tq a \geq0\} \},
\end{equation}
are conformal invariants. These are, however, not easily extracted from an arbitrary CKV $\xi \in [\xi_{can}]$ of the form \eqref{CKVFgeneralsec}.  For completeness, Appendix \ref{appcanonical} describes how to compute these parameters and provides relevant references.

\bigskip

By a classical result \cite{blairzeros}, the zero set of a CKV of $\S^n$ consists of either isolated points or of submanifolds of even codimension. Its dimension is a conformal invariant and is therefore shared by all elements of $[\xi]$. In the flat representative $\mathbb{E}^n$, the zeros of $\xi$ can be computed directly from \eqref{CKVFgeneralsec}, except possibly for isolated zeros located at infinity, i.e., those corresponding, under a suitable inverse stereographic projection, to a pole of $\mathbb{S}^n$. The following result, proved in \cite{KdSlike}, characterizes precisely when this occurs.

\begin{lemma}\label{lemmazerosinfty}
Let $\xi$ be a CKV of $\mathbb{E}^n$ of the form \eqref{CKVFgeneralsec}. Then $\xi$ has an isolated zero at infinity if and only if  $\a^A = 0$.
\end{lemma}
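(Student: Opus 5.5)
The plan is to use the inversion $I(y)=y/|y|^2$, which exchanges the origin and the point at infinity of $\S^n=\mathbb{E}^n\cup\{\infty\}$. By definition, $\xi$ has a zero at infinity exactly when the pushforward $I_\star\xi$, written in the inverted Cartesian coordinates $u^A = y^A/|y|^2$, vanishes at $u=0$. Since $I$ is a conformal diffeomorphism of $\S^n$, $I_\star\xi$ is again a CKV of the form \eqref{CKVFgeneralsec}, with some new parameters $(\b',\nu',\a',\omega')$. We will compute these explicitly and read off both the zero at $u=0$ and whether it is isolated.

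First, differentiate $u^A=y^A/|y|^2$ to get
\[
\xi(u^A)=\frac{\xi^A}{|y|^2}-\frac{2y^A y_B\xi^B}{|y|^4}.
\]
Then substitute $y=u/|u|^2$ into each term of \eqref{CKVFgeneralsec}:
\begin{enumerate}[(i)]
\item The dilation term $\nu y^A$ maps to $-\nu u^A$.
\item The rotation term $-\omega^A{}_B y^B$ maps to $-\omega^A{}_B u^B$, since $\omega$ is skew and so $y_B\omega^B{}_C y^C=0$.
\item The translation $\b^A$ maps to a special conformal transformation, $\b^A|u|^2-2(\b_B u^B)u^A$.
\item The special conformal term with parameter $\a$ maps to a translation by a multiple of $\a^A$. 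A short computation gives $\tfrac12 \a^A$ up to sign.
\end{enumerate}
The terms in (iii) are quadratic and routine to check. Therefore $I_\star\xi$ evaluated at $u=0$ is $\pm\tfrac12\a^A$, so $\xi$ vanishes at infinity if and only if $\a^A=0$.

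It remains to show that this zero is isolated whenever $\a=0$. The linearization of $I_\star\xi$ at $u=0$ is $-\nu\,\mathrm{Id}-\omega$. This endomorphism is invertible when $\nu\neq0$, because $\omega$ is skew and has purely imaginary eigenvalues, and then the zero is isolated.

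The main obstacle is the degenerate case $\nu=0$. There $u=0$ could a priori lie on a higher-dimensional zero set, for example when $\omega$ has a kernel, or when $\a=\nu=\omega=0$ and $\xi$ is a pure translation. For a pure translation the quadratic term $\b^A|u|^2-2(\b\cdot u)u^A$ vanishes only at $u=0$ whenever $\b\neq0$. To see this, pair it with $u$, which gives $-|u|^2(\b\cdot u)$, and then argue componentwise. So the zero is isolated in that case.

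For general $\omega$ with a kernel, we would invoke the result of \cite{blairzeros}: the zero set consists of isolated points or even-codimension submanifolds. We would then need to rule out the submanifold case by checking, on $\ker\omega$, the quadratic term from $\b$. This may require splitting $\b$ into components along $\ker\omega$ and along its complement, and this is where I expect care to be needed. If the lemma is meant simply as ``has a zero at infinity,'' the computation $I_\star\xi(0)=\pm\tfrac12\a$ already suffices, and we will present that as the core argument.
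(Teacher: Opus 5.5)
Your inversion computation is correct and is the right core argument. With $u=y/|y|^2$ one gets
$I_\star\xi=-\tfrac12 a^A-\nu u^A-\omega^A{}_B u^B+b^A|u|^2-2(b_B u^B)u^A$,
so $\xi$ vanishes at $\infty$ if and only if $a=0$. This is exactly what the paper needs later: the canonical $\xi_{can}$ has $a^A=\delta^A{}_{z_1}\neq 0$, so none of its zeros sit at infinity. The paper gives no proof of its own and defers to an external reference. Your handling of $\nu\neq0$ (invertible linearization $-\nu\,\mathrm{Id}-\omega$) and of pure translations is also fine.

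The step that fails is the last one. There you propose to rule out a positive-dimensional zero set through $\infty$ when $\nu=0$ and $\ker\omega\neq0$, by inspecting the $b$-term on $\ker\omega$. This cannot work, because with the word ``isolated'' the `if' direction of the statement is false.

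\textbf{Counterexample.} Take $n=3$ and a pure rotation $\xi=-\omega^A{}_B y^B\partial_{y^A}$ with $\omega\neq0$. Then $a=0$, yet $I_\star\xi=-\omega u$ vanishes on the whole line $\ker\omega$. The zero set is the rotation axis together with $\infty$, a circle in $\mathbb{S}^3$, so $\infty$ is not isolated. The field $\xi=0$ is an even more trivial counterexample.

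\textbf{Correct picture for $a=\nu=0$.} Split $b=b_\parallel+b_\perp$ with $b_\parallel\in\ker\omega$ and $b_\perp\in\mathrm{Im}\,\omega=(\ker\omega)^\perp$. A translation, which fixes $\infty$, conjugates $\xi$ to $b_\parallel-\omega y$. Then:
\begin{enumerate}[(i)]
\item If $b_\parallel\neq0$, $\xi$ has no zeros in $\mathbb{E}^n$, so $\mathcal{Z}(\xi)=\{\infty\}$. Isolatedness is immediate, with no need for the codimension result.
\item If $b_\parallel=0$ and $\ker\omega\neq0$, the zeros form an affine copy of $\ker\omega$ whose closure contains $\infty$, so $\infty$ is not isolated.
\item If $\ker\omega=0$, the linearization $-\omega$ is invertible and $\infty$ is isolated.
\end{enumerate}

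So the provable statement is: $\xi$ has a zero at infinity if and only if $a=0$, and that zero is isolated unless $\nu=0$, $\ker\omega\neq0$ and $b\in\mathrm{Im}\,\omega$. In words, it fails to be isolated exactly when $\xi$ is, up to translation, a rotation with nontrivial fixed subspace.

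The paper's own surrounding remark explains why existence is all that matters: a positive-dimensional zero set through $\infty$ is still visible in $\mathbb{E}^n$. You should therefore present the lemma in this corrected form, with your inversion computation as the proof, rather than try to establish isolatedness in general.
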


 We are next going to reconstruct the horizons from $\scri^+$ by extending $\xi_{can}$ uniquely to a Killing vector $X$ of the spacetime. As argued at the beginning of this section, we know that  the Killing horizons in de Sitter intersect $\scri^+$ at the essential isolated zeros of $\xi_{can}$. Thus,  Lemma \ref{lemmazerosinfty} is needed to ensure that we are not missing any possible horizon intersection as a consequence of the conformal representative we are using.  From \eqref{CKVFxi}, it is clear that the canonical form \eqref{xicanevensec} has $\a^A = \delta^A{}_{z_1}$, so in no case any of the zeros of $\xi_{can}$ appear at infinity.

\bigskip

In the conformal extension and coordinates we have chosen, $\scri^+ = \mathbb{E}^n$ with metric
\begin{equation}
 \gamma = \delta_{AB} d y^A \otimes d y^B,
\end{equation}
so by Corollary \ref{coroldS}, we can write de Sitter as a metric explictitly conformal to Minkowski
\begin{equation}\label{dSconftomink}
 \tilde g_{dS} = \frac{1}{\Om^2} g_M,\qquad\qquad g_M = \lambda^{-1}\left( -d \Om^2 + \gamma \right).
\end{equation}

In \cite{kaminskisymmetries} the four-dimensional Killing equation is integrated
\footnote{We note that the expression in \cite{kaminskisymmetries} is missing the $1/n$ factor (for $n=3$) in the coefficient of $\Omega^2$.}
 in Cartesian coordinates $\{\Om,y^A \}$. The same arguments are easily extended to arbitrary dimensions and in our conformal picture it gives the following result:

\begin{proposition}
Consider the local representation of de Sitter metric into the form \eqref{dSconftomink} in Cartesian coordinates $\{\Om,y^A \}$ and let $\xi$ be a  CKV of $\scri$. Then the unique Killing vector $X$ extending $\xi$  off $\scri^+$ is
\begin{equation}\label{killext}
 X = \frac{\Om}{n} (\dvg_\gamma  \xi) \partial_\Om + \frac{\Om^2}{2 n} \gamma^{AB} \partial_{y^B} \left( \dvg_{\gamma} \xi \right) \partial_{y^A} + \xi^A \partial_{y^A}.
\end{equation}

\end{proposition}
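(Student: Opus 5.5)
The plan is to verify directly that the vector field \eqref{killext} is a Killing vector of $\tilde g_{dS}=\Om^{-2}g_M$ that restricts to $\xi$ on $\scri^+$, and then to obtain uniqueness from Theorem \ref{theoKIDanal}. For the verification I will use the identity already exploited in \eqref{XOmdivX}: $X$ is Killing for $\Om^{-2}g_M$ if and only if
\begin{equation}
\mathcal{L}_X g_M = 2\,\frac{X(\Om)}{\Om}\, g_M .
\end{equation}
So it suffices to show that $X$ is a conformal Killing vector of the flat metric $g_M$ whose conformal factor is exactly $X(\Om)/\Om$.

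The first step uses the explicit form \eqref{CKVFgeneralsec} of the CKVs of $\mathbb{E}^n$ to compute $\dvg_\gamma\xi$. The translation and rotation terms are divergence free, and $\nu y^A$ contributes $n\nu$. For the special conformal part, $\partial_A\big((\a_By^B)y^A\big)=(n+1)\,\a\cdot y$ and $\partial_A\big(-\tfrac12|y|^2\a^A\big)=-\a\cdot y$. Hence
\begin{equation}
\dvg_\gamma\xi = n(\nu+\a_By^B),
\end{equation}
so $\psi:=\tfrac1n\dvg_\gamma\xi$ is affine in $y$ and $\partial_A\partial_B\psi=0$. This vanishing Hessian is the key fact; I expect it to be the only point needing care, and in general it follows from $n\geq3$ and flatness of $\gamma$. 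With $\psi$ in hand, \eqref{killext} reads $X=\Om\psi\,\partial_\Om+\tfrac{\Om^2}{2}\,\partial^A\psi\,\partial_{y^A}+\xi^A\partial_{y^A}$, where $\partial^A\psi=\gamma^{AB}\partial_{y^B}\psi$, and in particular $X(\Om)=\Om\psi$.

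Next I compute $\mathcal{L}_Xg_M$ componentwise in the Cartesian coordinates $\{\Om,y^A\}$, using that $g_M=\lambda^{-1}(-d\Om^2+\delta_{AB}dy^Ady^B)$ has constant coefficients, so the Lie derivative reduces to symmetrized partial derivatives of the lowered components. Up to the overall factor $\lambda^{-1}$:
\begin{itemize}
\item The $\Om\Om$ component is $-2\partial_\Om X^\Om=-2\psi$.
\item The mixed $\Om A$ component is $-\partial_AX^\Om+\partial_\Om X_A=-\Om\partial_A\psi+\Om\partial_A\psi=0$.
\item The $AB$ component is $\partial_A\xi_B+\partial_B\xi_A+\Om^2\partial_A\partial_B\psi=2\psi\delta_{AB}$. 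Here I use the conformal Killing equation for $\xi$ together with the vanishing Hessian.
\end{itemize}
Thus $\mathcal{L}_Xg_M=2\psi\, g_M=2\frac{X(\Om)}{\Om}g_M$, and $X$ is Killing for $\tilde g_{dS}$. Setting $\Om=0$ in \eqref{killext} shows that $X$ is tangent to $\scri^+$ with $X|_{\scri^+}=\xi$.

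Finally, for uniqueness: the data $(\mathbb{E}^n,\delta,g_{(n)}=0)$ are analytic, and $\xi$ trivially satisfies \eqref{eqKID} since $g_{(n)}=0$. Theorem \ref{theoKIDanal} therefore provides a unique Killing vector extending $\xi$, which must coincide with the $X$ constructed above. As an alternative check, two Killing vectors agreeing on the hypersurface $\scri^+$ differ by a Killing vector of $g_M$ (conformal factor $\psi-\psi=0$) vanishing on $\Om=0$. Its $1$-jet at any point of $\scri^+$ then vanishes except possibly for a boost mixing $\Om$ with a tangential direction; such a boost is incompatible with tangency to $\{\Om=0\}$ while vanishing there, so the difference is zero.
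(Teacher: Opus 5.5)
Your proposal is correct and follows essentially the paper's proof. You verify $\mathcal{L}_X g_M = \tfrac{2}{n}(\dvg_\gamma\xi)\,g_M = 2\tfrac{X(\Om)}{\Om}g_M$ componentwise, where the paper splits $X = X(\Om)\partial_\Om + \xi_0 + \xi$; you use the same key fact that $\dvg_\gamma\xi$ is affine with vanishing Hessian, and you get uniqueness from Theorem \ref{theoKIDanal} by analyticity, just as the paper does, so your explicit formula $\dvg_\gamma\xi = n(\nu + a_B y^B)$ and the extra jet-based uniqueness check are harmless additions.
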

\begin{proof}
Trivially, $X$ is an extension of $\xi$. The fact that $X$ is Killing follows by direct computation.  We denote the second term in \eqref{killext} by $\xi_0$, so that
\begin{equation}
 X = X(\Om) \partial_\Om + \xi_0 + \xi,
\end{equation}
and then compute
\begin{equation}\label{killeq}
 \mathcal{L}_X \tilde g_{dS}= -\frac{2}{\Om^3} X(\Om) g_M + \frac{1}{\Om^2} \mathcal{L}_X g_M =  -\frac{2}{n \Om^2} (\dvg_\gamma \xi) g_M + \frac{1}{\Om^2} \mathcal{L}_X g_M,
\end{equation}
where in the second equality we have used that, by \eqref{killext}, $X(\Om) = (\Om/n) \dvg_\gamma \xi$.
Writing
\begin{equation}
 \mathcal{L}_X g_M = \mathcal{L}_{X(\Om)\partial_\Om} g_M + \mathcal{L}_{\xi_0} g_M + \mathcal{L}_\xi g_M,
\end{equation}
and recalling \eqref{dSconftomink}, we have that the first term is given by
\begin{align}
 \lambda \mathcal{L}_{X(\Om)\partial_\Om} g_M & = - d(X(\Om)) \otimes d \Om - d \Om \otimes  d(X(\Om))  \\
 & = -\Big( \frac{\dvg_\gamma \xi}{n} d\Om + \frac{\Om}{n} \partial_{y^A}(\dvg_\gamma \xi) d y^A \Big) \otimes d \Om - d \Om \otimes \Big( \frac{\dvg_\gamma \xi}{n} d\Om + \frac{\Om}{n} \partial_{y^A}(\dvg_\gamma \xi) d y^A \Big),\label{1stterm}
 \end{align}
the second one, taking into account that from \eqref{CKVFgeneralsec} it follows $\partial_{y^D}\partial_{y^C} (\dvg_\gamma \xi) = 0$, yields
\begin{align}
 \lambda \mathcal{L}_{\xi_0} g_M & = \mathcal{L}_{\xi_0} \gamma = \delta_{AB} (d \xi_0(y^A) \otimes d y^B + d y^A \otimes d \xi_0(y^B)) \\ & = \delta_{AB} \Big( \big(\frac{\Om}{n}\delta^{AC}\partial_{y^C} (\dvg_\gamma \xi) d \Om \big) \otimes d y^B  + d y^A \otimes  \big(\frac{\Om}{n}\delta^{BC}\partial_{y^C} (\dvg_\gamma \xi) d \Om \big)\Big),\label{2ndterm}
\end{align}
and the third one
\begin{align}
 \lambda \mathcal{L}_\xi g_M =  \mathcal{L}_\xi \gamma = \frac{2}{n} (\dvg_\gamma \xi)\gamma \label{3rdterm}.
\end{align}
Putting \eqref{1stterm},\eqref{2ndterm} and \eqref{3rdterm} together we obtain
\begin{equation}
 \mathcal{L}_X g_M = \mathcal{L}_{X(\Om)\partial_\Om} g_M + \mathcal{L}_{\xi_0} g_M + \mathcal{L}_\xi g_M = \frac{2}{n}(\dvg_\gamma \xi) g_M
\end{equation}
which back into \eqref{killeq} yields that $X$ is a Killing of $\tilde g_{dS}$. Since de Sitter is analytic, uniqueness follows from Theorem \ref{theoKIDanal}.
\end{proof}

With this result at hand, we can calculate the Killing $X_{can}$  which extends $\xi_{can}$. We first compute
\begin{equation}\label{dvgxican}
 \dvg_\gamma  \xi = \partial_{y^A} \xi_{can}^A = (2 p +1) z_1 = n z_1,
\end{equation}
and therefore, by \eqref{killext},
\begin{align}
X_{can}  & = \Om z_1 \partial_\Om + \frac{\Om^2}{2}\partial_{z_1} + \xi_{can} =
\frac{1}{2} \Big(\aa + \Om^2 + z_1^2 -z_2^2  - \sum\limits_{i=1}^{2p}x_i^2 \Big) \partial_{z_1} + \frac{\bb}{2} \partial_{z_2}
\\ & + z_1 \big(\Om  \partial_\Om + z_2 \partial_{z_2}  + \sum\limits_{i=1}^{2p}x_i \partial_{x_i} \big)   + \sum_{i=1}^p \mu_i ( x_{2i-1} \partial_{x_{2i}} - x_{2i} \partial_{x_{2i-1}}).\label{Xcan}
\end{align}
To evaluate the conditions under which  $X_{can}$ admits a Killing horizon, we define the function
\begin{equation}
 F := |X_{can}|_{g_M}^2.
\end{equation}
The Killing horizons are given by the zero level sets of $F$, wherever $dF$ is null and non-zero.
Note that since the causal character is a conformal invariant, it does not matter whether we define $F$ using de Sitter or Minkowski metric $g_M$. For simplicity, we use $g_M$ and we also denote
\begin{equation}
 \rho_i^2 := x_{2i-1}^2 + x_{2i}^2 ,\qquad \rho^2 :=\sum_{i=1}^{2p}x_i^2 = \sum_{i=1}^p \rho_i^2,
\end{equation}
where $\rho,\rho_i$ will denote the respective positive root. From the first equality in \eqref{Xcan}
\begin{align}\label{eqF1}
 F = -\Om^2 z_1^2 +\frac{\Om^4}{4}   + |\xi|^2+ \Om^2 g_M(\partial_{z_1},\xi_{can}) =  \frac{\Om^4}{4}+\frac{\Om^2}{2}\Big(\aa - z_1^2 - z_2^2 - \rho^2\Big)   + |\xi_{can}|^2
\end{align}
it is easy to compute the roots of $F$
\begin{equation}
 4 F = 0 \iff \Om^2 = - (\aa - z_1^2 -z_2^2- \rho^2) \pm \sqrt{(\aa - z_1^2 -z_2^2- \rho^2)^2 - 4 |\xi_{can}|^2}.
\end{equation}
Taking into account
\begin{align}
 |\xi_{can}|^2 & = \frac{1}{4}\Big(\aa + z_1^2 - z_2^2 - \rho^2 \Big)^2 + \big(\frac{\bb}{2} + z_1 z_2\big)^2  + z_1^2 \rho^2 + \sum\limits_{i=1}^{p} \mu_i^2 \rho_i^2.
\end{align}
a direct computation shows
\begin{equation}
  F = 0 \iff \Om^2 = (z_1^2 + z_2^2 + \rho^2 - \cc) + 2\beta_\pm
\end{equation}
where
\begin{equation}
  \beta_\pm := \pm\Big(-\cc z_1^2 - \sum_i^p \mu_i^2 \rho_i^2 - \left(\frac{\tau}{2}\right)^2 - \tau z_1 z_2\Big)^{1/2}.
\end{equation}
This allows us to factor $F$ as
\begin{equation}
  F = |X|^2 = (\Om^2 - \alpha_+) (\Om^2 - \alpha_-),\qquad
\alpha_\pm := (z_1^2 + z_2^2 + \rho^2 - \cc) + 2\beta_\pm.
\end{equation}
Observe that, given that by definition $\bb \geq 0$, only $\aa \leq 0$ cases admit real roots. This is the first restriction on the paremeters $\{ \aa, \bb, \mu_i \}$ and thus we write $\aa = - |\aa|$ from now on.

\bigskip

The causal character of the zero level sets of $F$  follows by simply evaluating $|dF|_{g_M}^2$ at $\Om^2 = \alpha_\pm$. This is a straightforward but tedious calculation, whose details are included in Appendix \ref{appdF}. The result can be reduced to
\begin{align}
\frac{1}{64}|dF|_{g_M}^2 \mid_{\Om^2 = \alpha_\pm} =   \frac{\tau^2}{4}(z_1^2 +z_2^2 + |\aa| +2 \beta_\pm)  +  \sum_{i=1}^{p}\mu_i^2 \rho_i^2(\mu_i^ 2 + |\cc|).
\end{align}
For $\Om^2 = \alpha_+$ we have
\begin{align}
\frac{1}{64}|dF|_{g_M}^2 \mid_{\Om^2 = \alpha_+} =   \frac{\tau^2}{4}(z_1^2 +z_2^2 + |\aa |+2 \beta_+)  +  \sum_{i=1}^{p}\mu_i^2 \rho_i^2(\mu_i^ 2 + |\cc|) \geq 0
\end{align}
because it is a sum of positive terms. Therefore, the condition for the existence of a \emph{null hypersurface} with normal vector $d F$ is
\begin{align}
\frac{1}{64}|dF|_{g_M}^2 \mid_{\Om^2 = \alpha_+} =   0  \qquad \iff \qquad \bb = \mu_1 = \cdots = \mu_p = 0.
\end{align}
Note that some of the rotation parameters $\mu_i$ may not vanish and still $|dF|_{g_M}^2 = 0$ with $d F \neq 0$  if the corresponding $\rho_i = 0$. This is, however, a lower dimensional submanifold and not a Killing horizon, which is a null hypersurface (see subsection \ref{secdefHx}), so we do not consider these cases further.

On the other hand, for $\Om^2 = \alpha_-$,
\begin{align}
\frac{1}{64}|dF|_{g_M}^2 \mid_{\Om^2 = \alpha_-} =   \frac{\tau^2}{4}(z_1^2 +z_2^2 + |\aa| +2 \beta_-)  +  \sum_{i=1}^{p}\mu_i^2 \rho_i^2(\mu_i^ 2 + |\cc|).
\end{align}
Taking into account that
\begin{align}
  \beta_- & = -\Big(|\cc| z_1^2 - \sum_{i= 1}^p \mu_i^2 \rho_i^2 - \left(\frac{\tau}{2}\right)^2 - \tau z_1 z_2\Big)^{1/2}  \\
  & = -\Big(|\cc| z_1^2 +z_1^2 z_2^2- \sum_{i=1}^p\mu_i^2 \rho_i^2 - \left(\frac{\tau}{2} + z_1 z_2\right)^2 \Big)^{1/2}
  \geq -\big(|\cc| z_1^2 + z_1^2 z_2^2\big)^{1/2}
\end{align}
we find
\begin{align}
\frac{1}{64}|dF|_{g_M}^2 \mid_{\Om^2 = \alpha_-} & \geq   \frac{\tau^2}{4}\big(z_1^2 +z_2^2 + |\aa|-2\big(|\cc| z_1^2 + z_1^2 z_2^2\big)^{1/2}\big)  +  \sum_{i=1}^{p}\mu_i^2 \rho_i^2(\mu_i^ 2 +|\cc|)\\
& = \frac{\tau^2}{4}\Big(\big(|z_1|-(|\aa|+z_2^2)^{1/2}\big)^2 +z_2^2 \Big)  +  \sum_{i=1}^{p}\mu_i^2 \rho_i^2(\mu_i^ 2 + |\cc|) \geq 0.
\end{align}
Hence, the condition for a null hypersurface with $d F$ as normal vector is
\begin{align}
\frac{1}{64}|dF|_{g_M}^2 \mid_{\Om^2 = \alpha_-} =   0\qquad \iff \qquad \bb = \mu_1 = \cdots = \mu_p = 0.
\end{align}
Note that in this case a lower dimensional null submanifold may also occur if $\mu_i\neq 0$ at $\rho_i = 0$ and if $\bb \neq 0$ at $z_2 = 0$ and $|z_1| = |\cc|^{1/2}$, which we do not consider as we shall focus on hypersurfaces.

\bigskip

In summary, we have proven that $X_{can}$ admits a Killing horizon if only if the parameters $\{\aa,\bb,\mu_i\}$ are
\begin{equation}\label{paramswhorizoneven}
\aa \leq 0,\qquad \bb = \mu_1 = \cdots = \mu_p = 0.
\end{equation}
Therefore
\begin{align}\label{xicanhorizonsneven}
\xi_{can}  & = \frac{1}{2} \Big(\aa  + z_1^2 -z_2^2  - \sum\limits_{i=1}^{2p}x_i^2 \Big) \partial_{z_1}  + z_1 \big( z_2 \partial_{z_2}  + \sum\limits_{i=1}^{2p}x_i \partial_{x_i} \big)
\end{align}
and
\begin{align}
X_{can}  & = \frac{1}{2} \Big(\aa + \Om^2 + z_1^2 -z_2^2  - \sum\limits_{i=1}^{2p}x_i^2 \Big) \partial_{z_1}  + z_1 \big(\Om  \partial_\Om + z_2 \partial_{z_2}  + \sum\limits_{i=1}^{2p}x_i \partial_{x_i} \big)   .
\end{align}

As an interesting check of consistence with Theorem \ref{theoremnecessary}, it follows from Lemma \ref{zerosetxican} that $\xi_{can}$ is a particular case of a CKV with isolated zeros. More geometrically, $\xi_{can}$ is conformally equivalent to a homothety if $\aa<0$ and to a translation if $\aa=0$ (cf.~Remark~\ref{remarkrotation}), both cases having essential isolated zeros (see the classification in \cite{obata0}), in accordance with the results of  Section~\ref{secnecessary}.

\bigskip

From the analysis above, the locus of the horizons reduces to
\begin{equation}\label{horizonsreduced}
 \Om^2 = (z_1 \pm |\aa|^{1/2})^2 + z_2^2 +\sum\limits_{i=1}^{2p}x_i^2.
\end{equation}
Note that the negative and positive values of $\Om$ belong to different sides (future and past) of $\scri^+$, so we focus only on $\Om >0$ values.

For $|\sigma|\neq 0$ there are two branches
\begin{equation}
 \mathcal{H}_+  = \{-\Om^2 + (z_1 + |\aa|^{1/2})^2 + z_2^2 +\sum_i x_i^2 = 0 \},\qquad \mathcal{H}_-  = \{ -\Om^2 + (z_1 - |\aa|^{1/2})^2 + z_2^2 +\sum_i x_i^2 = 0\},
\end{equation}
intersecting at
\begin{equation}
S = \mathcal{H}_+ \cap \mathcal{H}_- =\{-\Om^2 + |\aa| + z_2^2 +\sum_i x_i^2 = 0,\quad z_1 = 0 \}.
\end{equation}
This case corresponds to a bifurcate Killing horizon with bifurcation surface $S$. In the conformal gauge that we are using, only the local topology of $S$ is apparent, namely, that of a hyperboloid. We can recover its global topology by looking at de Sitter spacetime as the $(n+1)$-dimensional hyperboloid embedded in $(n+2)$-dimensional Minkowski spacetime, i.e.,
\begin{equation}
 \mathbb{H}^{n+1} = \{(X_0,X_1,\cdots,X_{n+1}) \in \mathbb{R}^{n+2} \tq -X_0^2 + X_1^2 + \cdots + X_{n+1}^2 = \lambda^{-1} \}.
\end{equation}
A direct calculation shows that the coordinates $\{\Om,y^A \} = \{\Om,z_1,z_2,x_i \} $ correspond to the local parametrization
\begin{equation}
 X_0 = \frac{\lambda^{-1} - \Om^2 + r^2}{2\Om}  ,\qquad X_1 = \frac{\lambda^{-1} + \Om^2 - r^2}{2\Om} ,\qquad  X_{A+1} = \lambda^{-1/2}\frac{y^A}{\Om},
\end{equation}
where $r^2 = \sum_A (y^A)^2 $. Then the bifurcation surface is the embedded $(n-1)$-spheroid
\begin{equation}
 S = \{(X_0,X_1,\cdots,X_{n+1}) \in \mathbb{R}^{n+2} \tq X_0 = \kappa X_1,\quad X_2 = 0,\quad (1-a^2) X_1^2 + X_3^2 + \cdots + X_{n+1}^2 = \lambda^{-1}\},
\end{equation}
with $a:= \frac{\lambda^{-1} - |\aa|}{\lambda^{-1} + |\aa|}$.

As  $|\aa|$ takes smaller values, it is clear that by \eqref{horizonsreduced} the two horizons $\mathcal{H}_+$ and $\mathcal{H}_-$ approach and then merge into a single hypersurface when $|\aa| = 0$,
\begin{equation}
 \mathcal{H}_0  = \{-\Om^2 + z_1^2 + z_2^2 +\sum_i x_i^2 = 0 \}.
 \end{equation}

 In de Sitter spacetime, bifurcate and non-bifurcate Killing horizons correspond, respectively, to non-degenerate (non-zero surface gravity, $\kappa \neq 0$) and degenerate (zero surface gravity, $\kappa = 0$) cases. This can be confirmed by a direct calculation of the surface gravity. We will verify this fact in Section~\ref{secgrav}, where we provide a formula that allows for a straightforward computation of the surface gravity at $\scri^+$ in terms of the conformal invariant $\aa$, namely $\kappa^2 = |\sigma|$.

\begin{remark}\label{remarkgravity}
We note that de Sitter Killing horizons are multiple Killing horizons of maximal dimension in the sense of \cite{marsmultiplekill}, i.e. they admit $n+1$ tangential null Killing vectors, only one having non-zero surface gravity. Although our analysis in this work focuses on a single generator, extending our construction at $\scri^+$ to recover the multiplicity properties asymptotically is an interesting future application.
\end{remark}

 \bigskip

 We now turn into the $n$ odd dimensional case. Let us denote the Cartesian coordinates of $\mathbb{E}^n$ by $\{ y^A\} = \{z_1,\{x_i \}_{i=1}^{2p}\}$, with
\begin{equation}
 p = [(n+1)/2]-1 = (n-1)/2.
\end{equation}
Then the canonical representative $\xi_{can}$ of the orbit $ [\xi]$ is in this case
\begin{equation}\label{xicanodd}
 \xi_{can} = \tilde \xi + \sum_{i=1}^{p} \mu_i \eta_i,\\
\end{equation}
with
 \begin{align}
  \widetilde \xi &= \Big(\frac{\aa}{2} + \frac{1}{2}\big(z_1^2  - \sum\limits_{i=1}^{2p}x_i^2 \big)\Big) \partial_{z_1}  + z_1 \sum\limits_{i=1}^{2p} x_i \partial_{x_i}, \label{CKVFxioddsec}\\
  \eta_i & = x_{2i-1} \partial_{y_{2i}} - x_{2i} \partial_{x_{2i-1}},\label{CAKVFoddsec}
 \end{align}
 orthogonal CKVs and the parameters $\{\sigma,\mu_i\}_{i=1}^p$ are also conformal invariants (see details in Appendix \ref{appcanonical}.)

The analysis of the $n$ odd case can be recovered from that of the even-dimensional one by means of the following  embedding argument. Consider the  isometric embedding of the odd-dimensional Euclidean space $\mathbb{E}^n$ into the even-dimensional space $\mathbb{E}^{n+1}$ given by
\begin{equation}
 \iota: \mathbb{E}^n \hookrightarrow \mathbb{E}^{n+1},\qquad \iota(\mathbb{E}^n) = \{z_2 = 0 \}.
\end{equation}
Comparing the canonical forms \eqref{xicanodd} and \eqref{xicanevensec}, it is clear that the pushforward $\iota_\star(\xi_{can})$ of the odd-dimensional canonical CKV coincides with an even-dimensional canonical CKV evaluated at $z_2=0$ with $\tau = 0$.
Similarly, the even-dimensional bulk Killing  field \eqref{Xcan} with $\tau = 0$ is tangent to the bulk hypersurface $\{z_2 = 0\}$ and it is a Killing of the induced metric. This means that in the $n$ odd case, by Theorem \ref{theoKIDanal}, $\xi_{can}$ extends uniquely to Killing vector
\begin{align}
X_{can}  & = \frac{1}{2} \Big(\aa + \Om^2 + z_1^2  - \sum\limits_{i=1}^{2p}x_i^2 \Big) \partial_{z_1} + z_1 \big(\Om  \partial_\Om  + \sum\limits_{i=1}^{2p}x_i \partial_{x_i} \big)   + \sum_{i=1}^p \mu_i ( x_{2i-1} \partial_{x_{2i}} - x_{2i} \partial_{x_{2i-1}}).
\end{align}
 Furthermore, $\{z_2 = 0\}$ is a totally geodesic submanifold of even-dimensional de Sitter spacetime.  Thus, the restrictions of the even-dimensional horizons to the $\{z_2 = 0 \}$ hypersurface are precisely the intrinsic Killing horizons of the odd-dimensional case. Namely, the parameters satisfy
 \begin{equation}\label{paramswhorizonodd}
  \aa \leq 0,\qquad \mu_1 = \cdots = \mu_p = 0,
 \end{equation}
and the horizons sit at
 \begin{equation}\label{horizonsreducedodd}
 \Om^2 = (z_1 \pm |\aa|^{1/2})^2  +\sum_i x_i^2.
\end{equation}
Additionally, by the same arguments used above, one easily verifies that $|\aa| \neq 0$ is a bifurcate horizon with bifurcation surface $\S^{n-1}$ and $|\aa| = 0$ is non-bifurcate.

\bigskip

As pointed out in Lemma \ref{lemmaequivKIDdS}, the results in this section depend only on the conformal class $[\xi]$ and not on the particular representative $\xi$ employed.  Moreover, by the conformal invariance of the asymptotic initial value problem (discussed after Theorem \ref{theoremavp} for conformally flat boundaries), the results must also be independent of the conformal extension of de Sitter spacetime. Hence, the classification obtained above can be straightforwardly stated in a conformally invariant way  (see Theorem \ref{theosuff} below.)

In addition, the results of this section construct all the Killing horizons of de Sitter spacetime from $\scri^+$, thus, it is a priori assumed  that  $\closure{\mathcal H_X}\cap\mathscr I^+\neq\emptyset$. However, in de Sitter spacetime every Killing horizon necessarily intersects $\mathscr I^+$ (e.g.~\cite{gallowaydS1,gallowaydS2}). Therefore the result is a \emph{complete characterization of all the Killing  horizons in de Sitter}.

\bigskip

The main result is as follows:

 \begin{theorem}\label{theosuff}
  Consider $(n+1)$-dimensional de Sitter spacetime $(\tilde M,\tilde g)$ and let $X$ be a Killing thereof, where $\xi = X\mid_{\scri^+}$. Assume that the conformal class $[\xi]$ is characterized by the parameters in \eqref{paramswhorizoneven} or \eqref{paramswhorizonodd}, according to the parity of $n$.  Then and only then $X$ has a Killing horizon $\mathcal{H}_X$. Moreover :
  \begin{enumerate}[a)]
   \item   If $|\aa|\neq 0$, $\mathcal{H}_X$ is bifurcate, thus non-degenerate, $\xi$ is conformal to a homothety and has exactly two essential isolated zeros, which coincide with $\closure{\mathcal{H}_X}\cap \scri^+$.
   \item   If $|\aa|= 0$, $\mathcal{H}_X$ is non-bifurcate, thus degenerate, $\xi$ is conformal to a translation and has exactly one essential isolated zero, which coincide with $\closure{\mathcal{H}_X}\cap \scri^+$.
  \end{enumerate}
 \end{theorem}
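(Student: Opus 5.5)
The theorem collects the computations of this section into a conformally invariant statement, so the plan is to reduce to canonical representatives and then read off the remaining properties. First, by Lemma \ref{lemmaequivKIDdS} and the conformal invariance of the asymptotic problem, whether $X$ admits a Killing horizon depends only on the class $[\xi]$, since isometries preserve Killing horizons. So I may replace $\xi$ by $\xi_{can}$ and $X$ by $X_{can}$, working in the flat conformal gauge \eqref{dSconftomink}. Nothing is lost at infinity, because the canonical forms have $\a^A=\delta^A{}_{z_1}\neq 0$ and Lemma \ref{lemmazerosinfty} then applies. Because every Killing horizon of de Sitter meets $\scri^+$ \cite{gallowaydS1,gallowaydS2}, this reconstruction from $\scri^+$ sees all horizons.

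\emph{The ``then and only then'' part.} For $n$ even, I will use the factorization $F=(\Om^2-\alpha_+)(\Om^2-\alpha_-)$. Real roots force $\aa\le 0$. The lower bounds on $|dF|^2_{g_M}$ on both branches $\Om^2=\alpha_\pm$ show that $dF$ is null on an open piece of a hypersurface exactly when $\bb=\mu_i=0$; otherwise nullity can hold only on lower-dimensional loci ($\rho_i=0$, or $z_2=0,\ |z_1|=|\aa|^{1/2}$). This gives \eqref{paramswhorizoneven}. Conversely, under these parameters the level sets \eqref{horizonsreduced} are explicit null cones. On them $X_{can}$ is null, and it vanishes only on $S$ (resp.\ nowhere for $\Om>0$), so each connected component is a Killing horizon. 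For $n$ odd, I will use the embedding $\iota$ into $\{z_2=0\}$, which is totally geodesic, together with $\bb=0$; the restricted horizons give \eqref{paramswhorizonodd} and \eqref{horizonsreducedodd}.

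\emph{Items a) and b).} The zeros come directly from $\xi_{can}$ in \eqref{xicanhorizonsneven}. The $\partial_{z_2},\partial_{x_i}$ components force $z_1=0$ or $z_2=x_i=0$. If $z_1=0$, the $\partial_{z_1}$ component $\aa-z_2^2-\rho^2$ is negative unless $\aa=0$ and $z_2=\rho=0$. If $z_2=x_i=0$, it forces $z_1^2=|\aa|$. Hence there are two zeros $z_1=\pm|\aa|^{1/2}$ when $\aa<0$, and a single zero at the origin when $\aa=0$. These zeros are exactly the points $\Om=0$ of the closures of $\mathcal H_\pm$, respectively of $\mathcal H_0$, in \eqref{horizonsreduced}. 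That they are essential follows from Remark \ref{remarkrotation} and Obata's classification: $\xi_{can}$ is conformal to a homothety when $\aa<0$ (send one zero to infinity by an inversion) and to a translation when $\aa=0$ (send the double zero to infinity). It also follows from Proposition \ref{propisolatedzeros}, since the horizon closures meet these points. For bifurcation, when $\aa<0$ the two branches intersect in the codimension-two surface $S$ on which $X_{can}$ vanishes, a bifurcation surface, which gives non-degeneracy. When $\aa=0$ the cone $\mathcal H_0$ has no codimension-two zero set of $X_{can}$ in $\Om>0$: there $X_{can}=0$ needs $z_1=0$ and then $\Om=0$. So the horizon is non-bifurcate, and in de Sitter it is therefore degenerate; this can be confirmed by $\kappa^2=|\aa|$ from Section \ref{secgrav}.

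\emph{Main obstacle.} The delicate step is the ``only if'' direction of the nullity analysis: $|dF|^2=0$ on a measure-zero set must be excluded from being a hypersurface. This is also where the $\alpha_-$ branch needs the completed-square lower bound. I will handle it by the positivity bounds together with a check that the exceptional loci ($\rho_i=0$, or $z_2=0,\ |z_1|=|\aa|^{1/2}$) have codimension at least two inside $\{F=0\}$.
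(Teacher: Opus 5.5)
Your route is the paper's: canonical representative, explicit $X_{can}$, factorization of $F$, sign of $|dF|^2_{g_M}$ on the sheets $\Om^2=\alpha_\pm$, and the embedding for odd $n$. However, two steps fail as written.

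First, for $n$ even, ``real roots force $\sigma\le0$'' is false when $\tau>0$; the paper's own argument makes the same assertion. The radicand of $\beta_\pm$ contains $-\tau z_1z_2$, which is positive when $z_1z_2<0$. Take $\sigma=1$, $\tau=2$, $\mu_i=0$ at $z_1=1$, $z_2=-2$, $x_i=0$. There the radicand equals $2$, the roots $\Om^2=4\pm2\sqrt2$ are genuine, and $X_{can}$ is timelike at $\Om^2=4$. The lower bounds you invoke are written with $\sigma=-|\sigma|$, so the case $\sigma>0$, $\tau>0$ is never excluded. The fix is to reverse the order:
\begin{enumerate}
\item On the sheets $\Om^2=\alpha_\pm$ (with $\beta_\pm\neq0$), $|dF|^2_{g_M}$ is a positive multiple of $\frac{\tau^2}{4}(\Om^2-\rho^2)+\sum_i\mu_i^2\rho_i^2(\mu_i^2-\sigma)$, for either sign of $\sigma$.
\item Suppose $\tau\neq0$ and this vanished on an open piece of a sheet. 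Then at fixed $x_i$ one would have $2\beta_\pm=C-z_1^2-z_2^2$ on an open set of the $(z_1,z_2)$-plane, i.e.\ $(z_1^2+z_2^2-C)^2=-4\sigma z_1^2-\tau^2-4\tau z_1z_2-4\sum_i\mu_i^2\rho_i^2$ there. This is impossible by degree, so $\tau=0$.
\item With $\tau=0$, $\beta_\pm^2=-\sigma z_1^2-\sum_i\mu_i^2\rho_i^2$. For $\sigma>0$ the set $\{F=0\}$ therefore lies in $\{z_1=0\}$ and contains no hypersurface.
\item Only for $\sigma\le0$ do your positivity bounds then apply, giving $\mu_i=0$.
\end{enumerate}

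Second, in b), and in the ``resp.\ nowhere for $\Om>0$'' of your converse, the claim that $X_{can}$ has no zeros on $\mathcal H_0\cap\{\Om>0\}$ is wrong. On $\mathcal H_0$ one has $X_{can}^{z_1}=z_1^2$, and every other component is a multiple of $z_1$. Hence $X_{can}$ vanishes on $\{z_1=0,\ \Om^2=z_2^2+\rho^2\}$, a codimension-two surface with $\Om>0$; it is precisely the $\sigma\to0$ limit of $S$. This surface is null rather than spacelike, so it is not a bifurcation surface, and $\mathcal H_X$ must be taken as a component of $\mathcal H_0\setminus\{z_1=0\}$. Argue non-bifurcation from the coincidence of the two sheets instead. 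For $\sigma=\tau=\mu_i=0$ one has $F\propto(\Om^2-z_1^2-z_2^2-\rho^2)^2$, so there is no transverse second branch. This also yields degeneracy directly, without the surface-gravity formula. Since $\tilde g(X,X)=\Om^{-2}F$ vanishes to second order on $\mathcal H_0$, the horizon identity $\tilde\nabla_a\tilde g(X,X)=-2\kappa\tilde X_a$ with $X\neq0$ forces $\kappa=0$.
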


 \section{An asymptotic formula for the surface gravity}\label{secgrav}

In this section, we derive a formula for the surface gravity of a Killing horizon intersecting $\scri^+$ in terms of intrinsic quantities on the conformal boundary. Although Theorem~\ref{theoremnecessary} establishes that such an intersection can only occur in de Sitter spacetime, we perform the derivation (Subsection \ref{secproof}) for a general $\Lambda > 0$ vacuum spacetime admitting a conformal extension, as the general tensor calculation involves no additional complexity.

The main result is summarized in Proposition~\ref{propsurfacegrav}. We then evaluate its consequences for de Sitter spacetime in view of the classification established in Section~\ref{secsufficient}. The detailed step-by-step derivation of the formula is presented in Subsection~\ref{secproof}.

 \begin{proposition}\label{propsurfacegrav}
Let $(\tilde M, \tilde g)$ be a $(\Lambda>0)$-vacuum spacetime, admitting a conformal extension and a Killing vector $X$ with $\xi = X \mid_{\scri^+}$. Suppose that $X$ admits a Killing horizon $\mathcal{H}_X$ whose closure intersects $\scri^+$. Then the surface gravity of $\mathcal{H}_X$ is given at  $\closure{\mathcal{H}_X} \cap \scri^+$ by the conformal invariant
\begin{align}\label{surfagravmain}
\kappa^2 =\left[\left(\frac{\dvg_\gamma \xi}{n}\right)^2 - \frac{1}{2} |\skwxi|_\gamma^2\right]_{\closure{\mathcal{H}_X} \cap \scri^+},
\end{align}
where  $\skwxi_\gamma = D_{[A}\xi_{B]}$, $|\skwxi_\gamma|^2 = D_{[A}\xi_{B]} D^{[A}\xi^{B]}$ and $D$ is the Levi-Civita connection of $\gamma$.
\end{proposition}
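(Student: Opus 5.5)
The approach is to use the standard Killing-bivector characterization of surface gravity. On a Killing horizon, $\kappa^2 = -\frac12 (\tilde\nabla_a X_b)(\tilde\nabla^a X^b)$ evaluated on $\mathcal{H}_X$. This quantity is constant along each connected component of the horizon, so it extends continuously to $\closure{\mathcal{H}_X}\cap\scri^+$. The plan is to rewrite $\tilde F_{ab}:=\tilde\nabla_{[a}X_{b]}$, contracted with $\tilde g$, in terms of the unphysical metric $g=\Om^2\tilde g$ and then take the limit $\Om\to 0$ at a point $p\in\closure{\mathcal{H}_X}\cap\scri^+$. There $X_p=0$ by Lemma \ref{lemmazeros1}. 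This vanishing is what should make the formally singular terms drop out.

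\textbf{Step 1 (conformal rewriting).} With $\tilde X_b=\Om^{-2}X_b$ (index lowered with $g$), I would use $\tilde\nabla_{[a}\tilde X_{b]}=\partial_{[a}\tilde X_{b]}$, which gives
\[\tilde\nabla_{[a}\tilde X_{b]}=\Om^{-2}\big(\nabla_{[a}X_{b]}-2\Om^{-1}\nabla_{[a}\Om\, X_{b]}\big).\]
Raising both indices with $\tilde g^{ab}=\Om^2 g^{ab}$ yields
\[\kappa^2=-\tfrac12 g^{ac}g^{bd}\,\mathcal{F}_{ab}\mathcal{F}_{cd},\qquad \mathcal{F}_{ab}=\nabla_{[a}X_{b]}-2\Om^{-1}\nabla_{[a}\Om X_{b]}.\]

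\textbf{Step 2 (the singular term).} I would work in a geodesic conformal gauge \eqref{geodesicgauge} and use \eqref{XOmdivX}, namely $X(\Om)=\Om\phi$, which makes the normal component of $X$ of order $\Om$. The tangential part $X^A$ is only known to vanish at $p$, so I would control $\Om^{-1}X_A$ along the horizon. On $\mathcal{H}_X$ we have $g(X,X)=0$, that is, $\lambda^{-1}\phi^2\Om^2 = g_\Om(X^\parallel,X^\parallel)$. Hence $|X^\parallel|/\Om=\lambda^{-1/2}|\phi|$ on the horizon, and this ratio stays bounded up to $p$.

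Expanding $\mathcal{F}$ in the splitting $\{\Om,x^A\}$ then gives two pieces:
\begin{itemize}
\item the purely tangential piece $\mathcal{F}_{AB}\to D_{[A}\xi_{B]}=\skwxi_{AB}$, since the $\Om^{-1}\nabla\Om$ term has no tangential–tangential component;
\item the mixed piece $\mathcal{F}_{\Om B}$, which contains $\nabla_{[\Om}X_{B]}$ and $-\Om^{-1}\,d\Om\wedge X^\parallel$.
\end{itemize}

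\textbf{Step 3 (main obstacle).} The hardest step is showing that the mixed piece contributes exactly $(\dvg_\gamma\xi/n)^2$. I would use the conformal Killing equation $\nabla_{(a}X_{b)}=\phi g_{ab}$ for $g$ together with $\partial_\Om\phi=0$ from \eqref{pdephi} and $\phi|_{\scri}=\dvg_\gamma\xi/n$ from \eqref{dataphi}. The $\Om B$ component of the conformal Killing equation gives $\nabla_\Om X_B=-\nabla_B X_\Om$. Since $X_\Om=-\lambda^{-1}\Om\phi$, this forces $\nabla_B X_\Om\to0$ at $\scri$ in this gauge. So the only surviving mixed contribution is the $\Om^{-1}X^\parallel$ term.

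Its squared norm, with the timelike sign coming from $g^{\Om\Om}=-\lambda$, equals $-\lambda\cdot\lambda^{-1}\phi^2$ times combinatorial factors, by Step 2. After normalization this should produce $-\frac12\cdot(-2)\phi^2=\phi^2$. I expect the main risk to be the bookkeeping of the factors of $2$ and $\lambda$ in the antisymmetrization. I would cross-check the normalization against $X_{can}$ in \eqref{Xcan} with $\tau=\mu_i=0$. At a zero $z_1=\pm|\sigma|^{1/2}$ one has $\dvg\xi/n=z_1$ and $\skwxi=0$, which should give $\kappa^2=|\sigma|$.

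\textbf{Step 4 (conformal invariance).} The limit equals $\phi^2-\frac12|\skwxi|^2_\gamma$ at $p$. Under $\gamma\mapsto\omega^2\gamma$, both $\dvg_\gamma\xi/n$ and $\skwxi$ shift by terms proportional to $\xi(\log\omega)$ and $d\log\omega\wedge\xi^\flat$. These shifts vanish at the zero $p$. The norm $|\skwxi|^2_\gamma$ uses two inverse metrics against the $\omega^2$ weight of $\xi^\flat$, so it is invariant as well. This establishes that \eqref{surfagravmain} is a conformal invariant.
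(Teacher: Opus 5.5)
Your proof is correct, but it handles the singular term differently from the paper.

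\textbf{The paper's route.} It expands the full square $|\tilde\nabla\tilde X|^2_{\tilde g}$ in an arbitrary conformal gauge. Using the conformal Killing equation and the quasi-Einstein identity $\nabla_a\nabla_b\Omega=-\frac{\Omega}{n-1}R^{(tf)}_{ab}+\frac{\Box\Omega}{n+1}g_{ab}$, it moves everything proportional to $|X|^2_g$ (the $\Omega^{-2}|X|^2_g|\nabla\Omega|^2_g$ and $\Omega^{-1}\Box\Omega\,|X|^2_g$ terms) to the left-hand side. This produces a function $\mathcal G$ that is regular up to $\mathscr I^+$ and equals $-2\kappa^2$ on $\mathcal H_X$. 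The paper then splits $\nabla_aX_b$ at $\mathscr I^+$ into normal and tangential parts. The leftover terms $\xi^aD_a\mathrm{div}_\gamma\xi$, $R^{(tf)}_{ab}\xi^a\xi^b$ and $(\Box\Omega)^2|\xi|^2_\gamma$ disappear only at the zero of $\xi$.

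\textbf{Your route.} You use $\tilde\nabla_a\tilde X_b=\partial_{[a}\tilde X_{b]}$ and the block-diagonal form of $g$ in the geodesic gauge. The singular normal--tangential block $-\Omega^{-1}X_B$ is not removed but evaluated exactly through the null condition $g_\Omega(X^\parallel,X^\parallel)=\lambda^{-1}\Omega^2\phi^2$ on $\mathcal H_X$. With $g^{\Omega\Omega}=-\lambda$ this gives $g^{ac}g^{bd}\mathcal F_{ab}\mathcal F_{cd}\to|w|^2_\gamma-2\phi(p)^2$, so the normalization you worried about is right. The cross terms vanish because $\Omega^{-1}X_B$ stays bounded on the horizon while $\nabla_{[\Omega}X_{B]}=-\nabla_BX_\Omega\to 0$ at $p$; the Christoffel contributions there are multiplied by $X\to 0$.

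\textbf{Comparison.} Your argument is shorter and needs neither the Hessian identity nor the general normal/tangential analysis. It also makes $\xi_p=0$ a consequence of the boundedness of $\Omega^{-1}X^\parallel$. Incidentally, that inclusion $\closure{\mathcal H_X}\cap\scri^+\subseteq\mathcal Z(\xi)$ is stated just before Lemma~\ref{lemmazeros1}, not in the lemma itself.

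The price is that you obtain only the limit along the horizon, and only in a geodesic gauge. To cover an arbitrary representative of $[\gamma]$ you need either the existence of a geodesic extension inducing that representative, or your Step~4. The paper's $\mathcal G$, by contrast, is a regular quantity near $\scri^+$ in any gauge.
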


From Theorem \ref{theoremnecessary}, we know that the hypotheses of Proposition \ref{propsurfacegrav} can only take place in (a locally) de Sitter spacetime. Moreover, by Lemma \ref{lemmaequivKIDdS}, the analysis can be carried out with any representative of the conformal class $[\xi]$, so we choose the canonical representative $\xi_{can}$. The result is identical for any parity of $n$, so we give the explicit details only the $n$ even case.

By Theorem \ref{theosuff}, $\xi_{can}$ extends to a Killing vector admitting a horizon only if it is given by \eqref{xicanhorizonsneven}. The differential of the associated covector $\bm \xi_{can} = \gamma(\xi_{can},\cdot)$ is simply $d \bm\xi_{can} =  0$. Thus,
 taking into account that $2 D_{[A}\xi_{B]} = (d \bm\xi_{can})_{AB}$ we obtain $|\skwxi_\gamma|^2 = 0$.
On the other hand, the value of $\dvg_\gamma \xi$ was calculated in equation \eqref{dvgxican} above.
Therefore
\begin{equation}
 \left(\frac{\dvg_\gamma \xi}{n}\right)^2 - \frac{1}{2} |\skwxi|_\gamma^2 = z_1^2.
\end{equation}
Now to obtain the surface gravity we just have to evaluate this expression at $\closure{\mathcal{H}_X} \cap \scri^+$, which from Theorem \ref{theosuff}, we know it coincides with the  essential isolated zeros of $\xi_{can}$. From Lemma \ref{zerosetxican} and equation \eqref{paramswhorizoneven} we have
\begin{equation}
 \kappa^2 = |\aa|.
\end{equation}
Thus, from the results of the Section \ref{secsufficient}, $\aa\neq 0$ corresponds to a bifurcate horizon whose two branches converge, yielding a non-bifurcate horizon as $\aa \to 0$. Therefore, this recovers the well-known result that bifurcate horizons in de Sitter are non-degenerate, while non-bifurcate ones are degenerate.

 \subsection{Proof of Proposition \ref{propsurfacegrav}}\label{secproof}

 Let $X$ be a Killing vector field with a Killing horizon $\mathcal H_X$. The surface gravity $\kappa \in \R$, in terms of the physical metric $\tilde g$, is defined by
 \begin{equation}
  X^a \tilde \nabla_a X^b \mid_{\mathcal H_X} = \kappa X^b.
 \end{equation}
A well-known formula (e.g. \cite{wald}) allows to obtain the square of $\kappa$ by computing
\begin{equation}
 \kappa^2 = -\frac{1}{2}\tilde \nabla_a \tilde X_b \tilde \nabla^a  X^b,
\end{equation}
where $\tilde X_b = \tilde g_{ab} X^a$. Our aim is to rewrite this in terms of regular quantities at $\scri^+$.

We define $ X_b =  g_{ab} X^a$, which satisfies $X_b = \Om^2 \tilde X_b,$ and recall the change of connection tensor
\begin{equation}
 \left(\tilde \nabla - \nabla \right)^c{}_{ab} = : S^c{}_{ab}
 = -\frac{1}{\Om} \left( \nabla_a \Om \delta^c{}_b + \nabla_b \delta^c{}_a - \nabla_d \Om g^{cd} g_{ab} \right),
\end{equation}
which yields
\begin{align}
\tilde \nabla_a \tilde X_b  & = \nabla_a \tilde X_b - S^c{}_{ab} \tilde X_c= \nabla_a \tilde X_b + \frac{1}{\Om} \left( \nabla_a \Om \delta^c{}_b + \nabla_b \delta^c{}_a - \nabla_d \Om g^{cd} g_{ab} \right)\tilde X_c \\
& = \frac{\nabla_a  X_b}{\Om^2} + \frac{1}{\Om^3} \left(  \nabla_b \Om X_a -\nabla_a \Om X_b  - X(\Om) g_{ab} \right),\label{changeconnabX}
\end{align}
where we have written $X(\Om) = X^d \nabla_d \Om $. Now observing that
\begin{equation}
 \tilde \nabla_a \tilde X_b \tilde \nabla^a \tilde X^b = \tilde g^{ac} \tilde g^{bd}\tilde \nabla_a \tilde X_b \tilde \nabla_c \tilde X_d =  \Om^4  g^{ac}  g^{bd}\tilde \nabla_a \tilde X_b \tilde \nabla_c \tilde X_d
\end{equation}
and denoting
\begin{equation}
  \tilde \nabla_a \tilde X_b  \tilde \nabla^a \tilde X^b \equiv |\tilde \nabla \tilde X|_{\tilde g}^2 \qquad \mbox{and} \qquad  \nabla_a  X_b   \nabla^a  X^b \equiv | \nabla  X|_{g}^2,
\end{equation}
we have
\begin{align}
   |\tilde \nabla \tilde X|_{\tilde g}^2
= &  \left( \nabla_a  X_b + \frac{1}{\Om} \left(  \nabla_b \Om X_a -\nabla_a \Om X_b  - X(\Om) g_{ab} \right)\right) \left( \nabla^a  X^b + \frac{1}{\Om} \left(  \nabla^b \Om X^a -\nabla^a \Om X^b  - X(\Om) g^{ab} \right)\right)\\
= &| \nabla  X|_{g}^2 + \frac{2}{\Om} \nabla_a  X_b \left(  \nabla^b \Om X^a -\nabla^a \Om X^b  - X(\Om) g^{ab} \right) + \frac{1}{\Om^2}\left(2 |X|_g^2 |\nabla \Om|_g^2 + (n-1) X(\Om)^2 \right).
\end{align}
Since $X^a$ is a Killing vector of $\tilde g$ and a CKV of $g$, as we have deduced in equation \eqref{XOmdivX}, we have
\begin{equation}\label{XOmdivX2}
 \quad X(\Om)  = \Om \phi = \frac{\Om}{n+1} \dvg_g X,
\end{equation}
thus
\begin{align}\label{eq1}
 &  |\tilde \nabla \tilde X|_{\tilde g}^2 =   | \nabla  X|_{g}^2  - \frac{n+3}{(n+1)^2} (\dvg_g X)^2 + \frac{2}{\Om} \nabla_a  X_b \left(  \nabla^b \Om X^a -\nabla^a \Om X^b  \right) + \frac{2}{\Om^2} |X|_g^2 |\nabla \Om|_g^2.
\end{align}
The $\Om^{-2}$ term in the RHS diverges as one approaches to $\scri^+$. However, this term does not contribute as one approaches $\scri^+$ through the horizon. We collect the terms that are constant (or, in particular, vanish) on $\mathcal{H}_X$ in the LHS of the equation
\begin{equation}
|\tilde \nabla \tilde X|_{\tilde g}^2 - \frac{2}{\Om^2} |X|_g^2 |\nabla \Om|_g^2 = | \nabla  X|_{g}^2  - \frac{n+3}{(n+1)^2} (\dvg_g X)^2 + \frac{2}{\Om} \nabla_a  X_b \left(  \nabla^b \Om X^a -\nabla^a \Om X^b  \right).\label{surfagrav1}
\end{equation}
We now keep working the $\Om^{-1}$ term on the RHS. First, using the conformal Killing equation of $X$ wrt $g$ as well as \eqref{XOmdivX2}, we have
\begin{equation}
 \nabla_a X_b \left(  \nabla^b \Om X^a -\nabla^a \Om X^b  \right) = 2 \nabla_a X_b   \nabla^b \Om X^a -\frac{2 X(\Om)}{n+1} \dvg_g X = 2 \nabla_a X_b   \nabla^b \Om X^a -\frac{2 \Om}{(n+1)^2} (\dvg_g X)^2,\label{NXNOm0}
\end{equation}
where the first term can be in turn writte
\begin{align}
 2 \nabla_a X_b \nabla^b \Om X^a & = 2 X^a \nabla_a X(\Om) -2 X^a X^b \nabla_a \nabla_b \Om \\ & =  \frac{2 X(\Om)}{n+1} \dvg_g X + \frac{2 \Om}{n+1} X^a \nabla_a \dvg_g X-2 X^a X^b \nabla_a \nabla_b \Om \\
 & = \frac{2 \Om}{(n+1)^2} (\dvg_g X)^2 + \frac{2 \Om}{n+1} X^a \nabla_a \dvg_g X-2 X^a X^b \nabla_a \nabla_b \Om.\label{NXNOm1}
\end{align}
From the quasi-Einstein equation \eqref{quasiEinstein}
we obtain that the trace-free part of $\nabla_a \nabla_b \Om$, so the decomposition into trace and trace-free part yields
\begin{equation}\label{hessOm}
 \nabla_a \nabla_b \Om =-\frac{\Om}{n-1} R^{(tf)}_{ab} + \frac{\Box \Om}{n+1} g_{ab}
\end{equation}
being $R^{(tf)}_{ab}$ the trace-free part of the Ricci tensor $R_{ab}$. Hence, we write \eqref{NXNOm1} as
\begin{align}
 2 \nabla_a X_b \nabla^b \Om X^a & = \frac{2 \Om}{(n+1)^2} (\dvg_g X)^2 + \frac{2 \Om}{n+1} X^a \nabla_a \dvg_g X + \frac{2 \Om}{n-1}R^{(tf)}_{ab} X^a X^b - \frac{2 \Box \Om}{n+1} |X|^2_g .
\end{align}
Thus, \eqref{NXNOm0} becomes
\begin{align}
 \nabla_a X_b \left(  \nabla^b \Om X^a -\nabla^a \Om X^b  \right) = \frac{2 \Om}{n+1} X^a \nabla_a \dvg_g X + \frac{2 \Om}{n-1}R^{(tf)}_{ab} X^a X^b - \frac{2 \Box \Om}{n+1} |X|^2_g
\end{align}
and \eqref{surfagrav1} yields
\begin{align}
& |\tilde \nabla \tilde X|_{\tilde g}^2 - \frac{2}{\Om^2} |X|_g^2 |\nabla \Om|_g^2 + \frac{4 \Box \Om}{n+1} \frac{|X|^2_g}{\Om}\\ = & | \nabla  X|_{g}^2  - \frac{n+3}{(n+1)^2} (\dvg_g X)^2 +  \frac{4}{n+1} X^a \nabla_a \dvg_g X + \frac{4}{n-1}R^{(tf)}_{ab} X^a X^b\label{surfagrav2}
\end{align}
where $\frac{4 \Box \Om}{n+1} \frac{|X|^2_g}{\Om}$ has been placed on the LHS because it also vanishes on $\mathcal{H}_X$.

\bigskip

The RHS of equation \eqref{surfagrav2} is already explicitly regular at $\scri^+$. Indeed, the LHS equals $-2 \kappa^2$ on  $\mathcal{H}_X$, thus, approaching $\scri^+$ through $\mathcal{H}_X$, the RHS already gives a formula to calculate the surface gravity at $\scri^+$.

However, one would like to have a formula written solely in terms of $\xi = X\mid_{\scri^+}$, its  derivatives and intrinsic objects of $\scri^+$. For that, consider the orthogonal splitting
\begin{equation}
 g_{ab} = \frac{1}{|\nabla \Om|_ g^2}\nabla_a \Om \nabla_b \Om + \gamma_{ab}
\end{equation}
where $\gamma_{ab}$ is the projector orthogonal  to $\nabla_a \Om$. The Levi-Civita derivative with respect to $\gamma$ will be denoted $D$. Then, by \eqref{XOmdivX2}, $X$ is written
\begin{equation}\label{normtanX}
 X_a = X(\Om) \frac{\nabla_a \Om}{|\nabla \Om|_g^2}+ \xi_a = \Om\frac{\dvg_g X}{n+1} \frac{\nabla_a \Om}{|\nabla \Om|_g^2} + \xi_a,\qquad  \xi_a \gamma^a{}_b = \xi_b.
\end{equation}
For simplicity, in the remainder the terms that are clearly of order $O(\Om)$ will not be explicity displayed in the calculations because they do not contribute at $\scri^+$ and will be collected within the $O(\Om)$ symbol.

First from \eqref{normtanX} it follows that
\begin{align}\label{asymnabX}
 \nabla_a X_b = \frac{\dvg_g X}{n+1} \frac{1}{|\nabla \Om|_g^4}  \nabla_a \Om \nabla_b \Om + \nabla_a \xi_b + O(\Om).
\end{align}
so the conformal Killing equation reads
\begin{align}
 & 2\nabla_{(a} X_{b)}  = 2 \frac{\dvg_g X}{n+1} \frac{1}{|\nabla \Om|_g^4}  \nabla_a \Om \nabla_b \Om + 2 \nabla_{(a} \xi_{b)} + O(\Om) = 2\frac{\dvg_g X}{n+1} g_{ab}\\
 \Longrightarrow & \frac{2}{n+1} \dvg_g X \left(g_{ab}- \frac{1}{|\nabla \Om|_g^4}  \nabla_a \Om \nabla_b \Om \right)  = 2 \nabla_{(a} \xi_{b)} + O(\Om).\label{asymptCKeq}
\end{align}
By observing now that the term $\nabla_a \xi_b$ has normal-normal component
\begin{align}\label{nornornabxi}
  \nabla^c \Om \nabla^d \Om \nabla_c \xi_d = - \nabla^c \Om \xi^d \nabla_c \nabla_d \Om = O(\Om),
\end{align}
 the trace of \eqref{asymptCKeq} gives
\begin{equation}\label{divXdivxi}
\dvg_g X = \frac{n+1}{n} \dvg_\gamma \xi + O(\Om).
\end{equation}
On the other hand, the normal-tangent components of $\nabla_a \xi_b$ are
\begin{align}
& \frac{1}{|\nabla \Om |_g^2}\left( \nabla_a \Om \nabla^c \Om \gamma^d{}_b \nabla_c \xi_d +\nabla_b \Om \nabla^d \Om \gamma^c{}_a \nabla_c \xi_d \right) = \frac{1}{|\nabla \Om |_g^2}\left( \nabla_a \Om \nabla^c \Om \gamma^d{}_b \nabla_c \xi_d -\nabla_b \Om \xi^d \gamma^c{}_a \nabla_c \nabla_d \Om \right)
 \label{nortanxi1}.
\end{align}
By \eqref{asymptCKeq} if follows
\begin{equation}
 \nabla^c \Om \gamma^d{}_b \nabla_c \xi_d = - \nabla^c \Om \gamma^d{}_b \nabla_d \xi_c + O(\Om),
\end{equation}
thus, \eqref{nortanxi1} yields
\begin{align}
& \frac{1}{|\nabla \Om |_g^2}\left( -\nabla_a \Om \nabla^c \Om \gamma^d{}_b \nabla_d \xi_c -\nabla_b \Om \xi^d \gamma^c{}_a \nabla_c\nabla_d \Om  \right) + O(\Om) \\
 = & \frac{1}{|\nabla \Om |_g^2}\left( \nabla_a \Om  \xi^c \gamma^d{}_b \nabla_d \nabla_c \Om   -\nabla_b \Om \xi^d \gamma^c{}_a \nabla_c\nabla_d \Om  \right) + O(\Om)  \\
  = &  \frac{2}{|\nabla \Om |_g^2} \nabla_{[a} \Om \gamma^c{}_{b]} \xi^d \nabla_c \nabla_d \Om  + O(\Om),
\end{align}
Therefore, taking into account \eqref{nornornabxi}, we get the decomposition of $\nabla_a \xi_b$ into normal and tangent components
\begin{align}
 \nabla_a \xi_b = \frac{2}{|\nabla \Om |_g^2} \nabla_{[a} \Om \gamma^c{}_{b]} \xi^d \nabla_c \nabla_d \Om  + D_a \xi_b + O(\Om)
\end{align}
and putting it into \eqref{asymnabX}, together with \eqref{divXdivxi},  we obtain
\begin{equation}\label{decomnabX}
 \nabla_a X_b = \frac{\dvg_\gamma \xi}{n} \frac{1}{|\nabla \Om|_g^4}  \nabla_a \Om \nabla_b \Om +\frac{2}{|\nabla \Om |_g^2} \nabla_{[a} \Om \gamma^c{}_{b]} \xi^d \nabla_c \nabla_d \Om  + D_a \xi_b + O(\Om).
\end{equation}
Hence, from this expression,
\begin{align}
 |\nabla X|_g^2 & = \left(\frac{\dvg_\gamma \xi}{n}\right)^2 + |D \xi|_\gamma^2 + \frac{4}{|\nabla \Om |_g^4} \nabla_{[a} \Om \gamma^c{}_{b]}  \nabla^{[a} \Om \gamma^{|c'|b]} \xi^d \xi^{d'} \nabla_c \nabla_d \Om  \nabla_{c'} \nabla_{d'} \Om  + O(\Om) \\
 & =  \left(\frac{\dvg_\gamma \xi }{n}\right)^2 + |D \xi|_\gamma^2 + \frac{2}{|\nabla \Om |_g^2} \gamma^{cc'} \xi^d \xi^{d'} \nabla_c \nabla_d \Om  \nabla_{c'} \nabla_{d'} \Om  + O(\Om) \\
 & =   \left(\frac{\dvg_\gamma \xi}{n}\right)^2 + |D \xi|_\gamma^2 + \frac{2}{|\nabla \Om |_g^2} \left(\frac{\Box \Om}{n+1}\right)^2 |\xi|_\gamma^2 + O(\Om)\label{normnabX}
\end{align}
where for the last equality we have made use of \eqref{hessOm}. As a final step, we decompose $D_a \xi_b$ into sum of skew and symmetric parts
\begin{equation}
 D_{a} \xi_b =  D_{[a}\xi_{b]} +  D_{(a}\xi_{b)} =\skwxi_{ab}  + \frac{\dvg_\gamma \xi}{n} \gamma_{ab} + O(\Om),\qquad \skwxi_{ab} := D_{[a}\xi_{b]},
\end{equation}
where we have made use of the ``asymptotic'' conformal Killing equation
\begin{equation}\label{asymCKVeqxi}
 2 D_{(a}\xi_{b)} = \frac{2}{n} \dvg_\gamma \xi + O(\Om),
\end{equation}
which follows from inserting decomposition \eqref{decomnabX} together with \eqref{divXdivxi} in the conformal Killing equation of $X$. Thus, \eqref{normnabX} becomes
\begin{align}
 |\nabla X|_g^2 &  =  (1+n) \left(\frac{\dvg_\gamma \xi}{n}\right)^2 + |\skwxi|_\gamma^2 + \frac{2}{|\nabla \Om |_g^2} \left(\frac{\Box \Om}{n+1}\right)^2 |\xi|_\gamma^2 + O(\Om).\label{nabX2}
\end{align}
Expression \eqref{nabX2} gives the first term in the RHS of \eqref{surfagrav2} in terms of $\xi$, as we wanted. For the last two terms in the RHS of \eqref{surfagrav2}, we substitute $X$ by the decomposition in \eqref{normtanX} and $\dvg_g X$ by \eqref{divXdivxi}
\begin{align}
 \frac{4}{n+1} X^a \nabla_a \dvg_g X + \frac{4}{n-1}R^{(tf)}_{ab} X^a X^b = \frac{4}{n} \xi^a D_a \dvg_\gamma \xi + \frac{4}{n-1}R^{(tf)}_{ab} \xi^a \xi^b + O(\Om).\label{tailX}
\end{align}

In summary, renaming
\begin{equation}
 \sgrav : =  |\tilde \nabla \tilde X|_{\tilde g}^2 - \frac{2}{\Om^2} |X|_g^2 |\nabla \Om|_g^2 + \frac{4 \Box \Om}{n+1} \frac{|X|^2_g}{\Om},
\end{equation}
and inserting \eqref{divXdivxi},\eqref{nabX2} and \eqref{tailX} into \eqref{surfagrav2}, one gets
\begin{align}
 \sgrav  = |\skwxi|_\gamma^2 -2  \left(\frac{\dvg_\gamma \xi}{n}\right)^2 +   \frac{2}{|\nabla \Om |_g^2} \left(\frac{\Box \Om}{n+1}\right)^2 |\xi|_\gamma^2  + \frac{4}{n} \xi^a D_a \dvg_\gamma \xi + \frac{4}{n-1}R^{(tf)}_{ab} \xi^a \xi^b + O(\Om).\label{surfagrav2.1}
\end{align}
Equation \eqref{surfagrav2.1} shows that the function $\mathcal{G}$ extends to $\scri^+$ and, by definition, it satisfies $\mathcal{G}\mid_{\mathcal{H}_X} = - 2 \kappa^2$. Hence, at $\closure{\mathcal{H}_X} \cap \scri^+ $, which is a subset of the zero set of $\xi$, $\zeroset{\xi}$, one has
\begin{align}\label{surfagrav3}
\kappa^2 =\left[\left(\frac{\dvg_\gamma \xi}{n}\right)^2 - \frac{1}{2} |\skwxi|_\gamma^2\right]_{\closure{\mathcal{H}_X} \cap \scri^+}.
\end{align}
It is interesting to evaluate the conformal behaviour of the RHS of \eqref{surfagrav3}. Under a conformal scaling $\hat \gamma = \omega^2 \gamma$, which we assume regular at $\zeroset{\xi}$, and denoting $\hat\xi_a = \hat \gamma_{ab} \xi^b$, we have by an analogous argument as the one used in equation \eqref{changeconnabX},
\begin{equation}
 D_a \xi_b = \hat D_a \hat \xi_b + \frac{1}{\omega}\left(2 \hat D_{[a}\omega \hat \xi_{b]} - \xi(\omega) \hat \gamma_{ab} \right).
\end{equation}
Thus, it is clear that the expression in the RHS of \eqref{surfagrav3} it is not everywhere an invariant under conformal scaling. However, at $\closure{\mathcal{H}_X} \cap \scri^+ \subset \zeroset{\xi}$ it does hold $ D_a \xi_b = \hat D_a \hat \xi_b$, hence the expresion \eqref{surfagrav3} for the surface gravity is a conformal invariant
\begin{align}\label{surfagrav3}
\kappa^2 =\left[\left(\frac{\dvg_\gamma \xi}{n}\right)^2 - \frac{1}{2} |\skwxi|_\gamma^2\right]_{\closure{\mathcal{H}_X} \cap \scri^+}= \left[\left(\frac{\dvg_{\hat\gamma} \xi}{n}\right)^2 - \frac{1}{2} |\hat \skwxi|_{\hat \gamma}^2\right]_{\closure{\mathcal{H}_X} \cap \scri^+}.
\end{align}

 \section{Discussion}\label{secdiscussion}

 In this paper, we have established a local characterization of $(n+1)$-dimensional de Sitter as the unique $(\Lambda > 0)$-vacuum spacetime admitting a Killing vector $X$ with a Killing horizon intersecting the conformal boundary $\mathscr{I}^+$. This intersection occurs precisely at the essential isolated zeros of the CKV $\xi = X|_{\mathscr{I}^+}$ and the characterization holds in a neighbourhood of these points. If $\xi$ is complete, the characterization turns out to be global.

 Additionally, based on the above result,  an asymptotic characterization and classification of Killing horizons in $(n+1)$-dimensional de Sitter has been obtained. By systematically reconstructing the bulk Killing vectors from their asymptotic boundary data, we showed that a boundary CKV $\xi$ extends to a Killing vector admitting a horizon if and only its conformal class  $[\xi]$ (see equation \eqref{confclass}) belongs to the subset parametrized by the values of the conformal invariants (see Appendix \ref{appcanonical})
 \begin{equation}
  \aa\leq 0,~\bb = \mu_1 = \dots = \mu_p = 0,\quad\mbox{if $n$ even},\quad\cc\leq 0,~ \mu_1 = \dots = \mu_p = 0,\quad\mbox{if $n$ odd}.
 \end{equation}
 This classifies the resulting horizons into non-degenerate bifurcate horizons (if  $\sigma < 0$) and degenerate non-bifurcate (if $\sigma = 0$).  A  conformally invariant formula expressing the surface gravity of these cases as $\kappa^2 = |\aa|$ has also been proven.

 \bigskip

 Beyond providing a complete asymptotic classification for pure de Sitter Killing horizons, these results are an essential building block for establishing a framework for the analysis of the asymptotic behavior of cosmological horizons in more general spacetimes with a positive cosmological constant. For simplicity, we restrict the following discussion to four spacetime dimensions.

Recall from  Section~\ref{secdata} that in four dimensions the asymptotic data consist of a Riemannian $3$-manifold $(\Sigma^3,\gamma)$ endowed with a traceless and transverse tensor $g_{(3)}$, respectively prescribing the zero and third order of the Fefferman-Graham expansion (cf. \cite{FeffGrah85,ambientmetric,Starob82}). In turn,
the manifold $(\Sigma^3,\gamma)$ prescribes the geometry of $\scri$, while $g_{(3)}$ is the electric part of the rescaled Weyl tensor, the latter characterization being  generally true only in four dimensions (cf. \cite{friedrich81,friedrich81bis,Fried86initvalue} and \cite{marspeondata21}.) As a result of the Fefferman-Graham asymptotic expansion, two metrics $g$ and $g'$ characterized by respective data $(\Sigma^3,\gamma,g_{(3)})$ and $(\Sigma^3,\gamma,g_{(3)}')$ coincide up to third order.

 This allows to consider a fixed background spacetime $(M,g)$, correspoding to data $(\Sigma^3,\gamma,g_{(3)} = 0)$. At the conformal boundary of \emph{the background spacetime}, we can prescribe more general data $(\Sigma'^3,\gamma,g_{(3)}')$, where $\Sigma'^3$ is the maximal domain $\Sigma'^3 \subseteq \Sigma^3$ where $g_{(3)}'$ is regular. The local evolution of the data $(\Sigma'^3,\gamma,g_{(3)}')$ yields a maximal Cauchy development $(M',g')$ with $M' \subseteq M$ (cf. Figure \ref{figbackground}.)
 In this context, de Sitter spacetime appears not just as another solution of the Einstein equations, but as a general background manifold for spacetimes with locally conformally flat $\scri$.

 \begin{figure}[ht]
\centering
\begin{tikzpicture}[scale=1.2]
\coordinate (TL) at (0,4);
\coordinate (TR) at (4,4);
\coordinate (BL) at (0,0);
\coordinate (BR) at (4,0);
\coordinate (C)  at (2,2);

\draw[thick] (TL)--(TR);
\draw[thick] (TL)--(BL);
\draw[thick] (TR)--(BR);
\draw[thick] (BL)--(BR);

\fill[pattern={
Lines[
angle=45,
distance=7.5pt,
line width=0.3pt
]},pattern color=black!80,]
(TL)--(BL)--(BR)--(TR)--cycle;

\fill[pattern={
Lines[
angle=-45,
distance=7.5pt,
line width=0.3pt
]},pattern color=red]
(TL)--(TR)--(C)--cycle;

\draw[thick] (TL)--(C)--(BR);
\draw[thick] (BL)--(C)--(TR);


\fill (TL) circle (2pt);
\fill (TR) circle (2pt);

\node[above] at ($(TL)!0.26!(TR)$)
{$(\Sigma^3,\gamma,g_{(3)}=0)$};

\node[above,text=red] at ($(TL)!0.75!(TR)$)
{$(\Sigma^3,\gamma,g_{(3)}')$};

\node[left]  at (TL) {$\mathcal H_1\cap\mathscr I^{+}$};
\node[right] at (TR) {$\mathcal H_2\cap\mathscr I^{+}$};

\node at (1.2,2.2) {$\mathcal H_1$};
\node at (2.8,2.2) {$\mathcal H_2$};


\end{tikzpicture}
\caption{\small\label{figbackground} Penrose diagram of a bifurcate horizon intersecting $\scri^+$ in background de Sitter (black lines). Another spacetime metric (red lines) can be propagated from asymptotic data in  the domain of dependence of $\Sigma^3-\{\mathcal{H}_1 \cup\mathcal{H}_2 \cap \scri^+ \}$.}
\end{figure}

The main interest of the above construction lies in the fact that the background manifold provides a topology, that of $M$, containing points missing in $M'$. For instance, cosmological horizons $\mathcal{H}_X$ often appear as Cauchy horizons of the domain of dependence of $\scri^+$. Consequently, the intersection $\closure{\mathcal{H}_X} \cap \scri^+$ is empty with respect to the topology of $M'$, but it may exist as a limit point of $M'$ with respect to the topology of $M$. Thus, this machinery is well suited to address whether the presence of such a horizon appers as a pathological behavior in the asymptotic data of $g'$, particularly in $g_{(3)}'$.

The situation described above arises in physically important spacetimes, the simplest being the Schwarzschild-de Sitter spacetime. Its asymptotic data consists of a conformally flat manifold $(\Sigma^3,\gamma)$ together with
\begin{equation}\label{dataKdSLike}
  g_{(3)}' = |\xi|^{-5} (\xi \otimes \xi)^{\mathrm{tf}},
\end{equation}
where $\xi$ is a CKV \emph{conformally equivalent to a homothety} and the superscript $\mathrm{tf}$ denotes the trace-free part. The tensor $g_{(3)}'$ satisfies the KID equation \eqref{eqKID} with respect to $\xi$ and extends to the Killing generator $X$ of a cosmological (as well as a black hole) horizon $\mathcal{H}_X$. The intersection $\overline{\mathcal{H}_X} \cap \scri^+$ is empty with respect to the Schwarzschild-de Sitter topology, but non-empty with respect to the background de Sitter topology, occurring exactly at the essential isolated zeros of $\xi$. Thus, $g_{(3)}'$ diverges there, exhibiting a pathological behavior as conjectured in the previous paragraph.

In addition, the more general family of Kerr-de Sitter black holes, and, more generally, the so-called Kerr-de Sitter-like class (cf.~\cite{KdSnullinfty,KdSlike}), have data of the form \eqref{dataKdSLike} with  $(\Sigma^3,\gamma)$  conformally flat, with the difference that $\xi$ lies in a different conformal class. This  suggests that the framework discussed here may prove useful in the study of cosmological black hole uniqueness.

\section*{Acknowledgements}
The author is grateful to Marc Mars for reviewing this manuscript, as well as Igor Khavkine and Alfonso García-Parrado for helpful input and ideas. The author acknowledges financial support under the project PID2024-155175NB-I00 (Agencia Estatal de Investigación, Ministerio de Ciencia, Innovación y Universidades) and Departamento de Matem\'atica Aplicada a las TIC-D540.

 \appendix

 \section{Classification of conformal Killing vector fields in $\mathbb{S}^n$}\label{appcanonical}

In this section, we review the classification of CKVs modulo conformal isometries in conformally flat manifolds. We follow closely the results in \cite{marspeon21.1, marspeon21}, where the classification is achieved by constructing a unified canonical form comprising all equivalence classes. Other classifications are possible (see e.g. \cite{KdSlike} or \cite{djokovic83} for a reference in a more algebraic language), but the one in this Appendix suits our purposes best.

We consider the $n$-sphere $\mathbb{S}^n$, respresented by the one-point compatification $\mathbb{S}^n = \mathbb{E}^n \cup \{ \infty \}$, as our model space. We denote the conformal group of $\S^n$ (resp. $\mathbb{E}^n$) by $\mathrm{Conf}(\mathbb{S}^n)$ (resp. $\mathrm{Conf}(\mathbb{E}^n)$) and the Lie algebra of CKVs $\mathrm{CKill}(\mathbb{S}^n)$ (resp. $\mathrm{CKill}(\mathbb{E}^n)$). The classification of $\mathrm{CKill}(\mathbb{S}^n)/\mathrm{Conf}(\mathbb{S}^n)$ is globally well-defined and maps equivalently to the local classification in any locally conformally flat manifold (cf. \cite{marspeonKSKdS21}.)

An elementary result in conformal geometry (cf. \cite{IntroCFTschBook}) is that the Lie group $\mathrm{Conf}(\mathbb{S}^n)$ is isomorphic to the group of isotropies $O^+(1,n+1)$ of an ambient ($n+2$)-dimensional Minkowski space $\mathbb{M}^{1,n+1}$, also known as the orthochronous component of the Lorentz group.
From the isomorphism between the Lie groups, it follows the isomorphism between the Lie algebras $ \mathfrak{o}(1,n+1)$ and $ \mathrm{CKill}(\mathbb{S}^n)$. We represent the elements of the former as skew-symmetric endomorphisms of $\mathbb{M}^{1,n+1}$, $\skwend{\mathbb{M}^{1,n+1}}$. The explicit form of the isomorphism is given in the conformally equivalent representation of $\S^n$ as the Euclidean space $\mathbb{E}^n$

We write an element $F \in \skwend{\mathbb{M}^{1,n+1}}$ with respect to an orthonormal basis of $\mathbb{M}^{1,n+1}$ as
\begin{equation}\label{skwmatrix}
 F = \begin{pmatrix}
  0 &  -\nu & -\a^t + \b^t/2 \\
  -\nu & 0 & - \a^t - \b^t/2 \\
  -\a + \b/2 & \a +\b/2 & -\pmb{\omega}
  \end{pmatrix},
\end{equation}
with $\nu \in \R$, $\a,\b \in \R^n$ row vectors, and $\pmb{\omega}$ an $n \times n$ skew matrix $\pmb{\omega}^t = - \pmb{\omega}$. Endowing $\mathbb{E}^n$ with Cartesian coordinates $\{ y^A\}_{A=1}^n$, with the choices in \cite{marspeon21}, \eqref{skwmatrix} maps  to
\begin{equation}\label{CKVFgeneral}
\xi_F=
\lr{
\b^A+\nu y^A+(\a_B y^B)y^A
-\frac12(y_B y^B)\a^A
-{\omega^A}_B y^B
}\partial_{y^A} \in \conf{\mathbb{E}^n}.
\end{equation}

As a consequence of this isomorphism, classifying CKVs modulo $\mathrm{Conf}(\mathbb{S}^n)$ is equivalent to classify skew-symmetric endomorphisms $F \in \mathfrak{o}(1,n+1)$ modulo the adjoint action of $O^+(1,n+1)$. In other words the space of \emph{adjoint orbits}
\begin{equation}
[F] = \{ F' \in\skwend{\mathbb{M}^{1,n+1}} \tq F' = \mathrm{Ad}_L(F) = L \cdot F \cdot L^{-1},\, \forall L \in O^+(1,n+1) \}
\end{equation}
is in one-to-one correspondence with the space of \emph{conformal classes}
\begin{equation}
 [\xi]=\{ \xi' \in \mathrm{CKill}(\S^n) \tq \xi' = \varphi_\star(\xi),\, \forall \varphi \in \mathrm{Conf}(\mathbb E^n)\}.
\end{equation}

The classification of both quotient spaces is achieved in \cite{marspeon21,marspeon21.1} by finding a canonical form representing each one of these classes. For the adjoint orbits $[F]$, the canonical representative is, depending on the parity of the dimension $n$, one of the following matrices:
\begin{align}
 F_{can} & = \begin{pmatrix}
  0 & 0 & -1 + \frac{\sigma}{4} & \frac{\tau}{4} \\
  0 & 0 & -1-\frac{\sigma}{4} & -\frac{\tau}{4} \\
  -1+\frac{\sigma}{4} & 1+\frac{\sigma}{4} & 0 & 0 \\
  \frac{\tau}{4} & \frac{\tau}{4} & 0 & 0
 \end{pmatrix} \bigoplus_{i = 1}^{p} \begin{pmatrix}
  0 & -\mu_i \\
  \mu_i & 0
 \end{pmatrix}\quad\quad \mbox{if $n$ even}, \label{decompFeven} \\
 F_{can} & = \begin{pmatrix}
   0 &  0 & -1 + \frac{\sigma}{4} \\
   0 & 0 & - 1 - \frac{\sigma}{4} \\
   -1 + \frac{\sigma}{4} & 1 + \frac{\sigma}{4} & 0
  \end{pmatrix} \bigoplus_{i = 1}^{p} \begin{pmatrix}
  0 & -\mu_i \\
  \mu_i & 0
 \end{pmatrix}\quad\quad\mbox{if $n$ odd}\label{decompFodd},
\end{align}
where
\begin{equation}
 p:=[(n+1)/2] -1.
\end{equation}
The parameters $\sigma, \tau, \mu_i \in \mathbb{R}$ are $O^+(1,n+1)$-invariants determined by the roots of the characteristic polynomial associated with\footnote{The minus sign is just a convention.} $-F^2$. First, because of skew-symmetry all roots are at least double and there is always a zero root when $n$ is odd, so in order to remove this trivial information from the spectrum we define
\begin{equation}
\mathcal{Q}(x):=\sqrt{P_{F^2}(-x)},\qquad \mbox{$n$ even,}
\qquad\qquad
\mathcal{Q}(x):=\sqrt{P_{F^2}(-x)/x},\qquad \mbox{$n$ odd.}
\end{equation}
The parameters are then extracted from the sorted roots of $\mathcal{Q}(x)$.
The ordering of the roots is based on a discrete invariant, the causal character of $\ker F$, that allows to distinguish adjoint orbits with identical spectra\footnote{Other discrete parameters may equivalently be used to distinguish orbits with identical spectra, for instance, the matrix rank of $F$.} in the following way:

\begin{definition}\label{defgammamu}
 Let $\mathrm{Roots}\lr{\mathcal{Q}_{F^2}}$ denote roots of $\mathcal{Q}_{F^2}(x)$ repeated as many times as their multiplicity. Then
 \begin{enumerate}
  \item[a)] If $n$ is odd, $\lrbrace{\sigma; \mu_1^2, \cdots, \mu_{p}^2} := \mathrm{Roots}\lr{\mathcal{Q}_{F^2}}$, sorted by $\sigma \geq \mu_1^2\geq \cdots \geq \mu_{p}^2$ if $\ker F$ is timelike, and $\mu_1^2\geq \cdots \geq \mu_{p}^2\geq 0 \geq \sigma$ otherwise.
  \item[b)] If $n$ is even, $\sigma := \mu_s^2 - \mu_t^2$ and $\tau :=2 |\mu_t \mu_s|$, where $\lrbrace{-\mu_t^2, \mu_s^2; \mu_1^2, \cdots, \mu_{p}^2} := \mathrm{Roots}\lr{\mathcal{Q}_{F^2}}$. These are sorted by $\mu_1^2\geq \cdots \geq \mu_{p}^2\geq\mu_s^2 = -\mu_t^2 = 0$ if $\ker F$ is degenerate, and $\mu_s^2 \geq \mu_1^2\geq \cdots \geq \mu_{p}^2\geq 0 \geq  -\mu_t^2$ otherwise.
 \end{enumerate}
\end{definition}

\begin{remark}
  From the discussion in this Appendix, observe that to calculate the parameters in Definition \ref{defgammamu} from a CKV in $\mathbb{E}^n$, one requires to find Cartesian coordinates $\mathbb{E}^n$. This may not be easy from a given (conformally) flat space endowed with non-Cartesian coordinates. A totally covariant definition of the parameters can be found in \cite{marspeon22}.
\end{remark}

 Via the isomorphism in \eqref{CKVFgeneral} between $\skwend{\mathbb{M}^{1,n+1}}$ and $\mathrm{CKill}(\mathbb{E}^n)$, $F_{can}$ induces a canonical representative $\xi_{can}$ for the orbits $[\xi]$
\begin{equation}
 \xi_{can} = \tilde \xi + \sum_{i=1}^p \mu_i\eta_i,
\end{equation}
 where $\tilde \xi$ and $\eta_i$ are mutually orthogonal CKVs whose form depends on the parity of $n$.

For $n$ even, using coordinates $\{ y^A \} = \{z_1,z_2,\{x_i \}_{i=1}^{2p} \}$, we have:
 \begin{align}
  \tilde \xi &= \frac{1}{2}\Big(\sigma + z_1^2 - z_2^2 - \sum\limits_{i=1}^{2p}x_i^2 \Big) \partial_{z_1} + \lr{\frac{\tau}{2} + z_1 z_2} \partial_{z_2} + z_1 \sum\limits_{i=1}^{2p} x_i \partial_{x_i},\label{CKVFxiapeven}\\
  \eta_i  & = x_{2i-1} \partial_{x_{2i}} - x_{2i} \partial_{x_{2i-1}}.   \label{CAKVFapeven}
 \end{align}
For $n$ odd, using $\{ y^A \} = \{z_1,\{x_i \}_{i=1}^{2p} \}$:
 \begin{align}
   \tilde \xi &= \frac{1}{2}\Big(\sigma + z_1^2  - \sum\limits_{i=1}^{2p}x_i^2 \Big) \partial_{z_1}  + z_1 \sum\limits_{i=1}^{2p} x_i \partial_{x_i},\label{CKVFxiapodd}\\
  \eta_i  & = x_{2i-1} \partial_{x_{2i}} - x_{2i} \partial_{x_{2i-1}}.\label{CAKVFapodd}
 \end{align}

The zero set of these canonical CKVs is  determined by  evaluation of their components.

\begin{lemma}\label{zerosetxican}
 Let $n$ be even. The zero set of $\xi_{can}$, is characterized as follows:
 \begin{enumerate}[a)]
  \item If $\tau \neq 0$, $\xi_{can}$ has exactly two isolated zeros at $z_1 =  \pm\left(\frac{1}{2}(-\sigma + \sqrt{\sigma^2 + \tau^2})\right)^{1/2}$, $z_2 = - \frac{\tau}{2 z_1}$, and $x_i = 0$.
  \item If $\tau = 0$ and $\sigma \leq 0$, $\xi_{can}$ has two isolated zeros at $z_1 =\pm (-\sigma)^{1/2}$ (which merge to one if $\sigma = 0$), with $z_2 = 0$ and $x_i = 0$.
  \item If $\tau = 0$, $\sigma > 0$, and exactly $k$ parameters among $\{\mu_i\}$ are non-zero, the zeros form an $(n-2k-2)$-dimensional submanifold
\begin{align}
 \mathcal{Z}(\xi_{can}) & = \Big\{  z_2^2 + \sum\limits_{i=2k+1}^{2p}x_i^2  ={\sigma}, z_1 = 0 \Big\} \bigcap_{i=1}^{2k} \Big\{x_i = 0 \Big\},
\end{align}
When $k  = p$, then $\zeroset{\xi_{can}}$ contains only two isolated points at $z_1 = 0, z_2 = \pm\aa^{1/2},x_i = 0.$
 \end{enumerate}
\end{lemma}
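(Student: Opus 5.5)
This is a direct computation: write out the components of $\xi_{can}$ and solve $\xi_{can}=0$ case by case. In the $\partial_{x}$ directions, each pair $(x_{2i-1},x_{2i})$ has component
\begin{equation}
z_1(x_{2i-1},x_{2i}) + \mu_i(-x_{2i},x_{2i-1}).
\end{equation}
The $2\times 2$ matrix $z_1 I + \mu_i J$ has determinant $z_1^2+\mu_i^2$. So this pair vanishes if and only if $x_{2i-1}=x_{2i}=0$, unless $z_1=0$ and $\mu_i=0$. The other two equations are
\begin{equation}
\tfrac12(\sigma + z_1^2 - z_2^2 - \rho^2)=0,\qquad \tfrac{\tau}{2}+z_1 z_2=0,
\end{equation}
and these decide every case.

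For case a), $\tau\neq 0$ forces $z_1\neq 0$ and $z_2=-\tau/(2z_1)$. Since $z_1\neq 0$, the previous step gives $x_i=0$ for all $i$. The first equation then becomes $\sigma+z_1^2-\tau^2/(4z_1^2)=0$, which is a quadratic in $u=z_1^2$:
\begin{equation}
u^2+\sigma u-\tau^2/4=0.
\end{equation}
Its only positive root is $u=\tfrac12(-\sigma+\sqrt{\sigma^2+\tau^2})$, which gives exactly the two points $\pm z_1$ stated.

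For $\tau=0$ the second equation reads $z_1z_2=0$, so I split on $z_1$.

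If $z_1\neq 0$, then $z_2=0$ and $x_i=0$, and the first equation gives $z_1^2=-\sigma$. This has solutions only when $\sigma<0$, namely $z_1=\pm(-\sigma)^{1/2}$.

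If $z_1=0$, the first equation gives $z_2^2+\rho^2=\sigma$. The pairs with $\mu_i\neq 0$ are forced to vanish, and the remaining coordinates are free subject to this sphere equation.
\begin{itemize}
\item When $\sigma<0$ the branch $z_1=0$ is empty, so only the two points survive.
\item When $\sigma=0$ both branches collapse to the origin, which is the single merged zero of case b).
\item When $\sigma>0$ the first branch is empty. The zero set is the round sphere $\{z_1=0,\ z_2^2+\sum_{\mu_i=0}|x|^2=\sigma\}$ inside the coordinates left free by the $\mu_i=0$ pairs. Relabelling so that the nonzero $\mu_i$ come first, there are $n-1-2k$ free coordinates, so this is a sphere of dimension $n-2k-2$. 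This is case c).
\end{itemize}
When $k=p$ it reduces to the points $z_2=\pm\sigma^{1/2}$.

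It remains to check that the points in a) and b) are isolated. They are the only solutions, so they are isolated in $\zeroset{\xi_{can}}$. Lemma \ref{lemmazerosinfty} with $\a^A=\delta^A{}_{z_1}\neq 0$ rules out an extra zero at infinity, so these exhaust the zero set on $\S^n$. The one delicate point is the bookkeeping of which $x$-pairs are forced to vanish in case c); the relabelling handles it.
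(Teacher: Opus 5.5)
Your proof is correct and is essentially the paper's argument. The determinant $z_1^2+\mu_i^2$ of each $2\times 2$ block encodes the same fact as the pointwise orthogonality of $\tilde\xi$ and the $\mu_i\eta_i$, which the paper uses to write $\zeroset{\xi_{can}}$ as an intersection, and your split on $\tau$, on $z_1z_2=0$ and on the sign of $\sigma$ matches the paper's case analysis.

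Only cosmetic remarks:
\begin{itemize}
\item For $\sigma=0$ the $z_1\neq 0$ branch is empty rather than ``collapsing'': the origin comes solely from the $z_1=0$ branch.
\item No relabelling is needed in c), since Definition~\ref{defgammamu} already orders the $\mu_i^2$ decreasingly.
\item Your use of Lemma~\ref{lemmazerosinfty} to exclude a zero at infinity is a sound addition; the paper makes this point in the main text rather than in the proof.
\end{itemize}
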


\begin{proof}
 By orthogonality of $\tilde \xi$ and $\eta_i$, the zero set $\xi_{can}$ satisfies $\zeroset{\xi_{can}} = \zeroset{\tilde \xi} \bigcap_{i=1}^p \zeroset{\mu_i \eta_i}$, thus the Lemma follows by evaluating the vanishing conditions of \eqref{CKVFxiapeven} and \eqref{CAKVFapeven}.

 Clearly, if $\mu_i = 0$ then $\zeroset{\mu_i \eta_i} = \mathbb{R}^n$, hence, we only take into account non-zero $\mu_i$. Assuming that  $\mu_1, \cdots, \mu_k \neq 0$ we get
 \begin{equation}\label{zerosetai}
  \bigcap_{i=1}^p \zeroset{\mu_i \eta_i} =  \bigcap_{i=1}^{2k} \{x_i = 0 \}.
 \end{equation}
Consider first $\tau \neq 0$. The vanishing of the $z_2$-component of $\tilde \xi$,
\begin{equation}\label{z2component}
\tilde \xi^{z_2} = \frac{\tau}{2} + z_1 z_2 = 0,
\end{equation}
prevents both $z_1$ and $z_2$ from becoming zero.
Thus, by the $i$-components of $\tilde \xi$,
\begin{equation}
 \tilde \xi^i = z_1 x_i,
\end{equation}
we have $x_i = 0$ for $i = 1,\cdots, 2p$. Therefore, writing $z_2 = -\tau/(2z_1)$ and inserting it into the $z_1$-component
\begin{equation}
  \tilde \xi ^{z_1}=\frac{1}{2}\Big(\sigma + z_1^2 - z_2^2 - \sum\limits_{i=1}^{2p}x_i^2 \Big)
\end{equation}
yields a biquadratic poloynomial in $z_1$ whose only real roots are the values given in part $a)$ of the Lemma. Note the this case $\zeroset{\tilde \xi} \subseteq \bigcap_{i=1}^p \zeroset{\mu_i \eta_i}$.

For $\tau = 0$, by \eqref{z2component}, either $z_1$ or $z_2$ must vanish. Consider first $\aa <0 $. If $z_1 = 0$,
then $\tilde \xi^i = 0$ for $i = 1 \cdots 2p$, but then $\tilde \xi^{z_1} = 0$ has no real solution. Hence, $z_1\neq 0$, $z_2 = 0$ and by the $i$-components $\tilde \xi^i $, one has $x_i = 0$ for $i = 1,\cdots, 2p$. Then, evaluating $z_1$-component one obtains $z_1 = \pm(-\aa)^{1/2}$. The $\aa = 0$ case follows from an analogous argument, showing part $b)$ of the lemma.

Consider now $\aa >0$. If $z_1 = 0$, then $\tilde \xi^i = 0$ for $i = 1 \cdots 2p$ and $\tilde \xi^{z_1} = 0$ yields $  z_2^2 + \sum\limits_{i=2k+1}^{2p}x_i^2  ={\sigma}$. Intersecting this with the zeros of the $\mu_i \eta_i$ vectors (equation \eqref{zerosetai}) directly gives the submanifold $\mathcal{Z}(\xi_ {can})$ in part $c)$ of the Lemma. On the other hand, if $z_2 = 0$, $i$-components require either $z_1 = 0$ or $x_i =0$, $i = 1,\cdots, 2p$. The latter is incompatible with real solutions of $\tilde \xi^{z_1} = 0$, so it must be $z_1= z_2 = 0$ and $ \sum\limits_{i=2k+1}^{2p}x_i^2  ={\sigma}$. Hence, starting from $z_2= 0$ one just finds a subset of $\mathcal{Z}(\xi_ {can})$. This finishes part $c)$ of the Lemma.

\end{proof}

\begin{corollary}\label{corozerosetxican}
 Let $n$ be odd. The zero set of $\xi_{can}$, is characterized as follows:
 \begin{enumerate}[a)]

  \item If $\sigma \leq 0$, $\xi_{can}$ has two isolated zeros at $z_1 =\pm (-\sigma)^{1/2}$ (which merge to one if $\sigma = 0$), with $z_2 = 0$ and $x_i = 0$.
  \item If $\sigma > 0$, and exactly $k$ parameters among $\{\mu_i\}$ are non-zero, the zeros form an $(n-2k-2)$-dimensional submanifold
\begin{align}
 \mathcal{Z}(\xi_{can}) & = \Big\{  \sum\limits_{i=2k+1}^{2p}x_i^2  ={\sigma}, z_1 = 0 \Big\} \bigcap_{i=1}^{2k} \Big\{x_i = 0 \Big\},
\end{align}
When $k  = p$, then $\zeroset{\xi_{can}} =  \emptyset$.
 \end{enumerate}
\end{corollary}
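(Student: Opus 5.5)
The plan is to deduce the odd case from Lemma \ref{zerosetxican}, using the embedding $\iota:\mathbb{E}^n\hookrightarrow\mathbb{E}^{n+1}$, $\iota(\mathbb{E}^n)=\{z_2=0\}$, introduced in Section \ref{secsufficient}. Comparing \eqref{CKVFxiapodd} with \eqref{CKVFxiapeven}, the odd canonical field $\xi_{can}$ is exactly the even canonical field on $\mathbb{E}^{n+1}$ with $\tau=0$, restricted to $\{z_2=0\}$. The $z_2$-component of that even field is $z_1z_2$, which vanishes on $\{z_2=0\}$, so the even field is tangent to this hyperplane. Hence $\zeroset{\xi_{can}}$ is the intersection of the even zero set (with $\tau=0$) and $\{z_2=0\}$. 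Alternatively, I would repeat the proof of Lemma \ref{zerosetxican} verbatim after deleting the $z_2$-equation. The orthogonality of $\tilde\xi$ and $\eta_i$ again gives $\zeroset{\xi_{can}}=\zeroset{\tilde\xi}\cap\bigcap_i\zeroset{\mu_i\eta_i}$, and \eqref{zerosetai} still holds.

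\textbf{Case $\sigma\le 0$.} The $x_i$-components $z_1x_i=0$ force $z_1=0$ or all $x_i=0$. If $z_1=0$, the $z_1$-component becomes $\sigma-\sum x_i^2=0$. This has no solution when $\sigma<0$, and when $\sigma=0$ it gives only the origin, which is also the $z_1=0$ case of the other branch. If instead all $x_i=0$, then $\sigma+z_1^2=0$, so $z_1=\pm(-\sigma)^{1/2}$. These two points merge when $\sigma=0$. The $\mu_i\eta_i$ vanish at $x_i=0$, so the $\mu_i$ play no role here. (The ``$z_2=0$'' in the statement is vacuous in odd dimensions, or reflects the embedding.)

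\textbf{Case $\sigma>0$.} Taking all $x_i=0$ gives $z_1^2=-\sigma$, which has no real solution, so necessarily $z_1=0$ and $\sum_{i=1}^{2p}x_i^2=\sigma$. Intersecting with \eqref{zerosetai}, i.e. imposing $x_1=\dots=x_{2k}=0$, leaves $\sum_{i=2k+1}^{2p}x_i^2=\sigma$ with $z_1=0$. This is a sphere of dimension $2p-2k-1=n-2k-2$. When $k=p$ the sum is empty, so the condition reads $0=\sigma$, which is impossible, and the zero set is empty.

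The only delicate point is the bookkeeping of the $\sigma=0$ merging, together with checking that no zeros sit at infinity. The latter follows from Lemma \ref{lemmazerosinfty}, since $\a^A=\delta^A{}_{z_1}\neq0$ for $\xi_{can}$.
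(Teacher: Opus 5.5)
Your proposal is correct, and your primary route is exactly the paper's proof. You push the odd field forward via $\iota$ to the even-dimensional canonical field with $\tau=0$ and intersect the zero set from Lemma \ref{zerosetxican} with $\{z_2=0\}$. The direct case-by-case computation you add is also valid and independently confirms both cases; the check at infinity via Lemma \ref{lemmazerosinfty} is a harmless extra that the stated corollary does not require.
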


\begin{proof}
Consider the isometric embedding
 \begin{equation}\label{embeddingapp}
  \iota: \mathbb{E}^n \hookrightarrow \mathbb{E}^{n+1},\qquad \iota(\mathbb{E}^n)= \{z_2=0 \}.
 \end{equation}
 Then $\iota_\star(\xi_{can})$ is an even dimensional canonical CKV with $\tau=0$ restricted to $z_2 = 0$. The Corollary now follows by applying the results of Lemma \ref{zerosetxican} with $\tau = 0$ at $z_2 = 0$.
\end{proof}

\begin{remark}\label{remarkrotation}
To give a more geometric perspective to our results, we single out other representatives of $\xi \in [\xi_{\mathrm{can}}]$ whose associated flows are easy to identify. This turns out to be simpler when expressed in terms of the associated endomorphisms.

We look for a representative $F = \operatorname{Ad}_L(F_{\mathrm{can}})$ with the same block structure as $F_{\mathrm{can}}$. To this end, we restrict to block-diagonal transformations $L$ belonging to the subgroup $O^+(1,3) \times SO(2)^p$ if $n$ is even (resp. $O^+(1,2) \times SO(2)^p$ if $n$ is odd), embedded block-diagonally in $O^+(1,n+1)$. Specifically, consider $L$ of the form $L = L_0 \oplus \mathbb{I}_{2p}$, where $L_0 \in O^+(1,3)$ if $n$ is even (resp. $L_0 \in O^+(1,2)$ if $n$ is odd) and $\mathbb{I}_{2p}$ is a $2p \times 2p$ identity matrix. Under these transformations, determining whether $F$ and $F_{\mathrm{can}}$ belong to the same orbit reduces to analyzing the first $4 \times 4$ block of $F$ if $n$ is even (resp. $3 \times 3$ block if $n$ is odd). Namely, extracting this block from $F$, we can calculate its $\sigma, \tau$ (resp. $\sigma$) parameters according with Definition~\ref{defgammamu} and these must coincide with the values prescribed by $F_{\mathrm{can}}$.

In terms of CKVs, this means that we have a representative $\xi \in [\xi_{\mathrm{can}}]$ given by
\begin{equation}
\xi = \varphi_\star(\tilde\xi) + \sum_{i=1}^{p} \mu_i \varphi_\star(\eta_i), \qquad \varphi \in \operatorname{Conf}(\mathbb{E}^n),
\end{equation}
with $\varphi$ partially fixed so that $\varphi_\star(\eta_i) = \eta_i$, leaving the remaining freedom to acting on $\tilde\xi$. For $n$ even, $\varphi_\star(\tilde\xi)$ can be brought to one of the following forms:
\begin{enumerate}
  \item If $\tau \neq 0$, an orthogonal sum of a homothety with divergence $\nu = \sqrt{(-\aa + \sqrt{\aa^2 + \bb^2})/2}$ and a rotation in the $(z_1, z_2)$-plane with angular velocity $\mu_s = \sqrt{(\aa + \sqrt{\aa^2 + \bb^2})/2}$.

  \item If $\tau = 0$, then:
  \begin{itemize}
    \item if $\sigma > 0$, a rotation in the $(z_1, z_2)$-plane;
    \item if $\sigma = 0$, a translation along the $z_1$-axis with angular velocity $\mu_s = \sqrt{\aa}$;
    \item if $\sigma < 0$, a homothety with divergence $\nu = \sqrt{-\aa}$.
  \end{itemize}
\end{enumerate}
For $n$ odd, the cases $\sigma \le 0$ are identical to the even-dimensional ones with $\tau = 0$. If $\sigma > 0$, $\varphi_\star(\tilde\xi)$ can also be reduced to a rotation, although not while simultaneously keeping all $\varphi_\star(\eta_i) = \eta_i$.
\end{remark}

\section{Calculation of $|dF|_{g_M}^2$ }\label{appdF}

Starting from the factorization
\begin{equation}
  F = |X|_{g_M}^2 = (\Om^2 - \alpha_+) (\Om^2 - \alpha_-),
\end{equation}
where recall
\begin{equation}
\alpha_\pm := (z_1^2 + z_2^2 + \rho^2 - \cc) + 2\beta_\pm,\qquad \beta_\pm := \pm\Big(-\cc z_1^2 - \sum_i^p \mu_i^2 \rho_i^2 - \left(\frac{\tau}{2}\right)^2 - \tau z_1 z_2\Big)^{1/2},
\end{equation}
we determine the causal character of the zero level sets of $F$ by computing its normal vector evaluated at the roots $\Om^2 = \alpha_\pm$. The calculation can be worked out simultaneously for both roots since it is analogous.

Taking the differential of $F$ yields
\begin{equation}
  dF = (\alpha_\pm - \alpha_\mp) (2\Om d\Om - d\alpha_\pm).
\end{equation}

First, we compute  $(\alpha_\pm - \alpha_\mp) d\alpha_\pm$. We start by taking the differential
\begin{equation}
  d\alpha_\pm = 2z_1 dz_1 + 2z_2 dz_2 + 2 \rho d \rho + \frac{2}{2\beta_\pm} \left(-2\cc z_1 dz_1 - \tau z_1 dz_2 - \tau z_2 dz_1 - 2\sum_{i=1}^p \mu_i^2 \rho_i d \rho_ i \right).
\end{equation}
To eliminate the squared roots, at  $\Om^2 = \alpha_\pm = (z_1^2 + z_2^2 + \rho^2 - \cc) + 2\beta_\pm$, we can substitute $2 \beta_{\pm}$ by
\begin{equation}
  2\beta\pm = \Om^2 - z_1^2 - z_2^2 - \rho^2 + \cc
\end{equation}
and find a common denominator
\begin{align}
d \alpha_\pm = & \frac{2}{2\beta_\pm} \Big( z_1 \left(  \Om^2 - z_1^2 - z_2^2 - \rho^2 + \cc - 2\cc  \right) dz_1 + z_2 (\Om^2 - z_1^2 - z_2^2 - \rho^2 + \cc) dz_2  \nonumber \\
 + \, &  \rho (\Om^2 - z_1^2 - z_2^2 - \rho^2 + \cc)  d \rho - 2\sum_{i= 1}^p\mu_i^2 \rho_i d\rho_i - \tau z_1 dz_2 - \tau z_2 dz_1 \Big).\label{dalpha1}
\end{align}
Denoting $g_M^\sharp$ the contravariant metric defined by $g_M$, observe that $g^\sharp_M(d \rho_i,d \rho_j) = \delta_{ij}$ and taking also into account that $\rho d \rho = \sum_i^p \rho_i d\rho_i$, it follows
\begin{equation}
 g_M^\sharp(\rho d \rho, \rho_i d \rho_i) = \rho_i^2.\label{gdrhodrhoi}
\end{equation}
Also noting that $(\alpha_\pm - \alpha_\mp) = 4 \beta_\pm$, by \eqref{dalpha1} and \eqref{gdrhodrhoi},  we have
\begin{align}
 (\alpha_\pm - \alpha_\mp)^2 |d\alpha_\pm|_{g_M}^2 =  16  \left( z_1^2 (\Om^2 - z_1^2 - z_2^2 - \rho^2 + \cc - 2\cc)^2 + z_2^2 (\Om^2 - z_1^2 - z_2^2 - \rho^2 + \cc)^2 \right. \nonumber \\
 +\rho^2  (\Om^2 - z_1^2 - z_2^2 - \rho^2 + \cc)^2 + 4 \sum_{i=1}^p\left(\mu_i^4 \rho_i^2 -\mu_i^2 \rho_i^2(\Om^2 - z_1^2 - z_2^2 - \rho^2 + \cc)\right)
 \\
  + \tau^2 z_2^2 - 2\tau z_1 z_2 (\Om^2 - z_1^2 - z_2^2 - \rho^2 + \cc)
 \left. + \, \tau^2 z_1^2 - 2\tau z_1 z_2 (\Om^2 - z_1^2 - z_2^2 - \rho^2 + \cc - 2\cc) \right). \label{normdalpha+}
\end{align}
We now simplify expresion \eqref{normdalpha+}. By substituting the identity
\begin{equation}
 \left(\Om^2 - z_1^2 - z_2^2 - \rho^2 + \cc\right)^2 = 4 \beta_\pm^2 = 4 \left(-\cc z_1^2 - \sum_{i=1}^p\mu_i^2 \rho_i^2 - \frac{\tau^2}{4} - \tau z_1 z_2\right),
\end{equation}
and suitably cancelling, the first term in the first line in \eqref{normdalpha+} simplifies to
\begin{align*}
 & z_1^2 \left(\Om^2 - z_1^2 - z_2^2 - \rho^2 + \cc - 2\cc\right)^2 \\
 & = z_1^2 \left((\Om^2 - z_1^2 - z_2^2 - \rho^2 + \cc)^2 +4 \cc^2 - 4\cc (\Om^2 - z_1^2 - z_2^2 - \rho^2 + \cc) \right) \\
 &= 4z_1^2 \left( - \sum_{i=1}^p\mu_i^2 \rho_i^2 - \frac{\tau^2}{4} - \tau z_1 z_2\right) - 4\cc \Om^2 z_1^2  + 4\cc z_1^2 z_2^2 + 4\cc z_1^2 \rho^2.
\end{align*}
 and, similarly, the second one becomes
\begin{align*}
 & z_2^2 \left(\Om^2 - z_1^2 - z_2^2 - \rho^2 + \cc\right)^2 = 4z_2^2 \left(-\cc z_1^2 - \sum_{i=1}^p\mu_i^2 \rho_i^2 - \frac{\tau^2}{4} - \tau z_1 z_2\right).
\end{align*}
The same identity applied to the second line in \eqref{normdalpha+} yields
\begin{align}
  & \,\rho^2  (\Om^2 - z_1^2 - z_2^2 - \rho^2 + \cc)^2 + 4 \sum_{i=1}^p\left(\mu_i^4 \rho_i^2 -\mu_i^2 \rho_i^2(\Om^2 - z_1^2 - z_2^2 - \rho^2 + \cc)\right) \\
%
  =  & \, 4\rho^2 \left(-\cc z_1^2 -\sum_{i=1}^p \mu_i^2 \rho_i^2  - \frac{\tau^2}{4} - \tau z_1 z_2\right) + 4 \sum_{i=1}^p\left(\mu_i^4 \rho_i^2 -\mu_i^2 \rho_i^2(\Om^2 - z_1^2 - z_2^2 - \rho^2 + \cc)\right)\\
   =  & \, 4\rho^2 \left(-\cc z_1^2 - \frac{\tau^2}{4} - \tau z_1 z_2\right) + 4 \sum_{i=1}^p\left(\mu_i^4 \rho_i^2 -\mu_i^2 \rho_i^2(\Om^2 - z_1^2 - z_2^2  + \cc)\right)
\end{align}
And finally, the third line immediately reduces to
\begin{align*}
 & \tau^2 z_1^2 + \tau^2 z_2^2 - 2\tau z_1 z_2 \left(\Om^2 - z_1^2 - z_2^2 - \rho^2 + \cc\right) - 2\tau z_1 z_2 \left(\Om^2 - z_1^2 - z_2^2 - \rho^2 + \cc - 2\cc\right) \\
 & = \,\tau^2 z_1^2 + \tau^2 z_2^2 - 4\tau z_1 z_2 \left(\Om^2 - z_1^2 - z_2^2 - \rho^2 \right).
\end{align*}
Substituting these expressions back into \eqref{normdalpha+} and cancelling terms, we arrive at
\begin{align}
 (\alpha_\pm - \alpha_\mp)^2 |d\alpha_\pm|^2 & = 16 \left(-4 \cc \Om^2 z_1^2-\rho^2 \tau^2-4 \tau \Om^2 z_1 z_2+4\sum_{i=1}^p \big(-\Om^2\mu^2_ i \rho_i^2 + \mu_i^2 \rho_i^2(\mu_i^ 2 - \cc)\big)\right). \label{normdalpha+2}
\end{align}

On the other hand, evaluating the $d \Om$  component gives
\begin{align}\label{eqdOm}
 (\alpha_\pm-\alpha_\mp)^2 4 \Om^2|d \Om|_{g_M}^2&= -64\Om^2 \left(-\cc z_1^2 - \sum_{i=1}^p\mu_i^2 \rho_i^2 - \frac{\tau^2}{4} - \tau z_1 z_2\right),
\end{align}
where we have just applied the identity   $(\alpha_\pm - \alpha_\mp) = 4 \beta_\pm$.

Combining \eqref{normdalpha+2} and \eqref{eqdOm}, the squared norm of the normal vector is
\begin{align*}
{|dF|_{g_M}^2} &= (\alpha_\pm - \alpha_\mp)^2(4 \Om^2 |d\Om|_{g_M}^2 + |d \alpha_\pm|^2) = 64 \left( \frac{\tau^2}{4}(\Om^2 - \rho^2)  +  \sum_{i=1}^p\mu_i^2 \rho_i^2(\mu_i^ 2 - \cc) \right).
\end{align*}
Substituting $\Om^2 = \alpha_\pm$ and inserting the definition of $\alpha_\pm$ yields the final expression
\begin{align*}
{|dF|_{g_M}^2}  = 64 \left( \frac{\tau^2}{4}(z_1^2 +z_2^2 - \aa +2 \beta_\pm)  +  \sum_{i=1}^p\mu_i^2 \rho_i^2(\mu_i^ 2 - \cc) \right).
\end{align*}

\end{document}